\newcommand{\commentout}[1]{}
\DeclareMathOperator*{\argmax}{arg\,max}
\newcommand{\resid}{\text{resid}}
\newcommand{\bc}{\mathbf{c}}
\newcommand{\be}{\mathbf{e}}
\newcommand{\bb}{\mathbf{b}}
\newcommand{\wT}{\widetilde{T}}
\newcommand{\by}{\mathbf{y}}
\newcommand{\bx}{\mathbf{x}}
\newcommand{\bz}{\mathbf{z}}
\newcommand{\bv}{\mathbf{v}}
\newcommand{\bd}{\mathbf{d}}
\newcommand{\bA}{\mathbf{A}}
\newcommand{\bB}{\mathbf{B}}
\newcommand{\bF}{\mathbf{F}}
\newcommand{\spark}{\text{spark}}
\newcommand{\proj}{\text{proj}}
\newcommand{\pull}{Q}
\newcommand{\gsup}{\text{supp}_{\text{G}}}
\newcommand{\supp}{\text{supp}}
\newcommand{\rank}{\text{rank}}
\newtheorem{example}{Example}
\newtheorem{theorem}{Theorem}
\newtheorem{proposition}{Proposition}
\newtheorem{lemma}{Lemma}
\theoremstyle{definition}
\newtheorem{definition}{Definition}[section]
\numberwithin{equation}{section}
\newcommand\keywords{%
\noindent\textbf{Keywords:}\ }
\title{Group Projected Subspace Pursuit for Block Sparse Signal Reconstruction: Convergence Analysis and Applications }
\author{Roy Y. He\thanks{Department of Mathematics, City University of Hong Kong, Kowloon Tong, Hong Kong. Email: royhe2@cityu.edu.hk. The work of Roy Y. He was partially supported by CityU 7200779.}, 
Haixia Liu\thanks{School of Mathematics and Statistics  \& Institute of Interdisciplinary Research for Mathematics and Applied Science \& Hubei Key Laboratory of Engineering Modeling and Scientific Computing, Huazhong University of Science and Technology, Wuhan, Hubei, China. Email: liuhaixia@hust.edu.cn. The work of H.X. Liu was supported in part by NSFC 11901220 and Hubei Key Laboratory of Engineering Modeling and Scientific Computing.}, 
Hao Liu\thanks{Department of Mathematics, Hong Kong Baptist University, Kowloon Tong, Hong Kong. Email: haoliu@hkbu.edu.hk. The work of Hao Liu was partially supported by National Natural Science Foundation of China  12201530, HKRGC ECS 22302123, HKBU 179356.}

}
\begin{document}
\date{}
\maketitle

\begin{abstract}
In this paper, we present a convergence analysis of the Group Projected Subspace Pursuit (GPSP) algorithm proposed by He et al. \cite{he2023group} (Group Projected subspace pursuit for IDENTification of variable coefficient differential equations (GP-IDENT), {\it Journal of Computational Physics}, 494, 112526) and extend its application to general tasks of block sparse signal recovery. We prove that when the sampling matrix satisfies the Block Restricted Isometry Property (BRIP) with a sufficiently small Block Restricted Isometry Constant (BRIC), GPSP exactly recovers the true block sparse signals. When the observations are noisy, this convergence property of GPSP remains valid if the magnitude of the true signal is sufficiently large.   GPSP selects the features by subspace projection criterion (SPC) for candidate inclusion and response magnitude criterion (RMC)  for candidate exclusion. We compare these criteria with counterparts of other state-of-the-art greedy algorithms. Our theoretical analysis and numerical ablation studies reveal that SPC is critical to the superior performances of GPSP, and that  RMC can enhance the robustness of feature identification when observations contain noises. We test and compare GPSP with other methods in diverse settings, including heterogeneous random block matrices, inexact observations, face recognition, and PDE identification. We find that GPSP outperforms the other algorithms in most cases for various levels of block sparsity and block sizes, justifying its effectiveness for general applications.

\vspace{0.5cm}

\keywords{subspace pursuit, block sparsity, feature selection} 
\end{abstract}
\section{Introduction}
Recent advances in technology facilitate collecting and storing large amounts of high-dimensional data at low cost. Typically, each dimension represents a feature such as temperature, rating, or pricing, and many applications seek models to predict the quantity of interest based on simple combinations of these features.  To achieve high accuracy while avoiding issues of over-fitting as well as redundancy, it is a common practice to focus on only a few significant features. Feature selection is a challenging problem that has been extensively studied both theoretically and numerically.  Classical applications, such as DNA microarray analysis~\cite{kim2007sparse}, image processing~\cite{elad2010role},  portfolio selection~\cite{brodie2009sparse},  validate the long-lasting importance of this problem, and more recent advances in omics data analysis~\cite{vinga2021structured},  PDE identification \cite{schaeffer2017learning,kang2021ident,he2022robust,he2023much,he2023group,tang2023fourier,tang2023weakident} and sparsity-induced deep learning~\cite{molchanov2016pruning,molchanov2017variational,gale2019state,hoefler2021sparsity,menghani2023efficient}   prove that finding critical features will become more important in the era of big data. Providing a reliable and efficient algorithm for feature selection is highly desirable in many fields.

Mathematically, the task of feature selection can be formulated as an optimization problem~\cite{candes2006stable,candes2007sparsity}
\begin{align}
	\mathcal{P}_{\ell_0}:\quad \min \|\bc\|_0 \quad \mbox{ s.t. } \quad \by=\bA\bc.
	\label{eq.l0}
\end{align}
Here $\bA\in\mathbb{R}^{N\times M}$ with positive integers $N$ and $M$ is called a sampling matrix or a feature matrix,  the $\ell_0$-pseudonorm $\|\bc\|_0$ measures the sparsity of a signal $\bc\in\mathbb{R}^M$ by counting the number of non-zero entries, and $\by\in\mathbb{R}^N$ is called the observation vector. By solving~\eqref{eq.l0}, we seek a linear representation of $\by$ using the fewest columns, or features in $\bA$.  When non-zero elements appear in blocks, the sparsity describes the number of non-zero blocks, which is referred as block sparsity. The problem can be formulated as follows
\begin{align}
	\mathcal{P}_{\ell_p/\ell_0}:\quad \min \|\bc\|_{p,0} \quad \mbox{ s.t. } \quad \by=\bA\bc,
	\label{eq.l10}
\end{align}
where $p\geq 1$ is usually a positive integer, $\bA=[\bF_1,\dots,\bF_G]\in\mathbb{R}^{N\times GM}$ with $\bF_g\in\mathbb{R}^{N\times M}$ for $g=1,2,\dots,G$, $\bc=[\bc_1^\top,\dots,\bc_G^\top]^\top\in\mathbb{R}^{GM}$,   $\|\bc\|_{p,0}=\sum_{g=1}^G \mathcal{I}(\|\bc_g\|_p\neq0)$, and $\mathcal{I}$ is the indicator function that outputs $1$ when $\|\bc_g\|_p\neq0$ and $0$ otherwise.  Existence and uniqueness of solution for~\eqref{eq.l0} are established in~\cite{donoho2001uncertainty}, and the counterparts for~\eqref{eq.l10} can be found in~\cite{eldar2009robust,elhamifar2012block,donoho2003optimally}.

Problem~(\ref{eq.l0}) and~\eqref{eq.l10} are closely related and difficult to solve. A signal with $k$ non-zero blocks can be regarded as a signal with at most $k\times M$ non-zero entries, where $M$ denotes the size of each block. Leveraging the block sparsity pattern reduces the complexity. Indeed, the complexity of an exhaustive search for a $k$-sparse signal from solving $\mathcal{P}_{\ell_0}$ is 
$\mathcal{O}(\tbinom{M}{k}k^2N)$, and for a block $k$-sparse signal, this is $\mathcal{O}({\binom{G}{k}}(Mk)^2N)$. 
Moreover, we note that exactly solving~\eqref{eq.l0} and~\eqref{eq.l10} are both NP-hard~\cite{natarajan1995sparse, elhamifar2012block,kamali2013block,eldar2010block}. This means that neither of these problems has polynomial-time algorithms available. One approach to address this challenge is by convex relaxation~\cite{chen1994basis,donoho2006compressed,candes2005decoding}, and the computational complexity can still be high \cite{yang2011alternating}.

Another strategy is to assume that we know an upper-bound for the level of sparsity, e.g., $\|\bc\|_0\leq k$ for some positive integer $k$ much smaller than the dimension of $\bc$. Then we consider a sparsity-constrained least squares problem
\begin{align}
	\quad \min_{\bc} \|\bA\bc-\by\|_2^2 \quad \mbox{ s.t. } \quad \|\bc\|_0\le k.
	\label{eq.scls}
\end{align}
Similarly, for block sparse signal recovery, we consider
\begin{align}
	\quad \min_{\bc} \|\bA\bc-\by\|_2^2 \quad \mbox{ s.t. } \quad \|\bc\|_{p,0}\le k.
	\label{eq.blockscls}
\end{align}
In particular, a vector $\bc$ is called $k$-sparse if  $\|\bc\|_0\leq k$, and block $k$-sparse if $\|\bc\|_{p,0}\leq k$. If $k=\|\bc^*\|_0$, where $\bc^*$ is a solution to problem~\eqref{eq.l0}, then $\bc^*$ is also a solution to~\eqref{eq.scls}. Moreover, if $\bc^*$ is unique, i.e., the sparsest solution~\cite{donoho2003optimally}, then by solving~\eqref{eq.scls} with the correct $k$, we recover $\bc^*$. The relation between~\eqref{eq.l10} and~\eqref{eq.blockscls} is similar. Motivated by this observation, many algorithms are developed for finding $k$-sparse or block $k$-sparse signals that best fit the data, such as alternating direction methods~\cite{deng2013group} and proximal gradient methods~\cite{chen2012smoothing}. Among them, greedy algorithms are highly efficient and easy to implement.

Iterative greedy algorithms are popular techniques for addressing~\eqref{eq.scls} and~\eqref{eq.blockscls}. Some well-known methods for~\eqref{eq.scls} include orthogonal matching pursuit (OMP) \cite{tropp2007signal}, subspace pursuit (SP) \cite{dai2009subspace} and compressive sampling matching pursuit (CoSaMP) \cite{needell2009cosamp}. These algorithms find the support, i.e., indices with non-zero entries, of the unknown signal progressively by iterating different forms of candidate inclusion and exclusion.  With high efficiency, their effectiveness for finding the true sparse signals is also justified by different sufficient conditions involving the restricted isometry property (RIP)~\cite{candes2008restricted}.  For a sample matrix $\bA\in\mathbb{R}^{N\times M}$ and a positive integer $1\leq k\leq M$, RIP  specifies the bounds of eigenvalues of submatrices of $\bA$ with $k$ columns. This is quantified by a scalar restricted isometry constant (RIC),  denoted by $\delta_k\in(0,1)$; smaller $\delta_k$ implies that submatrices of $\bA$ with $k$ columns behave more similarly as an orthonormal transformation. 
Based on RIP, Liu and Temlyakov~\cite{liu2011orthogonal} proved that OMP recovers a $k$-sparse signal after $k$-iterations if the sampling matrix satisfies  $\delta_{k+1}<1/((1+\sqrt{2})\sqrt{k})$. This result was then sharpened by Wang and Shim~\cite{wang2012recovery} and Wen et al.~\cite{wen2016sharp}. For SP, Dai and Milenkovic~\cite{dai2009subspace} showed the convergence under the condition $\delta_{3k}<0.165$ which was then improved to $\delta_{3k}<0.4859$ by Song et al.~\cite{song2014improved}. Convergence analysis for CoSaMP was carried out in a series of works~\cite{needell2009cosamp,foucart2012sparse,song2013improved}. 

These aforementioned algorithms have been extended for recovering block sparse signals as well, including block orthogonal matching pursuit (BOMP) \cite{li2018new},  block subspace pursuit (BSP) \cite{kamali2013block} and block compressive sampling matching pursuit (BCoSaMP) \cite{zhang2019recovery}, which are the block versions of OMP, SP and CoSaMP, respectively. The concept of RIP was extended to block RIP (BRIP)~\cite{candes2005decoding} for developing sufficient conditions of convergence, and they are closely related to the magnitude of the counterpart of RIC, called block Restricted Isometry Constant (BRIC).
Indexed by two positive integers $M$ and $k$, a BRIC $\delta_{M,k}$  corresponding to a block matrix $\bA$ with block size $M$ indicates the difference between any $k$ sub-blocks of $\bA$ and an orthonormal transformation.  Wang et al.~\cite{wang2011analysis} showed that BOMP can exactly recover the block $k$-sparse signal in $k$ steps if $\delta_{M,k+1}<1/(2\sqrt{k}+1)$, which was then improved by Wen et al. to $\delta_{M,k+1}<1/\sqrt{k+1}$ in \cite{wen2019sharp}. Kamali et al.~\cite{kamali2013block} proved that BSP converges to the true block $k$-sparse signal if $\delta_{M,3k}<0.1672$. As for BCoSaMP, Zhang et al.~\cite{zhang2019recovery} showed the convergence when $\delta_{M,4k}<0.5$ is satisfied for the sampling matrix.  For sufficient conditions bounding RIC and BRIC from above, it is often desirable when the respective upper bounds are closer to $1$; it allows exact signal recovery for more general sample matrices.  However, it is in general difficult to evaluate the effectiveness of those sufficient conditions involving constants with different indices.

The Group Projected Subspace Pursuit (GPSP) proposed in~\cite{he2023group} was inspired by noticing limitations of state-of-the-art block-sparse signal reconstruction algorithms when applied to the identification of differential equations (DEs) with varying coefficients \cite{kang2021ident,schaeffer2017learning,he2022robust,tang2023weakident,tang2023fourier}. The objective is to identify active features governing the dynamics of observed data. Each feature corresponds to a possible differential operator depending on space and time, and the total number of true features is assumed to be small.  We note that GPSP, as an extension of SP, is applicable for general block sparse regression problems. Similar to the aforementioned greedy algorithms, GPSP selects candidates by considering highly correlated features.  Instead of measuring the correlation between one vector and a block of vectors by accumulating inner products, GPSP distinguishes itself by considering the projection to the column space spanned by the block of vectors. Figure~\ref{fig_illustrate_inner_proj} illustrates this key difference.  Fu et al.~\cite{fu2014block} proposing BOMPR also noticed the importance of projection in block feature selection. BOMPR follows the same strategy of BOMP where candidates are not filtered once being selected; whereas GPSP incorporates the shrinking stage as BSP but with different criterion. With these algorithmic discrepancies, the convergence results established for BOMPR or BSP cannot be immediately applied, and such analysis was missing from~\cite{he2023group}. In this paper, we investigate both theoretical and numerical aspects of GPSP to validate its effectiveness as a block-sparse signal reconstruction algorithm for general tasks.
\begin{figure}
	\centering
	\begin{tabular}{cc}
		(a)&(b)\\
		\includegraphics[width=0.3\textwidth]{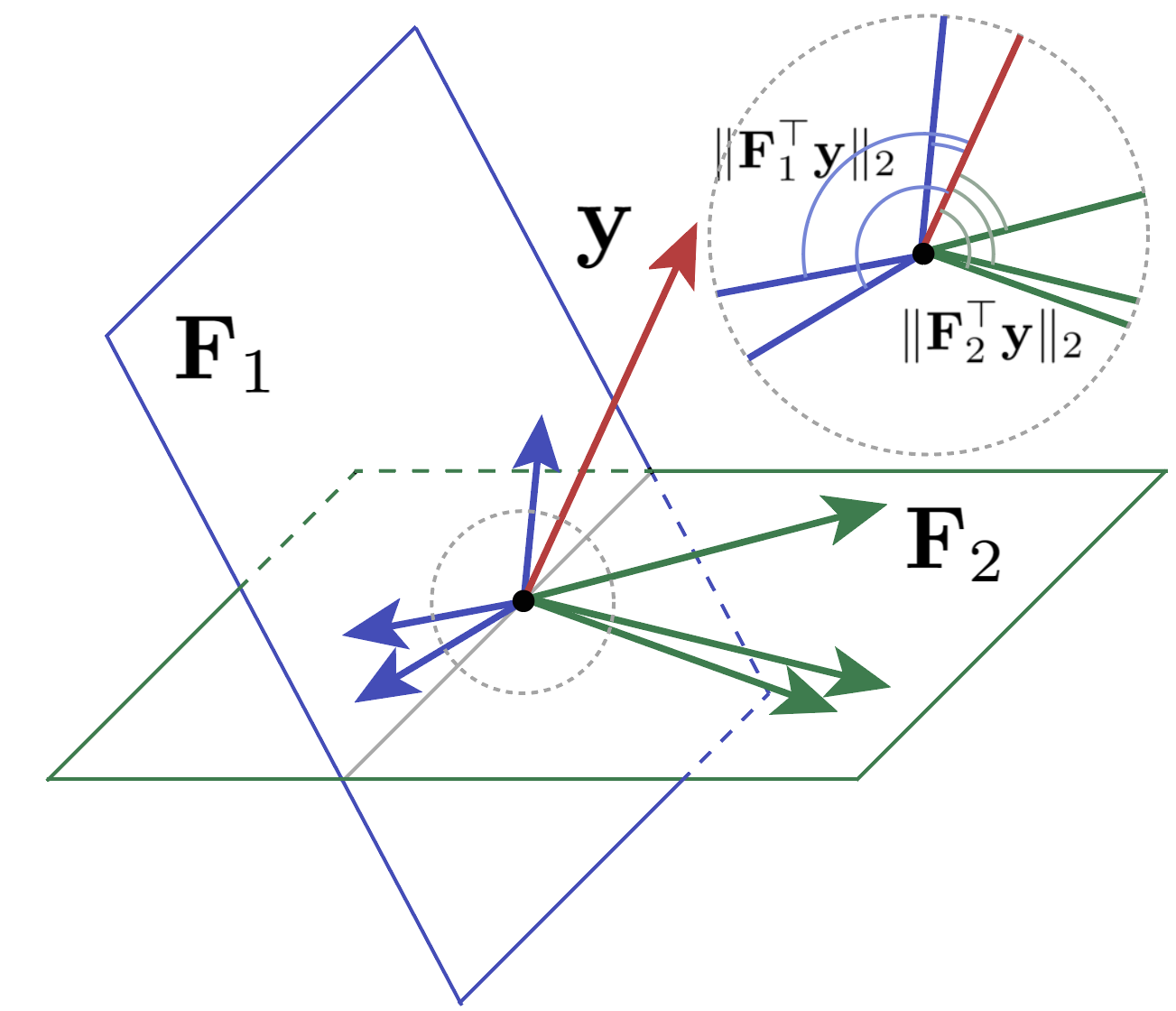}&
		\includegraphics[width=0.3\textwidth]{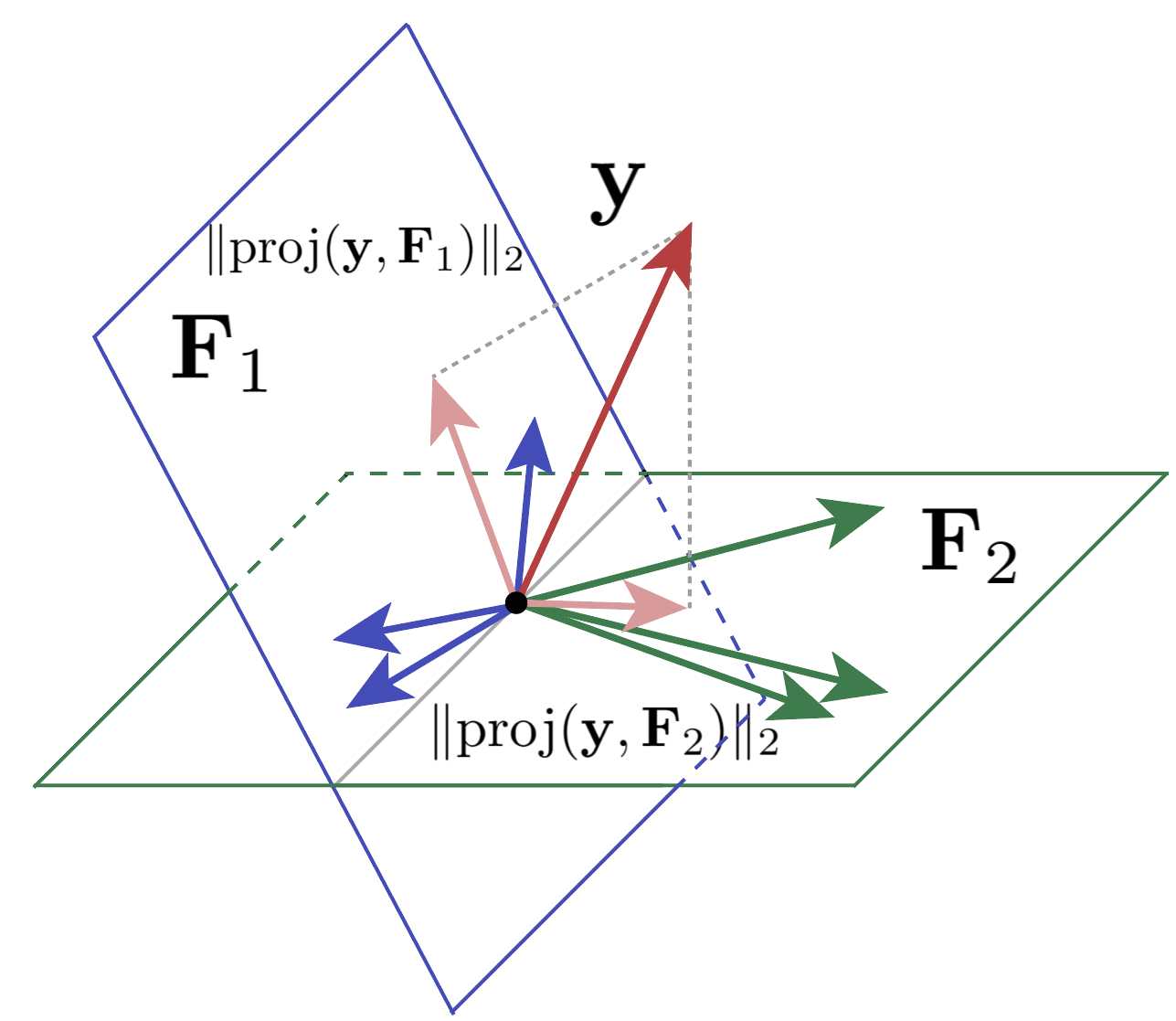}
	\end{tabular}
	\caption{Two ways of measuring correlations between a vector $\by$ and subspaces spanned by $\bF_1$ and $\bF_2$. (a) Accumulated inner product. This criterion is used in BOMP~\cite{li2018new}, BCoSaMP~\cite{zhang2019recovery}, and BSP~\cite{kamali2013block}. (b) Projection to subspaces. This criterion is used in BOMPR~\cite{fu2014block} and GPSP~\cite{he2023group}.}\label{fig_illustrate_inner_proj}
\end{figure}

Theoretically, we establish the convergence and robustness of GPSP. Firstly, to validate GPSP as an effective greedy algorithm for block-sparse signal reconstruction,  we show the exact recovery property of GPSP when the measurements are accurate and the sampling matrix satisfies BRIP with sufficiently small BRIC. See Theorem~\ref{thm_main}.  Secondly, to justify the robustness of GPSP, we derive an error bound on the estimated coefficients for inaccurate measurements and prove that signals with large magnitudes can be recovered even if the data is slightly perturbed. See Theorem~\ref{thm::perturb}.  Due to the different criteria for feature selection and removal, our proofs follow similar strategies as in~\cite{dai2009subspace,kamali2013block} but with new techniques for estimating changes of recovered signals during the iterations. Such analogous conditions developed for other greedy algorithms are sufficient but not necessary for exact recovery, and quantifying the BRIC for general matrices is NP-hard. In practice, it is more illuminating to compare different schemes in their algorithmic frameworks and via numerical experiments.

Algorithmically, we note that GPSP share a similar structure with many state-of-the-art greedy algorithms, including BOMP~\cite{li2018new}, BOMPR~\cite{fu2014block}, BCoSaMP~\cite{zhang2019recovery}, and BSP~\cite{kamali2013block}. They are all iterative schemes, and during each of their respective iteration, they involve strategically expanding the pool of candidate features. For some of these algorithms, the expanding stage is followed by a shrinking stage. The expanding stage aims at including the correct features while the shrinking stage refines the selection by removing less important ones. These algorithms take different criteria for feature inclusion and exclusion. We systematically investigate such discrepancies and classify them according to the ways of quantifying correlations between residuals and matrix blocks. We also provide systematic ablation studies to compare different combinations of criteria.

We  present various numerical experiments to justify the superior performance of GPSP over existing methods for block sparse regression under general settings. To our best knowledge, GPSP was shown to outperform BSP in~\cite{he2023group} for identifying PDEs with varying coefficients, and many of the existing algorithms were compared only in standard settings where each block of the sampling matrix follows an identical Gaussian distribution. In this work, we compare GPSP with state-of-the-art greedy algorithms under diverse conditions. In particular, we consider reconstructions with sampling matrices consisting of heterogeneous Gaussian blocks, where different blocks may have distinct means and variances.   To examine the robustness of these algorithms against data perturbations, we conduct a series of experiments with different noise levels as well as different models of noise. Moreover, we extend the comparisons to the contexts of face recognition and PDE identification. These experiments confirm that GPSP is consistent, robust, and accurate in diverse applications with complicated block features.

To sum up, our contributions in this work include
\begin{itemize}
	\item Convergence analysis of Group Projected Subspace Pursuit (GPSP)~~\cite{he2023group} that showed superior performances in identifying PDEs with varying coefficients. We characterize a sufficient condition for exact recovery of GPSP, and we prove an error bound for reconstruction when the observational data is perturbed. 
	\item Systematic comparison of the algorithmic structures between GPSP and other state-of-the-art greedy algorithms. Our study focuses on comparing different criteria involved in expanding and shrinking stages. We claim that GPSP outperforms the others particularly when the sampling matrix has heterogeneous variances.
	\item Comprehensive comparison between GPSP and other state-of-the-art greedy algorithms for block-sparse signal reconstruction in diverse settings. We compare their performances when feature blocks are heterogeneous and when the noise has different distributions. Moreover, we compare these methods in PDE identification and image processing to further validate the effectiveness of GPSP.  
\end{itemize}

This paper is structured as follows: In Section \ref{sec.problem}, we formally state the block sparse regression problem while fixing notations used in this paper. We then review some known properties about BRIP and BRIC closely related to our analysis. In Section \ref{sec.algorithm}, after recalling the GPSP algorithm, we present our main theoretical results. In particular, in Section~\ref{sec:main1}, we show the exact recovery property of GPSP (Theorem~\ref{thm_main}), and in Section~\ref{sec:main2}, we characterize the reconstruction distortion due to data perturbation (Theorem~\ref{thm::perturb}). We discuss the outlines of proofs for these results in Section~\ref{sec::proof1} and Section~\ref{sec::proof2}. In Section~\ref{sec::comparison}, we systematically compare the algorithmic differences between GPSP and other greedy algorithms by focusing on the criteria for feature selection and removal. The effectiveness of GPSP is demonstrated through systematic experiments in Section \ref{sec::numerical}. We conclude this paper in Section \ref{sec.conclusion} and collect the proof details in the Appendix.

\section{Preliminary}
\label{sec.problem}
In this section, we introduce notations and related concepts used in this paper. We also  review some existing results for characterizing sufficient conditions for the exact recovery of block-sparse signal reconstruction.
\subsection{Notations and definitions}
Throughout this paper, we use bold small letters to denote vectors, e.g. $\bv$ and $\bc$, and bold capital letters to denote matrices, such as $\bA$ and $\bB$. For a scalar $a\in\mathbb{R}$, we use $\lfloor a \rfloor$ to denote the largest integer that is no greater than $a$. For a vector $\bv=[v_1,\dots,v_N]^{\top}\in\mathbb{R}^N$, we denote the set of indices of its non-zero elements as $\text{supp}\,\bv=\{i~|~v_i\neq 0, i=1,2,\dots,N\}$.
\begin{definition}[$k$-sparse vector] For any integer $N\geq 2$, a vector $\bv\in\mathbb{R}^N$ is $k$-sparse for $k\leq N$, if the number of non-zero elements in $\bv$, denoted by $|\text{supp}\,\bv|$, does not exceed $k$.
\end{definition}
By this definition, a $k$-sparse vector can have fewer non-zero elements than $k$. 
In this paper, we consider block matrices with even block sizes.  Let $G,M$ be positive integers. A $G$-block matrix is defined by $\bA=[\bF_1,\dots,\bF_G]\in\mathbb{R}^{N\times GM}$, where $\bF_g\in \mathbb{R}^{N\times M}$ for $g=1,2,\dots,G$, and $M$ denoting the number of columns in each block. For a nonempty subset of indices $T\subset\{1,2,\dots, G\}$ with size $|T|$, we denote $\bA_T\in\mathbb{R}^{N\times |T|M}$  as the submatrix of $\bA$ obtained by concatenating matrices from $\{\bF_g\in\mathbb{R}^{N\times M}~|~g\in T\}$ horizontally in an increasing order of the block indices. For any  $\bv\in\mathbb{R}^{GM}$, the group structure of $\bA$ induces a natural partition such that $\bv=[\bv_1^{\top},\dots, \bv_G^{\top}]^{\top}$ with $\bv_g\in\mathbb{R}^{M}$ for $g=1,2,\dots,G$. Similarly, we use $\bv_T\in\mathbb{R}^{|T|M}$ to denote the subvector of $\bv$ formed by stacking vectors in $\{\bv_g
\in\mathbb{R}^M~|~g\in T\}$ together in an increasing order of the block indices. We  use $\gsup \bv=\{g~|~\|\bv_g\|_1 \neq 0, g=1,2,\dots,G\}$ to denote the set of indices of blocks with non-zero entries.
\begin{definition}[Block $k$-sparse vector] For any positive integers $G,M$ such that $GM\geq 2$, a vector $\bv=[\bv^\top_1,\bv^\top_2,\dots, \bv_G^\top]^\top\in\mathbb{R}^{GM}$  with the given block pattern is block $k$-sparse for $k\leq G$, if the number of blocks with non-zero entries, denoted by $|\gsup\,\bv|$, does not exceed $k$.
\end{definition}
Analogously, a block $k$-sparse vector can have fewer non-zero blocks than $k$. For any vector $\bv$ and matrix $\bA$, we introduce notations
\begin{align}
	\text{proj}(\bv,\bA):=\bA\bA^\dagger\mathbf{v}=\bA(\bA^\top\bA)^{-1}\bA^\top\mathbf{v}\;,~\resid(\bv, \bA):= \bv-\text{proj}(\bv,\bA),
\end{align}
for frequently used operators, where $^\dagger$ is the Moore-Penrose pseudo-inverse.

\subsection{Review of block restricted isometry property}
In this section, we review some existing concepts used in our theoretical analysis.  There are many notions developed for characterizing sufficient conditions for the exact recovery of sparse signals, such as Block Coherence (BC)~\cite{eldar2010block}, Mutual Subspace Incoherence (MSI)~\cite{ganesh2009separation}, and Cumulative Subspace Coherence (CSC)~\cite{elhamifar2012block}. In this work, we develop our analysis based on the Block Restricted Isometry Property (BRIP)~\cite{eldar2009robust} which generalizes the well-known Restricted Isometry Property (RIP)~\cite{candes2005decoding}.

\begin{definition}[Block Restricted Isometry Property (BRIP)]  A matrix $\bA\in\mathbb{R}^{N\times GM}$ has the BRIP with parameters $(k,\delta_M)$ where $\delta_{M}\in(0,1)$, if for any $T\subseteq\{1,2,\dots, G\}$ with $|T|\leq k$, 
	\begin{align}
		(1-\delta_M)\|\bx_T\|_2^2\leq \|\bA_T\bx_T\|_2^2\leq (1+\delta_M)\|\bx_T\|_2^2
		\label{eq_BRIP_condition}
	\end{align}
	holds for all $\bx_T\in\mathbb{R}^{M|T|}$. The block Restricted Isometry Constant (BRIC) $\delta_{M,k}$ is the infimum of all parameters $\delta_{M}$ for which~\eqref{eq_BRIP_condition} holds.
\end{definition}
 
We review several properties of BRIP and BRIC that are important for subsequent analysis, and their proofs can be found in~\cite{eldar2009robust,eldar2010block,kamali2013block}. Intuitively, submatrices of a block matrix $\bA$ with small BRIC are close to  orthonormal matrices. This is made precise in the following lemma, which directly follows from the definition. 

\begin{lemma}[Equations (69) and (70) in \cite{eldar2009robust}]\label{prop4}
	Let $\bA\in\mathbb{R}^{N\times GM}$ be a block matrix, and $\mathcal{T}_k$ be the collection of index sets with $k$ elements. For any positive integer $k\leq G$, define 
	$$\overline{\sigma}_k = \max_{T\in\mathcal{T}_k}\{\sigma_{\max}(\bA_{T})\},\ \underline{\sigma}_k = \min_{T\in\mathcal{T}_k}\{\sigma_{\min}(\bA_{T})\},$$
	where $\sigma_{\max}(\cdot)$ and $\sigma_{\min}(\cdot)$ denote the maximal and minimal singular values, respectively.  Then
	$$1-\delta_{M,k}\leq \underline{\sigma}^2_k \leq \overline{\sigma}^2_k \leq 1+\delta_{M,k}.$$
	
\end{lemma}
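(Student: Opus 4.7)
The plan is to read off the three inequalities directly from the BRIP definition via the variational characterization of extremal singular values. For any matrix $\bB$ of full column rank, $\sigma_{\min}^2(\bB)$ and $\sigma_{\max}^2(\bB)$ coincide with the minimum and maximum of the Rayleigh quotient $\|\bB\bx\|_2^2/\|\bx\|_2^2$ over nonzero $\bx$, and these extrema are attained at the corresponding right singular vectors. Applied block by block, this reduces the lemma to a careful bookkeeping of which side of~\eqref{eq_BRIP_condition} contributes to which bound.

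First I would verify that the infimum in the definition of $\delta_{M,k}$ is actually attained, so that~\eqref{eq_BRIP_condition} holds verbatim with $\delta_M$ replaced by $\delta_{M,k}$. This is because the set of admissible $\delta_M$ forms an interval of the form $[\delta^{\star},1)$: for each $T$ with $|T|\leq k$, the Rayleigh quotient $\|\bA_T\bx_T\|_2^2/\|\bx_T\|_2^2$ is continuous on the unit sphere and attains its extrema, and then the maxima and minima over the finite collection $\mathcal{T}_k$ preserve this closedness. Hence $\delta_{M,k}$ itself is admissible in~\eqref{eq_BRIP_condition}. Next, for any $T\in\mathcal{T}_k$, I would substitute into the upper inequality of~\eqref{eq_BRIP_condition} the right singular vector of $\bA_T$ associated with $\sigma_{\max}(\bA_T)$ and divide by $\|\bx_T\|_2^2$ to obtain $\sigma_{\max}^2(\bA_T)\leq 1+\delta_{M,k}$; maximizing over $T\in\mathcal{T}_k$ gives $\overline{\sigma}_k^2\leq 1+\delta_{M,k}$. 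Symmetrically, substituting the right singular vector associated with $\sigma_{\min}(\bA_T)$ into the lower inequality yields $\sigma_{\min}^2(\bA_T)\geq 1-\delta_{M,k}$, and minimizing over $T$ gives $\underline{\sigma}_k^2\geq 1-\delta_{M,k}$. The middle inequality $\underline{\sigma}_k^2\leq\overline{\sigma}_k^2$ is then immediate from $\sigma_{\min}(\bA_T)\leq\sigma_{\max}(\bA_T)$ for every individual $T$, which already implies $\min_T\sigma_{\min}(\bA_T)\leq\max_T\sigma_{\max}(\bA_T)$.

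I do not expect any real obstacle; the argument is essentially unpacking definitions. The only point requiring a touch of care is the attainment of the infimum defining $\delta_{M,k}$, handled in the second step above; once that is in place, the three claimed inequalities drop out by direct substitution of appropriate right singular vectors of the submatrices $\bA_T$.
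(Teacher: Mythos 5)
Your proof is correct and takes essentially the same route as the paper: the paper offers no separate argument, stating only that the lemma ``directly follows from the definition'' (with details deferred to \cite{eldar2009robust}), and your Rayleigh-quotient unpacking---substituting extremal right singular vectors of each $\bA_T$ into~\eqref{eq_BRIP_condition} and then optimizing over the finite collection $\mathcal{T}_k$---is exactly that direct argument. Your care about attainment of the infimum defining $\delta_{M,k}$ is a legitimate fine point (and could alternatively be bypassed by noting $\delta_{M,k}+\varepsilon$ is admissible for every $\varepsilon>0$ and letting $\varepsilon\to 0$), but it introduces no gap.
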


There is an ordering property associated with  BRIC that allows simple comparisons. This property becomes especially useful when simplifying expressions involving BRIC with mixed indices.
\begin{lemma}[Lemma 1 in~\cite{kamali2013block}]\label{prop5}
	For a fixed $M$ and any $1\leq k_1\leq k_2$ we have
	$$\delta_{M,k_1}\leq \delta_{M,k_2}.$$
\end{lemma}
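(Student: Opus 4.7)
The plan is to argue directly from the definition of the block restricted isometry constant. Recall that $\delta_{M,k}$ is defined as the infimum of those $\delta_M \in (0,1)$ for which the two-sided inequality
\[
(1-\delta_M)\|\bx_T\|_2^2 \leq \|\bA_T \bx_T\|_2^2 \leq (1+\delta_M)\|\bx_T\|_2^2
\]
holds for every index set $T \subseteq \{1,2,\dots,G\}$ with $|T| \leq k$ and every $\bx_T \in \mathbb{R}^{M|T|}$. So the BRIC is characterized by a family of constraints indexed by subsets $T$, and the size of this family grows monotonically with $k$.

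First, I would observe the set-theoretic inclusion $\{T \subseteq \{1,\dots,G\} : |T| \leq k_1\} \subseteq \{T \subseteq \{1,\dots,G\} : |T| \leq k_2\}$ whenever $k_1 \leq k_2$. Then I would argue that any $\delta_M$ that is admissible for the level-$k_2$ BRIP (i.e., makes the two-sided bound hold for every $T$ of size at most $k_2$) is automatically admissible for the level-$k_1$ BRIP (since it continues to validate the bound on the smaller collection of subsets). Formally, if $\delta_M$ is feasible for $k_2$, then for any $T$ with $|T| \leq k_1 \leq k_2$ and any $\bx_T$, the two-sided bound holds, so $\delta_M$ is also feasible for $k_1$.

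Taking infima over the respective feasible sets of $\delta_M$, and using that the infimum of a superset is no larger than that of its subset, yields $\delta_{M,k_1} \leq \delta_{M,k_2}$. One minor technicality to address is that $\delta_{M,k}$ is an infimum rather than a minimum, so I would phrase the argument in terms of feasibility of candidate values $\delta_M$ and pass to the infimum at the end, rather than assuming the infimum is attained.

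There is no real obstacle here: the lemma is a straightforward monotonicity statement that follows from the nested structure of the feasibility constraints defining BRIC. The only thing to be careful about is making the set inclusion explicit and not conflating the roles of $k$ (which parameterizes the constraint family) and $\delta_M$ (which is the quantity being optimized).
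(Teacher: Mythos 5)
Your proof is correct and is essentially the same argument as in the cited source (the paper itself does not reprove this lemma; it only cites Lemma 1 of Kamali et al.): the constraint family defining BRIP at level $k_2$ contains the one at level $k_1$, so the feasible sets of admissible $\delta_M$ are nested, and comparing infima over nested sets gives $\delta_{M,k_1}\leq \delta_{M,k_2}$. Your attention to the fact that the BRIC is an infimum rather than an attained minimum is the right way to phrase the final step.
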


Using Lemma~\ref{prop4}, BRIC can  be used to bound the norms of various important linear operators. In particular, for a block matrix $\bA$, the BRIC is associated with the correlation between disjoint blocks measured by inner products.
\begin{lemma}[Lemma 2 in~\cite{kamali2013block}]\label{lemma_inner}
	Let $I,J\subseteq\{1,2,\dots,G\}$ with $I\cap J=\varnothing$. Suppose $\bA$ satisfies BRIP with constant $\delta_{M,|I|+|J|}\leq 1$, then for any  $\bc\in\mathbb{R}^{M|J|}$, we have 
	\begin{align}
		\|\bA^\top_I\bA_J\bc\|_2\leq \delta_{M,|I|+|J|}\|\bc\|_2.\label{eq_lemma_inner}
	\end{align}
\end{lemma}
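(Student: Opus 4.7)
The plan is to reduce the cross-term bound to a standard polarization argument that exploits the BRIP inequalities applied to the concatenated index set $I\cup J$. Since $I$ and $J$ are disjoint, $|I\cup J|=|I|+|J|$, so any vector of the form $\bA_I\bu + \bA_J\bv$ with $\bu\in\mathbb{R}^{M|I|}$ and $\bv\in\mathbb{R}^{M|J|}$ can be written as $\bA_{I\cup J}\bw$ where $\bw$ stacks $\bu$ and $\bv$ into the appropriate block positions. Thus both BRIP inequalities of Definition~$\ref{eq_BRIP_condition}$ are available with parameter $\delta_{M,|I|+|J|}$.

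The first key step is to establish the off-diagonal estimate for \emph{unit} vectors: if $\|\bu\|_2=\|\bv\|_2=1$, then $|\langle \bA_I\bu,\bA_J\bv\rangle|\le \delta_{M,|I|+|J|}$. I would obtain this via the polarization identity
\begin{align*}
4\langle \bA_I\bu,\bA_J\bv\rangle \;=\; \|\bA_I\bu + \bA_J\bv\|_2^2 \;-\; \|\bA_I\bu - \bA_J\bv\|_2^2,
\end{align*}
then apply the upper BRIP bound to the first term and the lower BRIP bound to the second term. Because $\bu$ and $\bv$ live on disjoint block supports, $\|\bu\pm\bv\|_2^2 = \|\bu\|_2^2+\|\bv\|_2^2 = 2$ under the natural concatenation, which makes the two contributions cancel down to $4\delta_{M,|I|+|J|}$.

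The second step is homogenization: for arbitrary nonzero $\bu,\bc$, dividing by $\|\bu\|_2\|\bc\|_2$ and applying the unit-vector bound gives $|\langle \bA_I\bu,\bA_J\bc\rangle|\le \delta_{M,|I|+|J|}\|\bu\|_2\|\bc\|_2$. Finally, to convert this dual bound into a bound on $\|\bA_I^\top\bA_J\bc\|_2$, I would set $\bu := \bA_I^\top\bA_J\bc/\|\bA_I^\top\bA_J\bc\|_2$ (assuming the denominator is nonzero; otherwise the conclusion is immediate). Then $\|\bA_I^\top\bA_J\bc\|_2 = \langle \bu,\bA_I^\top\bA_J\bc\rangle = \langle \bA_I\bu,\bA_J\bc\rangle$, and the previous step yields the desired inequality since $\|\bu\|_2 = 1$.

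The proof is essentially mechanical once the polarization idea is in place, so there is no real obstacle. The only subtlety worth being careful about is bookkeeping with the block structure: I must verify that a vector with nonzero entries only on blocks in $I$ (respectively $J$) and a vector concatenated across both truly correspond to $\bA_{I\cup J}\bw$ for a suitable $\bw\in\mathbb{R}^{M(|I|+|J|)}$, and that the norm identity $\|\bw\|_2^2 = \|\bu\|_2^2+\|\bv\|_2^2$ holds under the ordering convention fixed in Section~\ref{sec.problem}. This is straightforward from the definitions of $\bA_T$ and $\bv_T$ given there.
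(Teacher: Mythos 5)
Your proof is correct and complete. Note, however, that the paper itself does not prove this lemma --- it quotes it as Lemma 2 of \cite{kamali2013block} --- and the proof there (following the analogous RIP lemma of Dai and Milenkovic \cite{dai2009subspace} for subspace pursuit) is spectral rather than polarization-based: by Lemma~\ref{prop4} the eigenvalues of $\bA_{I\cup J}^\top\bA_{I\cup J}$ lie in $[1-\delta_{M,|I|+|J|},\,1+\delta_{M,|I|+|J|}]$, so the symmetric matrix $\bA_{I\cup J}^\top\bA_{I\cup J}-\mathrm{Id}$ has spectral norm at most $\delta_{M,|I|+|J|}$; since $I\cap J=\varnothing$, the block $\bA_I^\top\bA_J$ is an off-diagonal submatrix of that matrix, and the spectral norm of a submatrix never exceeds that of the full matrix, which gives \eqref{eq_lemma_inner} at once. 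Your polarization route buys self-containedness --- it uses nothing beyond the defining inequalities \eqref{eq_BRIP_condition} --- at the cost of a few more steps, whereas the spectral route is shorter but requires the submatrix-norm fact. Two minor points in your write-up: the polarization identity as stated yields only the one-sided bound $\langle\bA_I\bu,\bA_J\bv\rangle\le\delta_{M,|I|+|J|}$ for unit vectors, so you should remark that replacing $\bv$ by $-\bv$ upgrades this to the absolute-value bound you claim; and since the BRIC is defined as an infimum, both your proof and the spectral one implicitly use that the non-strict inequalities in \eqref{eq_BRIP_condition} remain valid with $\delta_M=\delta_{M,|I|+|J|}$ itself, which follows by letting $\delta_M$ decrease to the infimum. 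Neither issue is a real gap.
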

From Lemma~\ref{lemma_inner}, we can also deduce the following bounds for the subspace projection and the norm of the corresponding residual.
\begin{lemma}[Lemma 3 in~\cite{kamali2013block}]\label{lemma_proj}
	Let $I,J\subseteq\{1,2,\dots,G\}$ be two disjoint sets and $\bA$ satisfy BRIP with $\delta_{M,|I|+|J|}\leq 1$. Then for any $\by\in\text{span}(\bA_I)$, we have
	\begin{align}
		\|\text{proj}(\by,\bA_J)\|_2\leq \frac{\delta_{M,|I|+|J|}}{1-\delta_{M,\max\{|I|,|J|\}}}\|\by\|_2\label{eq_lemma_proj}
	\end{align}
	and 
	\begin{align}
		\left(1-\frac{\delta_{M,|I|+|J|}}{1-\delta_{M,\max\{|I|,|J|\}}}\right)\|\by\|_2\leq \|\text{resid}(\by,\bA_J)\|_2.
		\label{eq_lemma_resid}
	\end{align}
\end{lemma}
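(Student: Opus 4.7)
The plan is to exploit the assumption $\by \in \mathrm{span}(\bA_I)$ by writing $\by = \bA_I\bx$ for some coefficient vector $\bx$, and to express the projection as $\bp := \mathrm{proj}(\by,\bA_J) = \bA_J\bz$ with $\bz = (\bA_J^\top\bA_J)^{-1}\bA_J^\top\by$. The defining orthogonality property of the projection, $\bA_J^\top(\by - \bp) = \mathbf{0}$, gives
\begin{align*}
    \|\bp\|_2^2 \;=\; \bp^\top \by \;=\; \bz^\top \bA_J^\top \bA_I \bx,
\end{align*}
which converts the task of bounding the projection norm into a question about the cross-block Gram matrix $\bA_J^\top\bA_I$.

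Next, I would apply Cauchy--Schwarz to factor this quadratic form as $|\bz^\top\bA_J^\top\bA_I\bx| \leq \|\bz\|_2 \cdot \|\bA_J^\top\bA_I\bx\|_2$. Since $I\cap J = \varnothing$, Lemma~\ref{lemma_inner} bounds the second factor by $\delta_{M,|I|+|J|}\|\bx\|_2$. So far I have
\begin{align*}
    \|\bp\|_2^2 \;\leq\; \delta_{M,|I|+|J|} \, \|\bz\|_2 \, \|\bx\|_2.
\end{align*}
To translate $\|\bz\|_2$ and $\|\bx\|_2$ into $\|\bp\|_2$ and $\|\by\|_2$ respectively, I would invoke Lemma~\ref{prop4} twice: the BRIP lower bound applied to $\bA_J\bz$ gives $\|\bz\|_2 \leq \|\bp\|_2/\sqrt{1-\delta_{M,|J|}}$, and the same applied to $\bA_I\bx$ gives $\|\bx\|_2 \leq \|\by\|_2/\sqrt{1-\delta_{M,|I|}}$. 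Substituting and cancelling one factor of $\|\bp\|_2$ yields
\begin{align*}
    \|\bp\|_2 \;\leq\; \frac{\delta_{M,|I|+|J|}}{\sqrt{(1-\delta_{M,|I|})(1-\delta_{M,|J|})}}\,\|\by\|_2.
\end{align*}

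The step I expect to require the most care is reconciling this denominator with the cleaner form $1-\delta_{M,\max\{|I|,|J|\}}$ stated in the lemma. The key auxiliary fact is the elementary inequality $(1-a)(1-b)\geq (1-\max\{a,b\})^2$ for $a,b\in[0,1]$, which I would verify by expanding and factoring (the difference equals $(|a-b|)(1-\max\{a,b\})\geq 0$). Combined with Lemma~\ref{prop5}, which provides $\delta_{M,|I|}, \delta_{M,|J|} \leq \delta_{M,\max\{|I|,|J|\}}$, this gives $\sqrt{(1-\delta_{M,|I|})(1-\delta_{M,|J|})} \geq 1-\delta_{M,\max\{|I|,|J|\}}$, completing the bound~\eqref{eq_lemma_proj}.

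Finally, the residual bound~\eqref{eq_lemma_resid} follows at once from the decomposition $\by = \mathrm{proj}(\by,\bA_J) + \mathrm{resid}(\by,\bA_J)$ together with the reverse triangle inequality, namely $\|\mathrm{resid}(\by,\bA_J)\|_2 \geq \|\by\|_2 - \|\mathrm{proj}(\by,\bA_J)\|_2$, into which I substitute the bound just established. No further BRIP invocation is needed for this part.
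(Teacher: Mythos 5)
Your proof is correct. Note that the paper does not prove this lemma itself: it is quoted verbatim as Lemma 3 of \cite{kamali2013block}, with the proof deferred to that reference. Your argument is precisely the standard one used there (and in Dai--Milenkovic's original subspace pursuit paper for the non-block case): write $\by=\bA_I\bx$, use the orthogonality $\bA_J^\top(\by-\bp)=\mathbf{0}$ to get $\|\bp\|_2^2=\bz^\top\bA_J^\top\bA_I\bx$, bound the cross-block Gram term via Cauchy--Schwarz and Lemma~\ref{lemma_inner}, convert $\|\bz\|_2,\|\bx\|_2$ back to $\|\bp\|_2,\|\by\|_2$ with the BRIP lower bound, and finish with the monotonicity of Lemma~\ref{prop5}; the residual bound~\eqref{eq_lemma_resid} is then the reverse triangle inequality. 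Your intermediate constant $\delta_{M,|I|+|J|}/\sqrt{(1-\delta_{M,|I|})(1-\delta_{M,|J|})}$ is in fact slightly sharper than the stated one, and relaxing it to~\eqref{eq_lemma_proj} via $\delta_{M,|I|},\delta_{M,|J|}\leq\delta_{M,\max\{|I|,|J|\}}$ is exactly right.
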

In the following sections, we  employ BRIP to prove a sufficient condition for the convergence of GPSP. Specifically, we  show that if a block matrix has a sufficiently small BRIC, then GPSP is guaranteed to exactly recover the underlying true block sparse signal. If the data is contaminated by noise, the recovery distortion is bounded by the perturbation with a scaling factor involving  BRIC.
\section{GPSP and Main Results}
\label{sec.algorithm}
The Group Projected Subspace Pursuit (GPSP) algorithm was first proposed in \cite{he2023group} and used in \cite{he2023much} for identifying PDEs with varying coefficients.
As its algorithmic design is independent from the particular form associated with the PDE identification, GPSP can also be used for general block-sparse signal reconstruction. Algorithm~\ref{alg_GPSP} presents the details for GPSP\footnote{In~\cite{he2023group}, the metric for initialization and expansion was defined by $|\proj(\bv,\bF_g)|/(\|\bv\|_2\|\proj(\bv,\bF_g)\|_2)$ for $\bv=\by$ or $\by_r^{l-1}$, which equals $\|\proj(\bv,\bF_g)\|_2/\|\bv\|_2$. In this paper, we present this equivalent form for convenience.}. It shares similar structures with state-of-the-art greedy algorithms, such as BSP~\cite{kamali2013block}, and we compare  them in Section~\ref{sec::comparison}. 

In this section, our focus is on proving the convergence and robustness of GPSP, thus theoretically justifying its validity in recovering block-sparse signals for general purposes. In Section~\ref{sec:main1}, we present our main result (Theorem~\ref{thm_main}) stating a sufficient condition for the exact recovery property of GPSP. In Section~\ref{sec:main2}, we analyze the behaviors of GPSP when the observation vector is perturbed by noise. We outline the major steps for proving Theorem~\ref{thm_main} and Theorem~\ref{thm::perturb} in Section~\ref{sec::proof1} and Section~\ref{sec::proof2}, respectively,  and delegate the details in the Appendix. We note that the existing proof for BSP in \cite{kamali2013block} cannot be trivially applied to analyze GPSP as different criteria are used for selecting and removing features; new techniques need to be developed.
In the following, we denote the set of group indices selected by GPSP as $\widehat{T}$. The corresponding reconstruction $\widehat{\bc}$ is obtained by setting entries associated with blocks indexed by $\widehat{T}$ via
\begin{align}
	\widehat{\bc}_{\widehat{T}}=\bA_{\widehat{T}}^\dagger\by
\end{align}
and the other elements as zeros. 
\begin{algorithm}[t!]
	\begin{algorithmic}[1]
		\Require{Feature system $(\mathbf{A}$, $\by)$, specified level of group sparsity $k\geq 1$, maximal number of iterations $\text{Iter}_{\max}\geq 1$.
		}
		
		\State Set $l=0$.
		\State Set 
		$T^l = \{k$ largest indices of 
		$\|\proj(\by,\bF_g)\|_2, g=1,2,\dots,G\}$. \Comment{Initialization}
		\State Set $\by_r^l = \resid(\by, \bA_{T^l})$, $\bA_{T^l}$ 
		concatenates $\{\bF_g\}_{g\in T^l}$ horizontally.
		
		\For{$l=1,\dots,\text{Iter}_{\max}$}
		\State$\wT^l = T^{l-1}\cup\{
		k$ largest indices of $\|\proj(\by_r^{l-1},\bF_g)\|_2, g=1,2,\dots,G\}$.\Comment{Expanding}
		\State  Compute $\mathbf{x}^l_p = \bA_{\wT^l}^\dagger\by$.
		\State Set  $T^{l}=\{
		k$ largest indices of 
		$\|\bF_g\bx^l_p[g]\|_2,~g\in \wT^l\}$, where $\bx^l_p[g]$ is the subvector of $\bx^l_p$ corresponding to the $g$-th group.\Comment{Shrinking}
		\State Compute $\by_r^l = \resid(\by, \bA_{T^l})$.
		\If{$\|\by_r^l\|_2 \geq \|\by_r^{l-1}\|_2$}
		\State Set $T^l = T^{l-1}$ and terminate.
		\EndIf
		\EndFor
		
		\Return The optimal group indices $T^l$ and the estimated coefficient $\bA_{T^l}^\dagger\by$.
	\end{algorithmic}
	\caption{Group Projected Subspace Pursuit (GPSP) \cite{he2023group}} 
	\label{alg_GPSP}
\end{algorithm}

\begin{figure}[t!]
	\centering
	\begin{tabular}{c@{\hspace{2pt}}c@{\hspace{2pt}}c}
		(a)&(b)&(c)\\
		\includegraphics[width=0.33\textwidth]{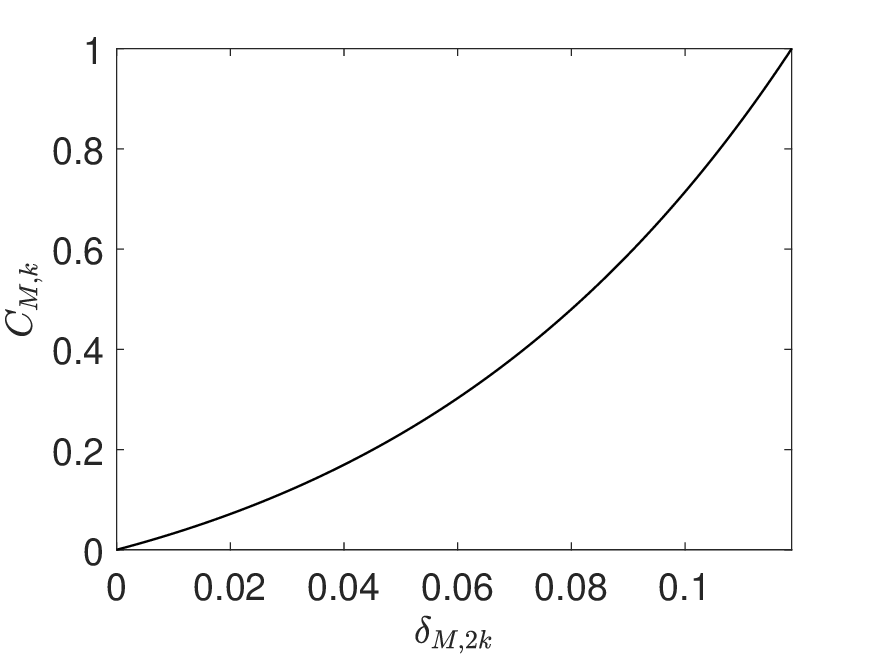}&
		\includegraphics[width=0.33\textwidth]{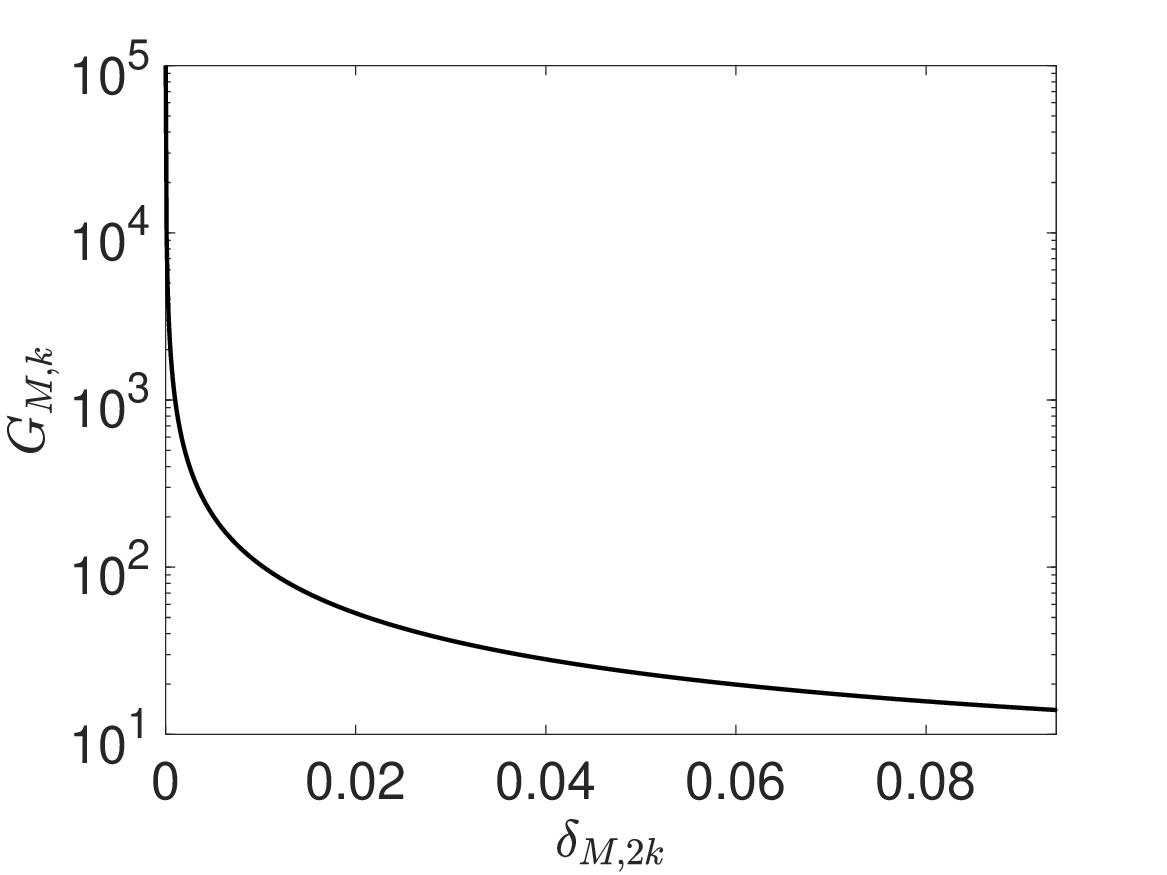}&
		\includegraphics[width=0.33\textwidth]{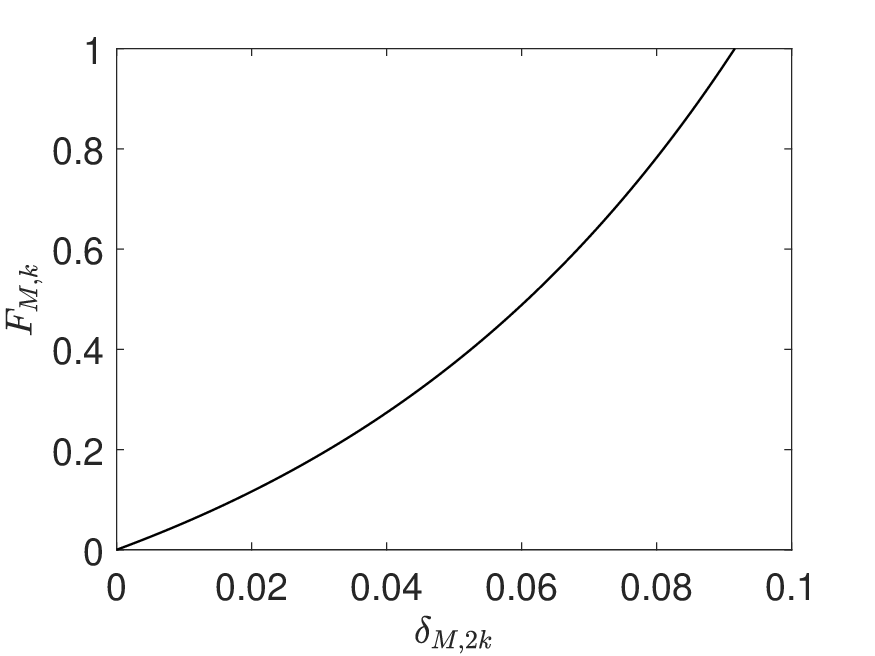}
	\end{tabular}
	\caption{(a) Plot of $C_{M,k}$~\eqref{eq::main1} for different values of BRIC $\delta_{M,2k}$. A sufficient condition for GPSP to converge is that $C_{M,k}<1$. (b) Plot of $G_{M,k}$~\eqref{eq::main_perturb}, which is the scaling factor for controlling the recovery distortion by the perturbation. (c) Plot of $F_{M,k}$~\eqref{eq::main2} for different values of BRIC $\delta_{M,2k}$. When $F_{M,k}<1$, GPSP shrinks the residual if the perturbation is sufficiently small. See Theorem~\ref{thm.inaccurate2}.  }\label{fig::condition}
\end{figure}

\subsection{Convergence of GPSP}\label{sec:main1}
Our main result  gives a sufficient condition for the exact recovery property of GPSP. 
\begin{theorem}\label{thm_main}
Suppose $\bc^*\in\mathbb{R}^{GM}$ is a block $k$-sparse vector, i.e., $\|\bc^*\|_{1,0}\leq k\leq G$.  Let $\by=\bA\bc^*$ where $\bA\in\mathbb{R}^{N\times GM}$. Then GPSP  (Algorithm~\ref{alg_GPSP}) with the data $\bA$ and $\by$ converges to the exact solution $\bc^*$ if the sampling matrix $\bA$ satisfies the BRIP~\eqref{eq_BRIP_condition} with BRIC $\delta_{M,2k}$ and
\begin{align}
	C_{M,k}:=\frac{\delta_{M,2k}(1+\delta_{M,2k})(3-\delta_{M,2k})(-\delta_{M,2k}^2+2\delta_{M,2k}+1)}{(1-\delta_{M,2k})^6}<1.
	\label{eq::main1}
\end{align}
\end{theorem}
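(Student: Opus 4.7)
The plan is to establish, for each iteration $l$, a strict contraction of the form
$$\|\bc^* - \widehat{\bc}^l\|_2 \le C_{M,k}\,\|\bc^* - \widehat{\bc}^{l-1}\|_2,$$
where $\widehat{\bc}^l$ denotes the Algorithm~\ref{alg_GPSP} estimate with $\widehat{\bc}^l_{T^l} = \bA_{T^l}^\dagger \by$ and zeros elsewhere. Once such a contraction is proved under the hypothesis $C_{M,k}<1$, exact recovery would follow from a finiteness argument: only $\binom{G}{k}$ admissible supports exist, and combined with the termination test $\|\by_r^l\|_2\ge \|\by_r^{l-1}\|_2$, the iterates must stabilize at $T^l = T^* := \gsup(\bc^*)$, forcing $\by_r^l = 0$ and hence $\widehat{\bc} = \bc^*$.

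The per-iteration analysis splits naturally into the two stages of Algorithm~\ref{alg_GPSP}. For the expanding stage, I would write $\by_r^{l-1} = \resid(\bA_{T^*\setminus T^{l-1}}\bc^*_{T^*\setminus T^{l-1}},\,\bA_{T^{l-1}})$ and use Lemma~\ref{lemma_proj}, together with the BRIC ordering of Lemma~\ref{prop5}, to lower-bound $\|\proj(\by_r^{l-1},\bF_g)\|_2$ for $g\in T^*\setminus T^{l-1}$ and upper-bound it for $g\notin T^*$. A counting comparison across the top-$k$ selection then forces $\widetilde T^l$ to cover a controlled portion of $T^*$; concretely, one obtains a quantitative estimate on $|T^*\setminus \widetilde T^l|$ and on the residual energy carried by the missed blocks.

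For the shrinking stage, I would start from $\bx^l_p = \bA_{\widetilde T^l}^\dagger \by = \bA_{\widetilde T^l}^\dagger \bA_{T^*}\bc^*_{T^*}$, split by whether a true block lies in $\widetilde T^l$, and control the cross term via Lemma~\ref{lemma_inner}. Block-wise, Lemma~\ref{prop4} yields two-sided bounds relating $\|\bF_g\bx^l_p[g]\|_2$ to $\|\bx^l_p[g]\|_2$, so the RMC ranking behaves, up to BRIP factors, like ranking the coefficient norms themselves. This lets me show that genuine support blocks in $\widetilde T^l$ survive the threshold while spurious blocks are discarded, producing $T^l$ with $|T^*\setminus T^l|$ bounded in terms of $|T^*\setminus \widetilde T^l|$. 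Combining the two stages with the identity $\bc^* - \widehat{\bc}^l = \bc^*_{T^*\setminus T^l} - \bA_{T^l}^\dagger \bA_{T^*\setminus T^l}\bc^*_{T^*\setminus T^l}$ yields the recursion, whose contraction factor simplifies to $C_{M,k}$.

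The hardest step, as the authors signal, is the shrinking analysis: because GPSP ranks by $\|\bF_g\bx^l_p[g]\|_2$ rather than $\|\bx^l_p[g]\|_2$ (as in BSP), both the upper and lower ranking bounds acquire an additional $\bF_g$-dependent factor that must be absorbed through BRIP, and this is what produces the sixth-power denominator in \eqref{eq::main1}. A second delicate point is bookkeeping the constant so that only $\delta_{M,2k}$ enters, even though $|T^*\cup \widetilde T^l|$ can be as large as $3k$; I would route the projection and pseudo-inverse estimates so each is applied to a column set of size at most $2k$, using disjointness in Lemmas~\ref{lemma_inner}--\ref{lemma_proj} to keep the effective index below $2k$ throughout.
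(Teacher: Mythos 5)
Your overall architecture mirrors the paper's: a two-stage (expansion/shrinking) analysis per iteration, BRIP lemmas to control cross-block terms, index bookkeeping to keep everything at order $2k$, and a finiteness-of-supports argument to convert a per-iteration contraction into exact recovery. However, the specific claims your plan rests on contain a genuine gap. You propose to obtain a \emph{cardinality} estimate on $|T^*\setminus\wT^l|$ from the expansion stage, and, in the shrinking stage, to show that ``genuine support blocks in $\wT^l$ survive the threshold while spurious blocks are discarded,'' yielding a bound on $|T^*\setminus T^l|$ in terms of $|T^*\setminus\wT^l|$. Neither statement is provable under a BRIC hypothesis, and neither is true in general: a true block whose coefficient energy is small can be out-ranked by a spurious block both in the top-$k$ expansion and in the RMC shrinking, so true blocks need not survive and no nontrivial cardinality bound holds. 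The paper's Theorems~\ref{thm1} and~\ref{thm2} deliberately avoid such set-membership claims and instead track \emph{coefficient energies}: $\|\bc^*_{T^*-\wT^l}\|_2\leq \beta_k\|\bc^*_{T^*-T^{l-1}}\|_2$ and $\|\bc^*_{T^*-T^l}\|_2\leq\mu_k\|\bc^*_{T^*-\wT^l}\|_2$, where notably $\mu_k>1$ --- the shrinking stage is \emph{allowed} to drop true blocks, and the proof only ensures this loss is controlled; the contraction is driven entirely by $\beta_k$ being small. The only cardinality fact used is the set-arithmetic identity $|T^l_\Delta - T^*|\geq |T^*-\wT^l|$ (both expansion and true support have size $k$), which serves to compare \emph{sums of squared projections} over these two sets, not their sizes. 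A proof attempting to establish survival of true blocks would fail at exactly that step.

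A secondary inaccuracy: the paper's contraction is on the residual, $\|\by_r^l\|_2\leq\rho_k\|\by_r^{l-1}\|_2$ with $\rho_k\leq C_{M,k}$, not on the coefficient error $\|\bc^*-\widehat{\bc}^l\|_2$ with the constant $C_{M,k}$ itself. The two error metrics are equivalent via $\by_r^l=\bA(\bc^*-\widehat{\bc}^l)$ and BRIP, but the conversion multiplies the constant by $\sqrt{(1+\delta_{M,2k})/(1-\delta_{M,2k})}$, so your coefficient-error recursion would not ``simplify to $C_{M,k}$'' exactly; moreover the algorithm's termination test compares residual norms, so the residual formulation is what plugs directly into the stopping rule (if $\rho_k<1$, termination can only trigger at zero residual, and then uniqueness of the block $k$-sparse representation, guaranteed by $\delta_{M,2k}<1$, gives $\widehat{\bc}=\bc^*$). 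Finally, your attribution of the sixth-power denominator in~\eqref{eq::main1} solely to the RMC ranking factors is off: it arises from compounding $\beta_k$ (cubic in $(1-\delta_{M,2k})^{-1}$), $\mu_k$ (quadratic), and the residual-comparison factor in Theorem~\ref{thm3}, of which the RMC-induced $\sqrt{(1+\delta_{M,1})/(1-\delta_{M,1})}$ terms are only one ingredient.
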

Theorem \ref{thm_main} is proved in Section~\ref{sec::proof1}. 
Theorem \ref{thm_main} provides a sufficient condition for the convergence of GPSP in terms of the BRIC $\delta_{M,2k}$ of the block matrix $\bA$. In particular, for block matrices with sufficiently small BRIC $\delta_{M,2k}$, GPSP is guaranteed to converge. Figure~\ref{fig::condition} (a) plots the graph of $C_{M,k}$ in~\eqref{eq::main1} against $\delta_{M,2k}$; it increases as $\delta_{M,2k}$ gets larger before it reaches the value of $1$. Using the bisection method, we deduce that GPSP converges when  $\delta_{M,2k}$ is  bounded by approximately $0.1188$. To guarantee an exact recovery of a block $k$-sparse signal, we note that the BRIC in~\eqref{eq::main1} is related to the block sparsity level $2k$. This is consistent with the analysis of the uniqueness of block sparse representations. See Appendix~\ref{sec::unique_sparse} for details. We also remark that a similar sufficient condition is available for BSP~\cite{kamali2013block}, which is guaranteed to converge when $\delta_{M,3k}<0.1672$. Although the upper-bound constant is higher than $0.1188$, their condition is associated with the BRIC of order $3k$, and the relation $\delta_{M,2k}\leq \delta_{M,3k}$ makes it difficult to judge which condition is more relaxed. Instead, we shall compare GPSP with BSP and other related algorithms in Section~\ref{sec::comparison} and Section~\ref{sec::numerical} via various examples and experiments.

\subsection{Recovery distortion with perturbed measurements}\label{sec:main2}
When the measurements contain an additive perturbation, we have the following characterization of the recovery distortion in terms of the BRIC of the sampling matrix and the magnitude of the perturbation. 
\begin{theorem}\label{thm::perturb}
Suppose $\bc^*\in\mathbb{R}^{GM}$ is a block $k$-sparse vector, i.e., $\|\bc^*\|_{1,0}\leq k\leq G$.  Let $\by=\bA\bc^*+\be$ where $\bA\in\mathbb{R}^{N\times GM}$ satisfies the BRIP with BRIC $\delta_{M,2k}$, and $\be\in\mathbb{R}^{GM}$ is a perturbation. Then
\begin{align}
	\|\bc^*-\widehat{\bc}\|_2< G_{M,k}\|\be\|_2\label{eq::main_perturb}
\end{align}
with $G_{M,k}=\frac{1+2\delta_{M,2k}}{\delta_{M,2k}(1-\delta_{M,2k})}$ whenever $\delta_{M,2k}<0.0916$. 
\end{theorem}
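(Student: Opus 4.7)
The plan is to adapt the noiseless analysis behind Theorem \ref{thm_main} to the perturbed setting by propagating $\be$ through each stage of GPSP and then solving an inhomogeneous recursion for the fixed-point distortion. The key observation is that every quantity computed in Algorithm \ref{alg_GPSP} depends linearly on $\by$, so each operation splits cleanly into a noiseless part, for which the estimates of Theorem \ref{thm_main} apply verbatim, and a noise part, which we must bound in terms of $\|\be\|_2$ using the BRIP toolkit of Lemmas \ref{prop4}--\ref{lemma_proj}.

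First I would substitute $\by = \bA\bc^* + \be$ and decompose each per-iteration quantity accordingly. At iteration $l$, the expanding stage is driven by $\|\proj(\by_r^{l-1},\bF_g)\|_2$; writing $\by_r^{l-1}$ as its noiseless counterpart plus a noise residual and applying Lemma \ref{lemma_proj} yields a lower bound for $g \in T^* \setminus T^{l-1}$ and an upper bound for $g \notin T^*$, each with an additive term of order $\|\be\|_2$. The shrinking stage depends on $\|\bF_g \bx^l_p[g]\|_2$ where $\bx^l_p = \bA_{\wT^l}^\dagger \by$; the extra $\bA_{\wT^l}^\dagger \be$ contribution is controlled in norm by $\|\be\|_2/\sqrt{1-\delta_{M,2k}}$ through Lemma \ref{prop4}. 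Combining these per-stage estimates produces an inhomogeneous recursion $\alpha_l \le \rho\,\alpha_{l-1} + \kappa\|\be\|_2$ with $\alpha_l := \|\bc^*_{T^*\setminus T^l}\|_2$, where $\rho$ is the same rational function of $\delta_{M,2k}$ that drives $C_{M,k}$ in Theorem \ref{thm_main} and $\kappa$ accumulates the noise constants emitted by each stage.

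Iterating the recursion and using the GPSP termination rule, I would bound the limiting support-residual $\|\bc^*_{T^* \setminus \widehat{T}}\|_2$ by $\kappa \|\be\|_2/(1-\rho)$, which is finite precisely when $\delta_{M,2k} < 0.0916$; this is the numerical threshold below which $\rho < 1$ with enough margin for the noise term. To close the proof I would write $\widehat{\bc}_{\widehat{T}} - \bc^*_{\widehat{T}} = \bA_{\widehat{T}}^\dagger \bA_{T^* \setminus \widehat{T}}\,\bc^*_{T^* \setminus \widehat{T}} + \bA_{\widehat{T}}^\dagger \be$, bound the first summand by Lemma \ref{lemma_inner} applied to the fixed-point estimate and the second by Lemma \ref{prop4}, and combine with $\|\bc^*_{T^*\setminus\widehat{T}}\|_2$ through the triangle inequality to obtain the claimed form $\|\bc^*-\widehat{\bc}\|_2 < G_{M,k}\|\be\|_2$.

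The main obstacle I anticipate is bookkeeping tight enough to recover the specific constant $G_{M,k} = (1+2\delta_{M,2k})/(\delta_{M,2k}(1-\delta_{M,2k}))$: the noise enters additively at every stage, is transported through projection, pseudo-inverse, and residual compositions, and any slack inflates either the threshold $0.0916$ or the prefactor. In particular, the denominator's $\delta_{M,2k}$ factor seems to arise from the fixed-point ratio $\kappa/(1-\rho)$ after canceling leading-order terms in the contraction polynomial, so the per-stage constants need to be matched with some care rather than bounded crudely. A secondary obstacle is the termination test $\|\by_r^l\|_2 \ge \|\by_r^{l-1}\|_2$, which may fire before the recursion reaches its fixed point; I would handle this by showing that at the terminating iteration either the recursion has already contracted sufficiently, or the residual comparison directly forces the coefficient bound via Lemma \ref{prop4}.
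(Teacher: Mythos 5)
Your ingredients are the right ones---the inhomogeneous recursion $\|\bc^*_{T^*-T^l}\|_2 \le \rho\,\|\bc^*_{T^*-T^{l-1}}\|_2 + \kappa\|\be\|_2$ is exactly the first half of the paper's Theorem~\ref{thm.inaccurate2} (with $\rho = D_{M,k}$, $\kappa = E_{M,k}$), and your closing decomposition of $\widehat{\bc}_{\widehat{T}} - \bc^*_{\widehat{T}}$ is exactly the paper's Theorem~\ref{thm.inaccurate}. But the step you use to glue them---iterating the recursion to its fixed point and concluding $\|\bc^*_{T^*-\widehat{T}}\|_2 \le \kappa\|\be\|_2/(1-\rho)$---is a genuine gap, and it is not how the paper proceeds. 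GPSP terminates after finitely many iterations, possibly at the very first one, so at the terminating index $l$ the recursion only gives $\|\bc^*_{T^*-T^l}\|_2 \le \rho^l\|\bc^*_{T^*-T^0}\|_2 + \frac{1-\rho^l}{1-\rho}\kappa\|\be\|_2$, and the transient term $\rho^l\|\bc^*_{T^*-T^0}\|_2$ is of order $\|\bc^*\|_2$, which is unrelated to $\|\be\|_2$; there is no ``limiting'' support residual available. Your conjectured provenance of the constant is also off: since $D_{M,k}\to 0$ and $E_{M,k}\to 2$ as $\delta_{M,2k}\to 0^{+}$, the fixed-point ratio $\kappa/(1-\rho)$ stays bounded near $2$, whereas $G_{M,k} = \frac{1+2\delta_{M,2k}}{\delta_{M,2k}(1-\delta_{M,2k})}$ blows up like $1/\delta_{M,2k}$; so the claimed constant cannot arise from $\kappa/(1-\rho)$ at all.

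What actually closes the argument is the branch you relegate to a ``secondary obstacle'': the termination test itself, developed quantitatively. The paper's Theorem~\ref{thm.inaccurate2} uses the recursion \eqref{eq::thm_inacc2} as an ingredient to prove that, when $F_{M,k}<1$ (which bisection shows holds for $\delta_{M,2k}<0.0916$), the residual strictly decreases, $\|\by_r^l\|_2 < \|\by_r^{l-1}\|_2$, whenever $\|\be\|_2 \le \delta_{M,2k}\|\bc^*_{T^*-T^{l-1}}\|_2$. Taking the contrapositive at the terminating iteration (where $\|\by_r^l\|_2 \ge \|\by_r^{l-1}\|_2$ and the algorithm returns $\widehat{T}=T^{l-1}$) certifies $\|\bc^*_{T^*-\widehat{T}}\|_2 < \|\be\|_2/\delta_{M,2k}$; this contrapositive is precisely where the $\delta_{M,2k}$ in the denominator of $G_{M,k}$ comes from. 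Substituting this into Theorem~\ref{thm.inaccurate} and loosening $\delta_{M,k}\le \delta_{M,2k}$ gives $\|\bc^*-\widehat{\bc}\|_2 < \left(\frac{1+\delta_{M,2k}}{\delta_{M,2k}(1-\delta_{M,2k})}+\frac{1}{1-\delta_{M,2k}}\right)\|\be\|_2 = G_{M,k}\|\be\|_2$. To repair your proposal, drop the fixed-point iteration entirely: use your recursion only to compare $\|\by_r^l\|_2$ against $\|\by_r^{l-1}\|_2$ under the hypothesis $\|\be\|_2 \le \delta_{M,2k}\|\bc^*_{T^*-T^{l-1}}\|_2$, and let the termination event itself deliver the bound on the missing mass.
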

Theorem \ref{thm::perturb} is proved in Section \ref{sec::proof2}. 
Theorem \ref{thm::perturb} is a generalization of the result in~\cite{dai2009subspace} (Theorem 9) for GPSP. It states that the distortion of the recovery is bounded by the magnitude of the observational perturbation multiplied by a factor. As shown in Figure~\ref{fig::condition} (b), the factor $G_{M,k}$ in~\eqref{eq::main_perturb} decreases as $\delta_{M,2k}$ increases.   When $\delta_{M,2k}$ approaches the bound $0.0916$, $G_{M,k}$ is approximately $13.9825$; when $\delta_{M,2k}$ approaches $0$ from right, $G_{M,k}$ grows to infinity. This indicates that mild correlations among blocks can be helpful for handling noise. Theorem~\ref{thm::perturb} justifies the robustness of GPSP against noise for sampling matrices with sufficiently small BRIC, and the reconstruction error reduces to $0$ when the perturbation vanishes. In particular, the relative $L_2$-error $\|\bc^*-\widehat{\bc}\|_2/\|\bc^*\|_2<\alpha\%$ whenever  $\|\be\|_2<13.9825^{-1}\times \alpha\%\|\bc^*\|_2 =0.0715\times \alpha\%\|\bc^*\|_2$; hence, GPSP has a higher tolerance for the noise if the true signal $\bc^*$ has  a  larger magnitude than that of the noise.

\subsection{Outline of the proof for Theorem~\ref{thm_main}}\label{sec::proof1}

Our strategy of  proving the convergence of GPSP is inspired by~\cite{dai2009subspace,kamali2013block}.  As stated in Algorithm \ref{alg_GPSP}, each iteration of GPSP contains two stages: a dictionary expanding stage (line 5) and a dictionary shrinking stage (line 7). Hence, there are three key criteria towards the establishment of Theorem~\ref{thm_main}: (i) correct features need to be included during the expansion (Theorem~\ref{thm1}); (ii) correct features need to survive during the shrinking (Theorem~\ref{thm2}); and (iii) the fitting errors need to reduce (Theorem~\ref{thm3}).  

We start by studying the change of the magnitude of the blocks of $\bc^*$ that are not selected by GPSP during the expansion (line 5 of Algorithm~\ref{alg_GPSP}).
\begin{theorem}\label{thm1}
Suppose $\bA\in \mathbb{R}^{N\times GM}$ satisfies the BRIP condition with BRIC $\delta_{M,2k}$, and the exact solution $\bc^*$ is block $k$-sparse. During the $l$-th iteration of GPSP, it holds that
\begin{align}
	\|\bc^*_{T^*-\widetilde{T}^l}\|_2\leq \beta_k\|\bc^*_{T^*-T^{l-1}}\|_2,
\end{align}
where
\begin{align}
	\beta_{k} = \frac{\delta_{M,2k}(1-\delta_{M,k}+\delta_{M,2k})}{(1-\delta_{M,k})^2}\left(\sqrt{\frac{2(1+\delta_{M,1})}{1-\delta_{M,1}}}+1\right).
\end{align}
\end{theorem}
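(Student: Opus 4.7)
The plan is to exploit the top-$k$ selection rule in the expansion step (line~5 of Algorithm~\ref{alg_GPSP}) of GPSP to compare the projection magnitudes of the still-missed true blocks against those of the newly selected blocks that lie outside the true support, and then to translate this comparison into BRIP-based estimates on $\|\bc^*_{T_1}\|_2$. Let $T^* := \gsup\bc^*$, $T_{\text{miss}} := T^* - T^{l-1}$, $T_1 := T^* - \wT^l$, and $T_2 := (\wT^l - T^{l-1}) - T^*$, and write $P_{T^{l-1}} := \bA_{T^{l-1}}\bA_{T^{l-1}}^\dagger$ for the orthogonal projector onto the column span of $\bA_{T^{l-1}}$. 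A counting argument using $|T_{\text{miss}}|\le k$ and the fact that the expansion appends $k$ indices outside $T^{l-1}$ gives $|T_1| \le |T_2| \le k$, and $T_1$, $T_2$, $T^{l-1}$ are pairwise disjoint with $|T_1 \cup T_2| \le 2k$. Since $\by=\bA_{T^*}\bc^*_{T^*}$ and the residual operator annihilates any component in the column span of $\bA_{T^{l-1}}$, we have $\by_r^{l-1}=(I-P_{T^{l-1}})\bA_{T_{\text{miss}}}\bc^*_{T_{\text{miss}}}$.

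The first key step is to turn the top-$k$ rule into an algebraic comparison. Since every $g \in T_1$ fails to be selected while every $h \in T_2$ is selected, $\|\proj(\by_r^{l-1},\bF_g)\|_2 \le \|\proj(\by_r^{l-1},\bF_h)\|_2$; combined with $|T_1| \le |T_2|$ this yields
\begin{equation*}
\sum_{g\in T_1}\|\proj(\by_r^{l-1},\bF_g)\|_2^2 \le \sum_{g\in T_2}\|\proj(\by_r^{l-1},\bF_g)\|_2^2.
\end{equation*}
Applying Lemma~\ref{prop4} block-wise (with constant $\delta_{M,1}$) gives the two-sided estimate $\|\bF_g^\top\by_r^{l-1}\|_2^2/(1+\delta_{M,1}) \le \|\proj(\by_r^{l-1},\bF_g)\|_2^2 \le \|\bF_g^\top\by_r^{l-1}\|_2^2/(1-\delta_{M,1})$. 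Summing over the disjoint union $T_1\cup T_2$ and inserting the inequality above produces the key transfer
\begin{equation*}
\|\bA_{T_1\cup T_2}^\top\by_r^{l-1}\|_2 \le \sqrt{\tfrac{2(1+\delta_{M,1})}{1-\delta_{M,1}}}\,\|\bA_{T_2}^\top\by_r^{l-1}\|_2,
\end{equation*}
which is the origin of the $\sqrt{2(1+\delta_{M,1})/(1-\delta_{M,1})}$ factor in $\beta_k$.

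The remainder of the argument upper-bounds $\|\bA_{T_2}^\top\by_r^{l-1}\|_2$ by a multiple of $\|\bc^*_{T_{\text{miss}}}\|_2=\|\bc^*_{T^*-T^{l-1}}\|_2$ and lower-bounds $\|\bA_{T_1\cup T_2}^\top\by_r^{l-1}\|_2$ by roughly $(1-\delta_{M,k})\|\bc^*_{T_1}\|_2$. For the upper bound, decompose $\bA_{T_2}^\top\by_r^{l-1}=\bA_{T_2}^\top\bA_{T_{\text{miss}}}\bc^*_{T_{\text{miss}}}-\bA_{T_2}^\top P_{T^{l-1}}\bA_{T_{\text{miss}}}\bc^*_{T_{\text{miss}}}$; disjointness of $T_2$ from both $T_{\text{miss}}$ and $T^{l-1}$ allows direct use of Lemma~\ref{lemma_inner} on the first term and Lemma~\ref{lemma_proj} followed by Lemma~\ref{lemma_inner} on the second, which accumulates the factor $\delta_{M,2k}(1-\delta_{M,k}+\delta_{M,2k})/(1-\delta_{M,k})^2$. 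For the lower bound, dominate $\|\bA_{T_1\cup T_2}^\top\by_r^{l-1}\|_2\ge\|\bA_{T_1}^\top\by_r^{l-1}\|_2$, expand around the diagonal contribution $\bA_{T_1}^\top\bA_{T_1}\bc^*_{T_1}$ whose norm is at least $(1-\delta_{M,k})\|\bc^*_{T_1}\|_2$ by Lemma~\ref{prop4}, and absorb the off-diagonal cross term together with the projection-induced piece into $\|\bc^*_{T_{\text{miss}}}\|_2$ via Lemmas~\ref{lemma_inner} and~\ref{lemma_proj}. Chaining these three estimates and using Lemma~\ref{prop5} to unify every BRIC index under $\delta_{M,2k}$ isolates $\|\bc^*_{T_1}\|_2$ on the left and gives the stated $\beta_k$.

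The main obstacle is the BRIC bookkeeping around the projector $P_{T^{l-1}}$: each appearance of it injects a factor proportional to $\delta_{M,2k}/(1-\delta_{M,k})$ through Lemma~\ref{lemma_proj}, so reaching the $(1-\delta_{M,k})^2$ denominator in $\beta_k$ requires balancing applications of this lemma on both sides of the central inequality. A secondary subtlety is the counting identity $|T_1|\le|T_2|$ in the borderline case where some of the top-$k$ indices already lie in $T^{l-1}$; this is handled by interpreting the expansion in the standard subspace-pursuit sense of adding the $k$ largest indices taken from $\{1,\dots,G\}\setminus T^{l-1}$, so that exactly $k$ new indices are appended at each iteration.
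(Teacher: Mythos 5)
Your proposal is correct and follows essentially the same route as the paper's proof: the same counting argument giving $|T^*-\widetilde{T}^l|\le|(\widetilde{T}^l-T^{l-1})-T^*|$, the same top-$k$ selection inequality between sums of projection norms, the same two-sided $(1\pm\delta_{M,1})$ conversion between projections and block inner products, and the same BRIP cross-term estimates (Lemma~\ref{lemma_inner} plus the bound on the pseudo-coefficients of the projection onto $\mathrm{span}(\bA_{T^{l-1}})$), chaining to exactly the stated $\beta_k$. The only cosmetic difference is that the paper absorbs the projector into an extended coefficient vector $\bx_r^{l-1}$ supported on $T^*\cup T^{l-1}$ and runs the estimates through $\|\bx_r^{l-1}\|_2$ before converting to $\|\bc^*_{T^*-T^{l-1}}\|_2$, whereas you keep $I-P_{T^{l-1}}$ explicit and bound each term directly against $\|\bc^*_{T^*-T^{l-1}}\|_2$.
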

Theorem \ref{thm1} is proved in Appendix~\ref{app:thm1}.
This result says that during the expansion, the coefficient magnitude of the unselected true blocks at the current iteration is controlled by that of the true blocks missed from the previous iteration. Next, we  study the shrinking stage.
\begin{theorem}\label{thm2} Under the conditions of Theorem~\ref{thm1} and during the $l$-th iteration of GPSP, it holds that
\begin{align}
	\|\bc^*_{T^*-T^l}\|_2\leq\mu_k\|\bc^*_{T^*-\wT^l}\|_2,
\end{align}
where
\begin{align}
	\mu_k=1+ \frac{\sqrt{2}\delta_{M,2k}}{1-\delta_{M, 2k}}\left(1+\sqrt{\frac{1+\delta_{M,1}}{1-\delta_{M,1}}}\right).
\end{align}
\end{theorem}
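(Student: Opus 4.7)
The plan is to control the set $A := (T^{*}\cap\wT^l)\setminus T^l$ of true blocks discarded during the shrinking step, because the disjoint decomposition $T^{*}\setminus T^l = (T^{*}\setminus\wT^l)\sqcup A$ yields
\begin{equation*}
\|\bc^{*}_{T^{*}-T^l}\|_2^2 \;=\; \|\bc^{*}_{T^{*}-\wT^l}\|_2^2 + \|\bc^{*}_A\|_2^2,
\end{equation*}
so the elementary inequality $\sqrt{a^2+b^2}\leq |a|+|b|$ reduces the theorem to showing $\|\bc^{*}_A\|_2\leq (\mu_k-1)\|\bc^{*}_{T^{*}-\wT^l}\|_2$. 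To analyze $A$, I would first write the least-squares solution $\bx^l_p = \bA_{\wT^l}^{\dagger}\by$ as a perturbation of the truncated true coefficients. Setting $D:=T^{*}\setminus \wT^l$, the identity $\by=\bA_{T^{*}\cap\wT^l}\bc^{*}_{T^{*}\cap\wT^l}+\bA_D\bc^{*}_D$ together with $T^{*}\cap\wT^l\subseteq \wT^l$ gives $\bx^l_p = \widetilde{\bc}^{*} + \mathbf{w}$, where $\widetilde{\bc}^{*}$ embeds $\bc^{*}_{T^{*}\cap\wT^l}$ into the $\wT^l$-slots (zero elsewhere) and $\mathbf{w}:=\bA_{\wT^l}^{\dagger}\bA_D\bc^{*}_D$ captures the distortion from the missing true blocks. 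Consequently $\bx^l_p[g]=\bc^{*}_g+\mathbf{w}[g]$ for $g\in A$, and $\bx^l_p[g']=\mathbf{w}[g']$ for $g'\in B:=T^l\setminus T^{*}$.

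Next, I would exploit the shrinking rule via a pigeonhole count. From $|T^l|= k$, $|T^{*}|\leq k$, together with $|T^{*}\cap\wT^l|=|T^{*}\cap T^l|+|A|$ and $|T^l|=|T^{*}\cap T^l|+|B|$, I obtain $|B|\geq |A|$. Since $T^l$ consists of the $k$ indices of $\wT^l$ with the largest $\|\bF_g\bx^l_p[g]\|_2$, every element of $B$ has block-magnitude at least as large as every element of $A$, so I can pair each $g\in A$ with a distinct $g'\in B$ satisfying $\|\bF_g\bx^l_p[g]\|_2\leq \|\bF_{g'}\bx^l_p[g']\|_2$, producing the key inequality
\begin{equation*}
\sum_{g\in A}\|\bF_g\bx^l_p[g]\|_2^2 \;\leq\; \sum_{g'\in B}\|\bF_{g'}\bx^l_p[g']\|_2^2.
\end{equation*}

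Then I would turn both sides of this block-magnitude inequality into bounds on $\mathbf{w}$. By Lemma~\ref{prop4} applied block-wise, for $g'\in B$ we have $\|\bF_{g'}\bx^l_p[g']\|_2 = \|\bF_{g'}\mathbf{w}[g']\|_2 \leq \sqrt{1+\delta_{M,1}}\|\mathbf{w}[g']\|_2$, so the right-hand sum is at most $(1+\delta_{M,1})\|\mathbf{w}\|_2^2$. For the left, a reverse-triangle argument gives $\|\bc^{*}_g\|_2\leq \frac{1}{\sqrt{1-\delta_{M,1}}}\|\bF_g\bx^l_p[g]\|_2 + \|\mathbf{w}[g]\|_2$, and summing over $g\in A$ with $(a+b)^2\leq 2(a^2+b^2)$ — which is what surfaces the $\sqrt{2}$ factor in $\mu_k$ — yields
\begin{equation*}
\|\bc^{*}_A\|_2 \;\leq\; \sqrt{2}\left(1+\sqrt{\tfrac{1+\delta_{M,1}}{1-\delta_{M,1}}}\right)\|\mathbf{w}\|_2.
\end{equation*}
To close the loop, I would bound $\|\mathbf{w}\|_2 = \|(\bA_{\wT^l}^{\top}\bA_{\wT^l})^{-1}\bA_{\wT^l}^{\top}\bA_D\bc^{*}_D\|_2$ by combining $\|(\bA_{\wT^l}^{\top}\bA_{\wT^l})^{-1}\|_2\leq 1/(1-\delta_{M,2k})$ (Lemma~\ref{prop4} plus $|\wT^l|\leq 2k$) with Lemma~\ref{lemma_inner} applied to the disjoint pair $(\wT^l,D)$ to get $\|\mathbf{w}\|_2\leq \frac{\delta_{M,2k}}{1-\delta_{M,2k}}\|\bc^{*}_D\|_2$. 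Chaining yields exactly $\|\bc^{*}_A\|_2\leq (\mu_k-1)\|\bc^{*}_{T^{*}-\wT^l}\|_2$.

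The chief technical obstacle is the block-magnitude-to-coefficient-magnitude conversion: the shrinking criterion is phrased in terms of the \emph{projected} block norms $\|\bF_g\bx^l_p[g]\|_2$ rather than $\|\bx^l_p[g]\|_2$ or $\|\bc^{*}_g\|_2$, so one must juggle per-block BRIP (via $\delta_{M,1}$) on both $A$ and $B$ while reserving the global BRIC $\delta_{M,2k}$ for the pseudoinverse and the disjoint-block cross-inner-product; this separation of scales is precisely what creates the mixed $(\delta_{M,1},\delta_{M,2k})$ structure of $\mu_k$. A subtler point is the index in Lemma~\ref{lemma_inner}: since $|\wT^l|+|D|$ could in principle reach $3k$, one either invokes Lemma~\ref{prop5} to relax to $\delta_{M,3k}$ or uses the observation $\wT^l\cup D=\wT^l\cup T^{*}$ together with a tighter decomposition to keep the BRIC index at $2k$ as stated.
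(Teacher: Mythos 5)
Your overall architecture mirrors the paper's proof of Theorem~\ref{thm2}: the same decomposition $T^*-T^l=(T^*-\wT^l)\cup\bigl(T^*\cap(\wT^l-T^l)\bigr)$, the same expansion $\bx_p^l=\bc^*_{\wT^l}+\bA_{\wT^l}^\dagger\bA_{T^*-\wT^l}\bc^*_{T^*-\wT^l}$, and the same exploitation of the top-$k$ shrinking rule (your injective pairing of $A=(T^*\cap\wT^l)-T^l$ with $B=T^l-T^*$ is a harmless variant of the paper's comparison of $\wT^l-T^l$ against a $k$-subset $T'\subset\wT^l$ disjoint from $T^*$). The gap is in the constant tracking, and it is genuine. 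Your bound $\|\mathbf{w}\|_2\leq\frac{\delta_{M,2k}}{1-\delta_{M,2k}}\|\bc^*_D\|_2$ is unjustified: Lemma~\ref{lemma_inner} applied to the disjoint pair $(\wT^l,D)$ carries the index $|\wT^l|+|D|$, which can reach $3k$, and the hypotheses of Theorem~\ref{thm1} only assume BRIP of order $2k$, so $\delta_{M,3k}$ is not controlled at all; Lemma~\ref{prop5} goes the wrong way ($\delta_{M,2k}\leq\delta_{M,3k}$), so your first escape route proves a different, weaker statement. The only way to stay at index $2k$ --- and it is what the paper does in~\eqref{eq_temp6} --- is to split $\wT^l=(\wT^l-T^{l-1})\cup T^{l-1}$ into two sets of size at most $k$ and apply Lemma~\ref{lemma_inner} to each piece separately, which necessarily costs a factor $\sqrt{2}$:
\begin{equation*}
\|\mathbf{w}\|_2\;\leq\;\frac{\sqrt{2}\,\delta_{M,2k}}{1-\delta_{M,2k}}\,\|\bc^*_{T^*-\wT^l}\|_2.
\end{equation*}

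The trouble is that you have already spent a $\sqrt{2}$ elsewhere, on the scalar inequality $(a+b)^2\leq 2(a^2+b^2)$; stacking both factors yields $\|\bc^*_A\|_2\leq\frac{2\delta_{M,2k}}{1-\delta_{M,2k}}\bigl(1+\sqrt{\tfrac{1+\delta_{M,1}}{1-\delta_{M,1}}}\bigr)\|\bc^*_{T^*-\wT^l}\|_2$, i.e.\ a factor $2$ where the theorem claims $\sqrt{2}$, so the stated $\mu_k$ is not reached. The repair is to replace the scalar bound by the $\ell_2$ (Minkowski) triangle inequality across the blocks of $A$:
\begin{equation*}
\|\bc^*_A\|_2\;\leq\;\frac{1}{\sqrt{1-\delta_{M,1}}}\Bigl(\sum_{g\in A}\|\bF_g\bx_p^l[g]\|_2^2\Bigr)^{1/2}+\Bigl(\sum_{g\in A}\|\mathbf{w}[g]\|_2^2\Bigr)^{1/2}\;\leq\;\Bigl(1+\sqrt{\tfrac{1+\delta_{M,1}}{1-\delta_{M,1}}}\Bigr)\|\mathbf{w}\|_2,
\end{equation*}
where the second inequality uses your pairing inequality on $A$ versus $B$ together with the per-block bound $\|\bF_{g'}\mathbf{w}[g']\|_2\leq\sqrt{1+\delta_{M,1}}\,\|\mathbf{w}[g']\|_2$ on $B$. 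This costs no constant at all --- it is exactly what the paper's reverse-triangle-inequality step~\eqref{eq.xpl.lower} accomplishes. With these two corrections your chain gives precisely $\mu_k-1=\frac{\sqrt{2}\,\delta_{M,2k}}{1-\delta_{M,2k}}\bigl(1+\sqrt{\tfrac{1+\delta_{M,1}}{1-\delta_{M,1}}}\bigr)$, and your argument becomes a correct, essentially identical rendition of the paper's proof; as written, however, it does not establish the theorem with the stated constant.
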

Theorem \ref{thm2} is proved in Appendix~\ref{app:thm2}.
This result says that during the shrinking, the coefficient magnitude of the unselected true blocks at iteration $l$ is controlled by that of the true blocks missed by the expanded selection $\wT^l$. 
Based on Theorem \ref{thm1} and \ref{thm2}, our next result compares the sizes of the residuals between consecutive iterations.

\begin{theorem}\label{thm3}
Under the conditions of Theorem~\ref{thm1} and during the $l$-th iteration of GPSP, it holds that
\begin{align}
	\|\by_r^l\|_2\leq \rho_k\|\by_r^{l-1}\|_2,
\end{align}
where
\begin{align}
	\rho_k = \frac{\mu_k\beta_k\sqrt{1-\delta^2_{M,k}} }{1-\delta_{M,k}-\delta_{M,2k}}
\end{align}
and the constant $\beta_k,\mu_k$ are defined in Theorem \ref{thm1} and \ref{thm2}, respectively.
\end{theorem}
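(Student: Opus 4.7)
The plan is to chain together the two support-recovery estimates from Theorem~\ref{thm1} and Theorem~\ref{thm2} with simple upper and lower bounds on residuals coming from BRIP. The strategy is completely linear: bound $\|\by_r^l\|_2$ from above by $\|\bc^*_{T^*-T^l}\|_2$ (times a BRIP factor), rewrite this missed-block coefficient norm in terms of $\|\bc^*_{T^*-T^{l-1}}\|_2$ using Theorems~\ref{thm2} and~\ref{thm1} in sequence, and then pass back from coefficient to residual by bounding $\|\by_r^{l-1}\|_2$ from below.

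First I would write $\by=\bA_{T^*}\bc^*_{T^*}$ and use the defining least-squares property of the projection to get
\begin{align*}
\|\by_r^l\|_2 \;=\; \|\by - \bA_{T^l}\bA_{T^l}^\dagger\by\|_2 \;\le\; \|\by-\bA_{T^l}\bc^*_{T^l}\|_2 \;=\; \|\bA_{T^*-T^l}\bc^*_{T^*-T^l}\|_2,
\end{align*}
since the contributions on $T^*\cap T^l$ cancel and the contributions on $T^l\setminus T^*$ vanish. Because $|T^*-T^l|\le k$, BRIP together with Lemma~\ref{prop4} gives $\|\bA_{T^*-T^l}\bc^*_{T^*-T^l}\|_2 \le \sqrt{1+\delta_{M,k}}\,\|\bc^*_{T^*-T^l}\|_2$. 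Applying Theorem~\ref{thm2} and then Theorem~\ref{thm1} yields
\begin{align*}
\|\bc^*_{T^*-T^l}\|_2 \;\le\; \mu_k\|\bc^*_{T^*-\widetilde T^l}\|_2 \;\le\; \mu_k\beta_k\|\bc^*_{T^*-T^{l-1}}\|_2.
\end{align*}

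Next I would convert $\|\bc^*_{T^*-T^{l-1}}\|_2$ back into a residual bound. Set $I=T^*-T^{l-1}$ and $J=T^{l-1}$, which are disjoint with $|I|\le k$ and $|J|\le k$. As before, $\by_r^{l-1}=\resid(\bA_I\bc^*_I,\bA_J)$ because the $T^*\cap T^{l-1}$ part of $\by$ lies in $\mathrm{span}(\bA_J)$ and thus has vanishing residual. Since $\bA_I\bc^*_I\in\mathrm{span}(\bA_I)$, Lemma~\ref{lemma_proj} gives
\begin{align*}
\|\by_r^{l-1}\|_2 \;\ge\; \Bigl(1-\tfrac{\delta_{M,2k}}{1-\delta_{M,k}}\Bigr)\|\bA_I\bc^*_I\|_2 \;\ge\; \tfrac{(1-\delta_{M,k}-\delta_{M,2k})\sqrt{1-\delta_{M,k}}}{1-\delta_{M,k}}\,\|\bc^*_{T^*-T^{l-1}}\|_2,
\end{align*}
using Lemma~\ref{prop4} once more to lower bound $\|\bA_I\bc^*_I\|_2$ by $\sqrt{1-\delta_{M,k}}\,\|\bc^*_{T^*-T^{l-1}}\|_2$. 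Rearranging gives $\|\bc^*_{T^*-T^{l-1}}\|_2 \le \sqrt{1-\delta_{M,k}}\,\|\by_r^{l-1}\|_2/(1-\delta_{M,k}-\delta_{M,2k})$.

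Combining the three inequalities multiplies the factors $\sqrt{1+\delta_{M,k}}$, $\mu_k\beta_k$, and $\sqrt{1-\delta_{M,k}}/(1-\delta_{M,k}-\delta_{M,2k})$, producing exactly
\begin{align*}
\|\by_r^l\|_2 \;\le\; \frac{\mu_k\beta_k\sqrt{1-\delta_{M,k}^2}}{1-\delta_{M,k}-\delta_{M,2k}}\,\|\by_r^{l-1}\|_2 \;=\; \rho_k\|\by_r^{l-1}\|_2,
\end{align*}
which is the claim. There is essentially no hard step: the substantive work has already been done in Theorems~\ref{thm1} and~\ref{thm2}. The only item requiring mild care is ensuring the denominator $1-\delta_{M,k}-\delta_{M,2k}$ is positive, which follows from the BRIC regime imposed in Theorem~\ref{thm_main} (via~\eqref{eq::main1} and Lemma~\ref{prop5}); otherwise the lower bound from Lemma~\ref{lemma_proj} would be vacuous. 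I would state this regime at the start of the argument and then proceed with the chain of inequalities above.
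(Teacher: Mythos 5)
Your proof is correct and follows essentially the same route as the paper's: bound $\|\by_r^l\|_2$ above by $\sqrt{1+\delta_{M,k}}\,\|\bc^*_{T^*-T^l}\|_2$, chain Theorem~\ref{thm2} and then Theorem~\ref{thm1}, and bound $\|\by_r^{l-1}\|_2$ below via Lemma~\ref{lemma_proj} and the BRIP, multiplying the factors to get $\rho_k$. The only cosmetic differences are that you obtain the first inequality from least-squares optimality of the projection (comparing against the candidate coefficients $\bc^*_{T^l}$) where the paper uses linearity and non-expansiveness of $\resid$, and that you make explicit the positivity of $1-\delta_{M,k}-\delta_{M,2k}$, which the paper leaves implicit.
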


Theorem \ref{thm3} is proved in Appendix~\ref{app:thm3}.
Theorem \ref{thm3} states that by running GPSP (Algorithm \ref{alg_GPSP}), the $\ell_2$ norm of the residual is controlled by that of the residual in the previous iteration. Notice that if $\rho_k<1$, the size of the residual decays in each iteration of GPSP. 
Hence, a sufficient condition for the convergence of GPSP is  $\rho_k<1$. Using Lemma~\ref{prop5}, we have 
\begin{align*}
\beta_{k} &= \frac{\delta_{M,2k}(1-\delta_{M,k}+\delta_{M,2k})}{(1-\delta_{M,k})^2}\left(\sqrt{\frac{2(1+\delta_{M,1})}{1-\delta_{M,1}}}+1\right)\leq\frac{\delta_{M,2k}(1+\delta_{M,2k})}{(1-\delta_{M,2k})^2}\left(\sqrt{\frac{2(1+\delta_{M,2k})}{1-\delta_{M,2k}}}+1\right)\\
&=\frac{\delta_{M,2k}(1+\delta_{M,2k})}{(1-\delta_{M,2k})^2}\times\frac{\sqrt{2(1-\delta^2_{M,2k})}+1-\delta_{M,2k}}{1-\delta_{M,2k}}\leq \frac{\delta_{M,2k}(1+\delta_{M,2k})}{(1-\delta_{M,2k})^2}\times\frac{3-\delta_{M,2k}}{1-\delta_{M,2k}}\\
=&\frac{\delta_{M,2k}(1+\delta_{M,2k})(3-\delta_{M,2k})}{(1-\delta_{M,2k})^3}
\end{align*}
and
\begin{align*}
\mu_k&=1+ \frac{\sqrt{2}\delta_{M,2k}}{1-\delta_{M, 2k}}\left(1+\sqrt{\frac{1+\delta_{M,1}}{1-\delta_{M,1}}}\right)\leq 1+ \frac{\sqrt{2}\delta_{M,2k}}{1-\delta_{M, 2k}}\left(1+\sqrt{\frac{1+\delta_{M,2k}}{1-\delta_{M,2k}}}\right)\\
&= 1+ \frac{\sqrt{2}\delta_{M,2k}}{1-\delta_{M, 2k}}\times\frac{1-\delta_{M,2k}+\sqrt{1-\delta^2_{M,2k}}}{1-\delta_{M,2k}}\leq\frac{(1-\delta_{M,2k})^2+2\delta_{M,2k}(2-\delta_{M,2k})}{(1-\delta_{M,2k})^2}\\
&=\frac{-\delta_{M,2k}^2+2\delta_{M,2k}+1}{(1-\delta_{M,2k})^2}.
\end{align*}
Therefore, we arrive at
$$\rho_k
\leq \frac{\delta_{M,2k}(1+\delta_{M,2k})(3-\delta_{M,2k})(-\delta_{M,2k}^2+2\delta_{M,2k}+1)}{(1-\delta_{M,2k})^6}$$
and the right-hand side of the inequality above is exactly $C_{M,k}$ defined in~\eqref{eq::main1}, which proves  Theorem~\ref{thm_main}.

\subsection{Outline of the proof for Theorem~\ref{thm::perturb}}\label{sec::proof2}

The proof for Theorem~\ref{thm::perturb} is analogous to \cite[Proof for Theorem 2]{dai2009subspace}. With the measurement perturbation, we shall first bound the recovery distortion as follows.
\begin{theorem}\label{thm.inaccurate}
Suppose $\bc^*\in\mathbb{R}^{GM}$ is a block $k$-sparse vector, i.e., $\|\bc^*\|_{1,0}\leq k\leq G$, whose block support is $T$.  Let $\by=\bA\bc+\be$ where $\bA\in\mathbb{R}^{N\times GM}$ satisfies the BRIP with BRIC $\delta_{M,2k}$ and $\be\in\mathbb{R}^{GM}$ is a perturbation. Then
\begin{align}
	\|\bc^*-\widehat{\bc}\|_2\leq \frac{1+\delta_{M,2k}-\delta_{M,k}}{1-\delta_{M,k}}\|\bc^*_{T^*-\widehat{T}}\|_2 + \frac{1}{\sqrt{1-\delta_{M,k}}}\|\be\|_2.
\end{align}
\end{theorem}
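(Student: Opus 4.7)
The plan is to directly exploit the fact that the GPSP reconstruction is given by the least-squares projection $\widehat{\bc}_{\widehat{T}}=\bA_{\widehat{T}}^\dagger \by$ on the selected support and zero elsewhere. First, I would decompose $\|\bc^*-\widehat{\bc}\|_2$ along the index sets $\widehat{T}$ and $T^*\setminus\widehat{T}$, which are disjoint and together cover the support of $\bc^*-\widehat{\bc}$. By the triangle inequality (applied after observing orthogonality across the two index blocks),
\begin{align*}
\|\bc^*-\widehat{\bc}\|_2 \le \|\bc^*_{\widehat{T}}-\widehat{\bc}_{\widehat{T}}\|_2 + \|\bc^*_{T^*\setminus\widehat{T}}\|_2.
\end{align*}
So the task reduces to controlling the first term.

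Next, I would write $\by=\bA_{\widehat{T}}\bc^*_{\widehat{T}}+\bA_{T^*\setminus\widehat{T}}\bc^*_{T^*\setminus\widehat{T}}+\be$ and apply $\bA_{\widehat{T}}^\dagger$. Since BRIP with $\delta_{M,2k}<1$ forces $\bA_{\widehat{T}}$ to have full column rank (Lemma \ref{prop4}), we get $\bA_{\widehat{T}}^\dagger\bA_{\widehat{T}}=I$, and therefore
\begin{align*}
\widehat{\bc}_{\widehat{T}}-\bc^*_{\widehat{T}}=\bA_{\widehat{T}}^\dagger\bA_{T^*\setminus\widehat{T}}\bc^*_{T^*\setminus\widehat{T}}+\bA_{\widehat{T}}^\dagger\be.
\end{align*}
I would bound the two summands separately. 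For the first, using $\bA_{\widehat{T}}^\dagger=(\bA_{\widehat{T}}^\top\bA_{\widehat{T}})^{-1}\bA_{\widehat{T}}^\top$, Lemma \ref{prop4} gives $\|(\bA_{\widehat{T}}^\top\bA_{\widehat{T}})^{-1}\|_{\mathrm{op}}\le 1/(1-\delta_{M,k})$, while the cross-correlation Lemma \ref{lemma_inner}, applied with $I=\widehat{T}$ and $J=T^*\setminus\widehat{T}$ (disjoint with $|I|+|J|\le 2k$), yields $\|\bA_{\widehat{T}}^\top\bA_{T^*\setminus\widehat{T}}\bc^*_{T^*\setminus\widehat{T}}\|_2\le\delta_{M,2k}\|\bc^*_{T^*\setminus\widehat{T}}\|_2$. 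Multiplying these gives the bound $\delta_{M,2k}/(1-\delta_{M,k})\cdot\|\bc^*_{T^*\setminus\widehat{T}}\|_2$ on the first summand. For the second, $\|\bA_{\widehat{T}}^\dagger\|_{\mathrm{op}}\le 1/\sqrt{1-\delta_{M,k}}$ from Lemma \ref{prop4} produces $\|\bA_{\widehat{T}}^\dagger\be\|_2\le\|\be\|_2/\sqrt{1-\delta_{M,k}}$.

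Finally, substituting both bounds back into the triangle-inequality decomposition gives
\begin{align*}
\|\bc^*-\widehat{\bc}\|_2\le\Bigl(1+\frac{\delta_{M,2k}}{1-\delta_{M,k}}\Bigr)\|\bc^*_{T^*\setminus\widehat{T}}\|_2+\frac{1}{\sqrt{1-\delta_{M,k}}}\|\be\|_2,
\end{align*}
and the coefficient collapses to $(1+\delta_{M,2k}-\delta_{M,k})/(1-\delta_{M,k})$, matching the statement. The argument is almost entirely algebraic once the right splitting is chosen; the only subtlety is being careful that the index sets used in Lemma \ref{lemma_inner} are disjoint and of total size at most $2k$, which is automatic because $\widehat{T}$ and $T^*$ each have cardinality at most $k$. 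Consequently, I expect no real obstacle: the main care is in choosing the correct pseudo-inverse identities and keeping track of whether $\delta_{M,k}$ or $\delta_{M,2k}$ is invoked at each step, so that the resulting coefficient is sharp enough to match the stated bound rather than a looser variant obtained by using $\delta_{M,2k}$ throughout.
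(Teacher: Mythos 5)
Your proposal is correct and follows essentially the same route as the paper: the same initial split of $\|\bc^*-\widehat{\bc}\|_2$ into the error on $\widehat{T}$ plus $\|\bc^*_{T^*-\widehat{T}}\|_2$, the same use of $\bA_{\widehat{T}}^\dagger\bA_{\widehat{T}}=I$ to isolate the cross term $\bA_{\widehat{T}}^\dagger\bA_{T^*-\widehat{T}}\bc^*_{T^*-\widehat{T}}$, the same invocation of Lemma~\ref{lemma_inner} with the Gram-inverse bound $1/(1-\delta_{M,k})$, and the same noise bound $\|\bA_{\widehat{T}}^\dagger\be\|_2\leq\|\be\|_2/\sqrt{1-\delta_{M,k}}$ (which the paper derives via $\|\proj(\be,\bA_{\widehat{T}})\|_2\leq\|\be\|_2$ rather than your equivalent singular-value argument). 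The only differences are cosmetic reorderings of the algebra, and your final constants match the statement exactly.
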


Theorem \ref{thm.inaccurate} is proved in Appendix~\ref{app:inacc}.
The error bound in Theorem~\ref{thm.inaccurate} consists of two components: the magnitude of the true coefficients associated with the features missing from $\widehat{T}$, and the magnitude of the perturbation. In particular, when the features are correctly chosen, i.e., $\widehat{T}=T^*$, the recovery distortion reduces to $0$ as the perturbation vanishes.

Next, we prove that during the iteration of GPSP, the magnitude of the true coefficients associated with the features missing from the $l$-th iteration can be controlled by that from the previous iteration plus the magnitude of the perturbation.

\begin{theorem}\label{thm.inaccurate2}
Under the conditions of Theorem~\ref{thm.inaccurate} and during the $l$-th iteration of GPSP, it holds that
\begin{align}
	\|\bc^*_{T^*-T^l}\|_2
	\leq D_{M,k}\|\bc^*_{T^*-T^{l-1}}\|_2+E_{M,k}\|\be\|_2,
	\label{eq::thm_inacc2}
\end{align}
where $D_{M,k}$ and $E_{M,k}$ are constants independent of  $l$. In particular
\begin{align}
	D_{M,k}&=\frac{6\sqrt{2}\delta^2_{M,2k}(1+\delta_{M,2k})(2-\delta_{M,2k})}{(1-\delta_{M,2k})^5},\\
	E_{M,k}&=\frac{12\delta_{M,2k}(1+\delta_{M,2k})^2+(2-\delta_{M,2k})(1-\delta_{M,2k})^3}{(1-\delta_{M,2k})^5}.
\end{align}
Furthermore, if  the BRIC $\delta_{M,2k}$ of $\bA$ satisfies 
\begin{align}
	F_{M,k}=\frac{\delta_{M,2k}}{1-\delta_{M,2k}}+\frac{\sqrt{1+\delta_{M,2k}}D_{M,k}+\delta_{M,2k}\left(\sqrt{1+\delta_{M,2k}}E_{M,k}+2\right)}{\sqrt{1-\delta_{M,2k}}}<1,
	\label{eq::main2}
\end{align}
then $\|\by_r^l\|_2<\|\by_r^{l-1}\|_2$ whenever $\|\be\|_2\leq \delta_{M,2k}\|\bc^*_{T^*-T^{l-1}}\|_2$ for the $l$-th iteration.
\end{theorem}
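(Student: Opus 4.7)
The plan is to mirror the three-stage template used in Section~\ref{sec::proof1} for the noise-free Theorem~\ref{thm_main}, but to carry an explicit $\|\be\|_2$ contribution through each estimate. Specifically, I would prove noisy analogs of Theorem~\ref{thm1} (expansion) and Theorem~\ref{thm2} (shrinking), compose them to obtain \eqref{eq::thm_inacc2}, and then prove a noisy analog of Theorem~\ref{thm3} to deduce the residual decrease under the hypothesis $\|\be\|_2 \leq \delta_{M,2k}\|\bc^*_{T^*-T^{l-1}}\|_2$.

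For the noisy expansion step, the proof of Theorem~\ref{thm1} compares $\|\proj(\by_r^{l-1},\bF_g)\|_2$ for $g \in T^*-T^{l-1}$ against $g \notin T^* \cup T^{l-1}$, using Lemma~\ref{lemma_inner} and Lemma~\ref{lemma_proj}. When $\by = \bA\bc^*+\be$, the residual picks up an extra term $\resid(\be,\bA_{T^{l-1}})$, whose projection onto any $\bF_g$ is bounded by $\sqrt{1+\delta_{M,2k}}\|\be\|_2$ (via $\|\bF_g^\top \be\|_2$ and Lemma~\ref{prop4}). Propagating this through the original chain of inequalities yields an estimate of the form $\|\bc^*_{T^*-\wT^l}\|_2 \leq \beta_k \|\bc^*_{T^*-T^{l-1}}\|_2 + \beta_k'\|\be\|_2$ with $\beta_k$ as in Theorem~\ref{thm1} and an explicit $\beta_k'$. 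Analogously, adapting the proof of Theorem~\ref{thm2} to the noisy $\bx_p^l = \bA_{\wT^l}^\dagger(\bA\bc^*+\be)$ produces $\|\bc^*_{T^*-T^l}\|_2 \leq \mu_k\|\bc^*_{T^*-\wT^l}\|_2 + \mu_k'\|\be\|_2$. Chaining the two and invoking Lemma~\ref{prop5} to collapse all mixed BRIC indices ($\delta_{M,1},\delta_{M,k} \to \delta_{M,2k}$), exactly as in the final algebra of Section~\ref{sec::proof1}, should produce the stated closed-form $D_{M,k}$ and $E_{M,k}$.

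For the residual comparison I would proceed as in Theorem~\ref{thm3}, writing $\by_r^l = \resid(\by,\bA_{T^l})$ and using $\|\by_r^l\|_2 \leq \|\bA(\bc^*-\widehat{\bc}^{(l)})\|_2 + \|\resid(\be,\bA_{T^l})\|_2$. The first term is controlled via $\bA_{T^*\cup T^l}$ restricted norms and Theorem~\ref{thm.inaccurate} applied to the current iterate, producing a factor of roughly $\sqrt{1+\delta_{M,2k}}\,\|\bc^*_{T^*-T^l}\|_2 + 2\|\be\|_2$; substituting the bound~\eqref{eq::thm_inacc2} then expresses everything in terms of $\|\bc^*_{T^*-T^{l-1}}\|_2$ and $\|\be\|_2$. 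To convert back into $\|\by_r^{l-1}\|_2$, I would use the reverse inequality $\|\by_r^{l-1}\|_2 \geq \sqrt{1-\delta_{M,2k}}\,\|\bc^*_{T^*-T^{l-1}}\|_2 - \|\be\|_2$ (itself a consequence of Lemma~\ref{prop4} applied on $T^*\cup T^{l-1}$) and then the assumption $\|\be\|_2 \leq \delta_{M,2k}\|\bc^*_{T^*-T^{l-1}}\|_2$, which simultaneously bounds $\|\be\|_2$ by $\frac{\delta_{M,2k}}{\sqrt{1-\delta_{M,2k}}-\delta_{M,2k}}\|\by_r^{l-1}\|_2$ and $\|\bc^*_{T^*-T^{l-1}}\|_2$ by $(\sqrt{1-\delta_{M,2k}}-\delta_{M,2k})^{-1}\|\by_r^{l-1}\|_2$. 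Collecting coefficients should reproduce exactly the quantity $F_{M,k}$ in~\eqref{eq::main2}, giving $\|\by_r^l\|_2 \leq F_{M,k}\|\by_r^{l-1}\|_2$.

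The principal obstacle will be bookkeeping. Each invocation of Lemmas~\ref{lemma_inner} and~\ref{lemma_proj} produces a signal-dependent and a noise-dependent term, and these proliferate multiplicatively when the expansion bound is substituted into the shrinking bound and then into the residual bound. Extracting the particular closed forms for $D_{M,k}$, $E_{M,k}$, and $F_{M,k}$ requires systematic use of Lemma~\ref{prop5} to unify $\delta_{M,1},\delta_{M,k},\delta_{M,2k}$ under the single symbol $\delta_{M,2k}$, together with the crude bounds $\sqrt{1-\delta_{M,2k}^2}+(1-\delta_{M,2k}) \leq 3-\delta_{M,2k}$ of the type already used in Section~\ref{sec::proof1}, in order to get each numerator into a polynomial of the stated degree over $(1-\delta_{M,2k})^5$.
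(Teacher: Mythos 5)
Your first two stages are precisely the paper's own proof: it establishes a noisy expansion lemma $\|\bc^*_{T^*-\wT^l}\|_2\leq a\|\bc^*_{T^*-T^{l-1}}\|_2+b\|\be\|_2$ and a noisy shrinking lemma $\|\bc^*_{T^*-T^l}\|_2\leq c\|\bc^*_{T^*-\wT^l}\|_2+d\|\be\|_2$ by exactly the perturbation-propagation you describe, chains them, and collapses all mixed indices with Lemma~\ref{prop5} to reach $D_{M,k}$ and $E_{M,k}$; no issue there. The genuine gap is in your residual comparison. You bound $\|\by_r^l\|_2$ by invoking Theorem~\ref{thm.inaccurate} on the current iterate; writing $\delta=\delta_{M,2k}$, this produces a coefficient $\sqrt{1+\delta}\cdot\frac{1+\delta-\delta_{M,k}}{1-\delta_{M,k}}\geq(1+\delta)^{3/2}$ in front of $\|\bc^*_{T^*-T^l}\|_2$ and $\sqrt{(1+\delta)/(1-\delta_{M,k})}+1$ in front of $\|\be\|_2$ --- not the ``roughly $\sqrt{1+\delta}$'' and ``$2\|\be\|_2$'' your sketch asserts. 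Carrying your constants through the final comparison (even granting your lower bound on $\|\by_r^{l-1}\|_2$, which is correct) yields a sufficient condition $F'<1$ with
\begin{align*}
F'=\frac{\sqrt{1+\delta}\left(D_{M,k}+\delta E_{M,k}\right)}{(1-\delta)^{3/2}}+\frac{\delta\sqrt{1+\delta}}{1-\delta}+\frac{2\delta}{\sqrt{1-\delta}},
\end{align*}
and one checks $F'-F_{M,k}=\frac{\delta\sqrt{1+\delta}\left(D_{M,k}+\delta E_{M,k}\right)}{(1-\delta)^{3/2}}+\frac{\delta\left(\sqrt{1+\delta}-1\right)}{1-\delta}>0$ for all $\delta\in(0,1)$. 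So your condition is strictly more restrictive than $F_{M,k}<1$: a matrix satisfying the theorem's hypothesis need not satisfy $F'<1$, and the theorem --- whose entire content is the explicit constant --- is not established by this route.

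The repair is to bound the residual directly rather than through Theorem~\ref{thm.inaccurate}, as the paper does: since $\resid(\bA_{T^*\cap T^l}\bc^*_{T^*\cap T^l},\bA_{T^l})=\mathbf{0}$ and $\|\resid(\bv,\bA_{T^l})\|_2\leq\|\bv\|_2$ for any $\bv$, one has $\|\by_r^l\|_2\leq\|\bA_{T^*-T^l}\bc^*_{T^*-T^l}\|_2+\|\resid(\be,\bA_{T^l})\|_2\leq\sqrt{1+\delta_{M,k}}\,\|\bc^*_{T^*-T^l}\|_2+\|\be\|_2$, which is where the ``$+2$'' inside $F_{M,k}$ actually comes from once the $-\|\be\|_2$ of the lower bound is moved across. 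Incidentally, your lower bound $\|\by_r^{l-1}\|_2\geq\sqrt{1-\delta}\,\|\bc^*_{T^*-T^{l-1}}\|_2-\|\be\|_2$ (via writing $\resid(\bA_{T^*-T^{l-1}}\bc^*_{T^*-T^{l-1}},\bA_{T^{l-1}})=\bA_{T^*\cup T^{l-1}}\bz$ with $\bz$ agreeing with $\bc^*$ on $T^*-T^{l-1}$ and applying Lemma~\ref{prop4}) is valid and in fact sharper than the paper's Lemma~\ref{lemma_proj}-based bound $\sqrt{1-\delta}\bigl(1-\frac{\delta}{1-\delta_{M,k}}\bigr)\|\bc^*_{T^*-T^{l-1}}\|_2-\|\be\|_2$, whose deficit is exactly the $\frac{\delta}{1-\delta}$ term in $F_{M,k}$; combined with the direct upper bound it would prove the theorem with a slightly better constant, which is harmless. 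It is only the Theorem~\ref{thm.inaccurate} detour in the upper bound that puts the constants on the wrong side of the stated $F_{M,k}$.
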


Theorem \ref{thm.inaccurate2} is proved in Appendix~\ref{sec::proof_inaccurate2}.
Theorem~\ref{thm.inaccurate2}  states a sufficient condition for the decay of the residuals during one iteration of GPSP. Under the requirement of Theorem~\ref{thm.inaccurate2}, for sufficiently small perturbations, GPSP always decreases the size of residuals during each iteration. In case of large perturbations,  GPSP can still keep decaying the residuals during each iteration if the true signal has a sufficiently large $\ell_2$ norm. Theorem~\ref{thm.inaccurate2}  characterizes sufficient conditions for the convergence of GPSP when the measurements are perturbed.

Figure~\ref{fig::condition} (c) shows the plot of $F_{M,k}$. To finish the proof for Theorem~\ref{thm::perturb}, using the bisection method, we find that when $\delta_{M,2k}<0.0916$, the inequality~\eqref{eq::main2} holds. We then note that if $\delta_{M,2k}<0.0916$, Theorem~\ref{thm.inaccurate2} implies that $\|\be\|_2> \delta_{M,2k}\|\bc^*_{T^*-T^{l-1}}\|_2$ when GPSP terminates, i.e., $\|\by_r^l\|_2\geq \|\by_r^{l-1}\|$. By Theorem~\ref{thm.inaccurate}, we deduce the relation~\eqref{eq::main_perturb}, thus proving Theorem~\ref{thm::perturb}.
\section{Relations to Other Greedy Algorithms}\label{sec::comparison}

\begin{table}[t!]
\centering
\begin{tabular}{c||c|c|c|}
	\hline
	{\bf Algorithm}&{\bf Initialization}&{\bf Expanding}&{\bf Shrinking}\\\hline
	BOMP~\cite{eldar2010block}   &$\varnothing$&$\arg\max_{g}\|\bF_g^\top\by_r^{l-1}\|_2$ & -  \\\hline
	BOMPR~\cite{fu2014block}   &$\varnothing$&$\arg\min_{g}\|\by_r^{l-1}-\proj(\by_r^{l-1},\bF_g)\|_2$ & -  \\\hline
	BCoSaMP~\cite{zhang2019recovery}&$\varnothing$&$\arg\text{top}^{2k}_g\|\bF_g^{\top}\by_r^{l-1}\|_2$&$\arg\text{top}^k_g\|\bx_p^l[g]\|_2$\\\hline
	BSP~\cite{kamali2013block}&$\arg\text{top}^k_g\|\bF_g^{\top}\by\|_2$&$\arg\text{top}^k_g\|\bF_g^{\top}\by_r^{l-1}\|_2$&$\arg\text{top}^k_g\|\bx_p^l[g]\|_2$\\\hline
	GPSP~\cite{he2023group}&$\arg\text{top}^k_g\|\proj(\by,\bF_g)\|_2$&$\arg\text{top}^k_g\|\proj(\by_r^{l-1},\bF_g)\|_2$&$\arg\text{top}^k_g\|\bF_g\bx_p^l[g]\|_2$\\\hline
\end{tabular}
\caption{Comparison among different greedy algorithms for block-sparse signal recovery in terms of their respective operations during initialization, expanding, and shrinking. These stages are parallel with the those marked in Algorithm~\ref{alg_GPSP}. Here $\bF_g$ is the $g$-th block, $\by_r^{l-1}$ is the residual in the previous iteration, and $\arg\text{top}^k_g$ means taking the indices with the top $k$ values. }
\label{tab.diff}
\end{table}
There are various greedy algorithms developed for recovering block sparse signals from given observations. While from different perspectives, most of these algorithms share a common pattern: after initialization, they proceed iteratively by expanding and shrinking the pool of candidate features.

In the initialization stage, some algorithms may start with none, e.g., BOMP~\cite{eldar2010block}, BOMPR~\cite{fu2014block}, and BCoSaMP~\cite{zhang2019recovery}, and others have initial guesses, e.g., BSP~\cite{kamali2013block} and GPSP~\cite{he2023group}. During the expansion stage, they select candidates according to different criteria. Algorithms like BOMP and BOMPR select one candidate at a time, whereas BCoSaMP, BSP, and GPSP select multiple candidates. During the shrinking stage, BOMP and BOMPR do not exclude chosen candidates; BCoSaMP, BSP, and GPSP refine their selections by removing chosen candidates according to certain rules. Table~\ref{tab.diff} summarizes these discrepancies among  BOMP, BOMPR, BCoSaMP, BSP, and GPSP for a systematic comparison.

Based on Table~\ref{tab.diff}, it is clear that there are two types of criteria in both expanding and shrinking stages of the aforementioned algorithms. During the expanding stage, the correlations between the residual vector with block features, i.e., matrices with multiple columns,  are measured by (i) inner product criterion (IPC), used by BOMP, BCoSaMP, and BSP, or (ii) subspace projection criterion (SPC), used by BOMPR and GPSP.   During the shrinking stage, the importance of a feature is measured via (i) regression coefficient criterion (RCC), used by BCoSaMP and BSP, or (ii) response magnitude criterion (RMC), used by GPSP. BOMP and BOMPR do not remove candidate features once they are selected.

In the following, we discuss these two sets of criteria with simple examples and illustrate their effects. We conduct comprehensive numerical comparisons among these algorithms in Section~\ref{sec::numerical}.

\subsection{Feature inclusion criteria: IPC vs. SPC}\label{subsec::inner_proj}
For all the greedy algorithms summarized in Table~\ref{tab.diff}, correlations between a vector $\bd\in\mathbb{R}^N$ ($\by$ during the initialization or $\by_r^{l-1}$ during each iteration) and individual blocks are measured during the respective expansion stage. 

For IPC, the correlation is measured by the $\ell_2$ norm of the inner product: $\|\bF_g^{\top}\bd\|_2$, where $\bF_g\in\mathbb{R}^{N\times M}$. Assume that the blocks are normalized so that each column has unit $\ell_2$-norm. Then $\|\bF_g^\top\bd\|_2/\|\bd\|_2$ is the $\ell_2$-norm of the vector of cosine similarities between $\bd$ and each column of $\bF_g$, i.e., 
\begin{align}
\|\bF_g^\top\bd\|_2=\sqrt{\sum_{m=1}^M\cos^2\theta_{g,m}}\times\|\bd\|_2\label{eq_inner_score}
\end{align}
where $\theta_{g,m}$ is the angle between $\bd$ and the $m$-th column of $\bF_g$.  For SPC, the correlation is measured by the $\ell_2$-norm of the projection of $\bd$ to the column space of each block
\begin{align}\|\proj(\bd,\bF_g)\|_2.\label{eq_project_score}
\end{align} 
Let $\mathcal{V}_g$ denote the column space of $\bF_g$, and suppose $\|\proj(\bd,\bF_g)\|_2>0$. It is clear that 
\begin{align*}
\proj(\bd,\bF_g)/\|\proj(\bd,\bF_g)\|_2=\argmax_{\bx\in \mathcal{V}_g, \|\bx\|_2=1}\left|\bx^\top\bd\right|.
\end{align*}
This implies that~\eqref{eq_inner_score} yields large values if all the columns in $\bF_g$ are located close to the projection of $\bd$ to $\mathcal{V}_g$. In case $\text{dim}(\mathcal{V}_g)=1$,  then the correlation based on~\eqref{eq_inner_score} and projection differ by a multiplicative factor $\sqrt{M}$. If $\text{dim}(\mathcal{V}_g)>1$, then $\mathcal{V}_g\cap \bd^{\perp}\neq \varnothing$, and we have
\begin{align}
\min_{\bx\in \mathcal{V}_g, \|\bx\|_2=1}\left|\bx^\top\bd\right|/\|\bd\|_2=0\;,\label{eq_range}
\end{align}
where $\bd^{\perp}$ is the orthogonal space of $\bd$.
Therefore, with the subspace $\mathcal{V}_g$ fixed and if its dimension is greater than $1$, we may rotate columns of $\bF_g$ inside $\mathcal{V}_g$ and deduce the range of possible values for~\eqref{eq_inner_score} as follows
\begin{align}
0<\|\bF_g^\top\bd\|_2< \sqrt{M}\|\proj(\bd,\bF_g)\|_2\;.\label{eq_relation}
\end{align}
In other words,  IPC depends on the configuration of columns of $\bF_g$ in $\mathcal{V}_g$ whenever $\text{dim}(V_g)>1$, whereas SPC is invariant to this variation; see Figure \ref{fig.criteria} for an illustration.

Based on the discussion above, we deduce that both IPC and SPC can ignore blocks whose column spaces are almost orthogonal to $\bd$; however, algorithms using the IPC are prone to miss correct features during the expansion stage. On one hand, since~\eqref{eq_inner_score} is bounded above by the magnitude of the projection of $\bd$ to $\mathcal{V}_g$, as shown in~\eqref{eq_relation},  if the correlation measured via SPC is low, then the value computed by IPC is also low. On the other hand, even if the correlation measured by projection is high,  depending on the configuration of the columns in $\bF_g$, the score by IPC~\eqref{eq_inner_score} can be low. Consequently, if some columns of a block are close to the orthogonal complement of $\bd$,  the corresponding feature is more likely to be ignored by IPC during the expansion stage. We present a simple example to illustrate this point.

\begin{figure}
\centering
\begin{tabular}{cc}
	(a)&(b)\\
	\includegraphics[width=0.25\textwidth]{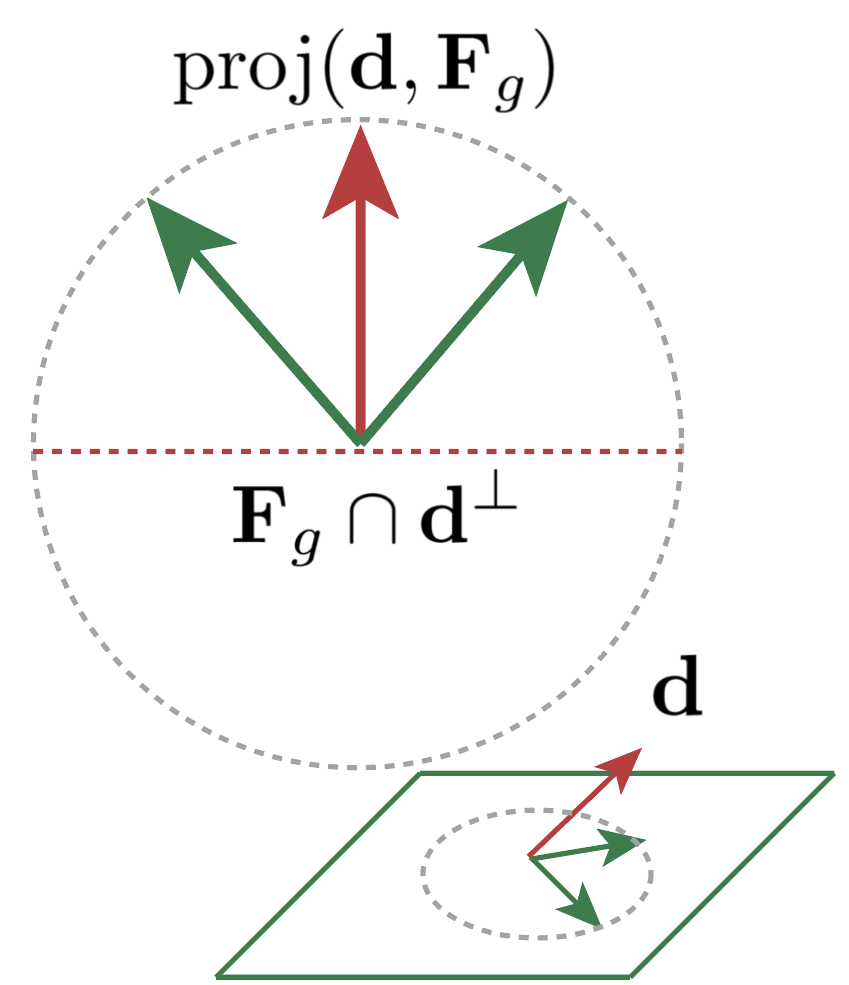}&
	\includegraphics[width=0.25\textwidth]{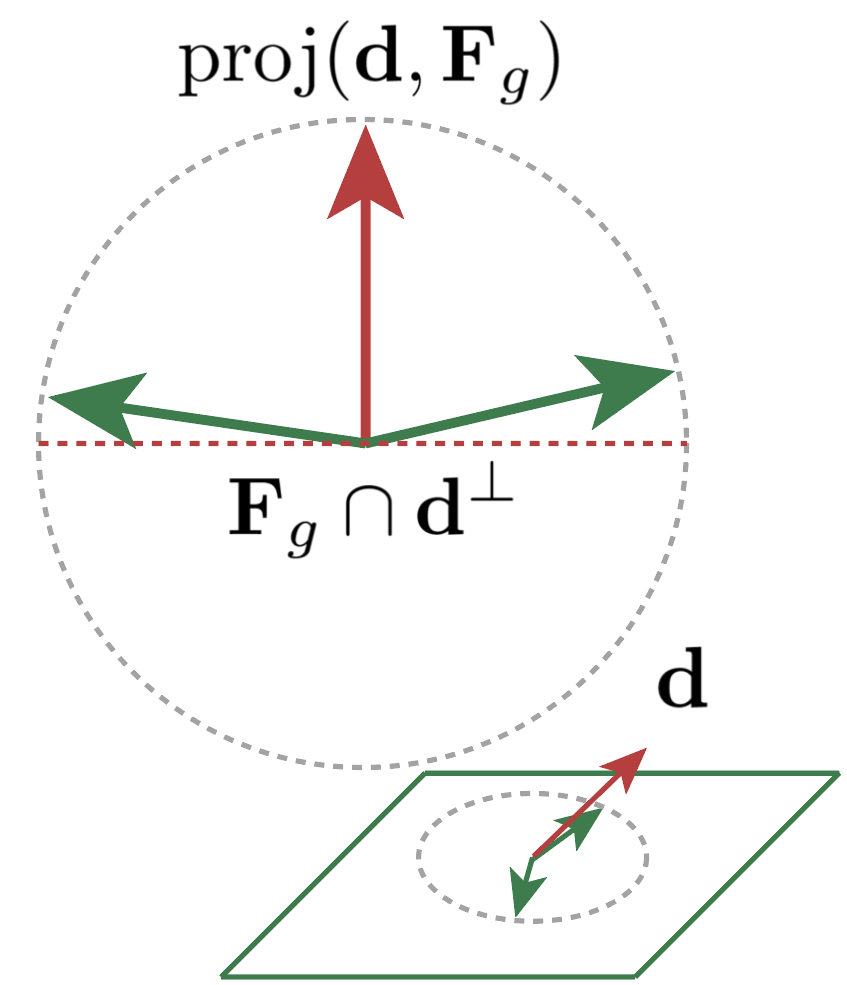}
\end{tabular}
\caption{Influence of the configuration of columns in a block on the correlation measured by inner-product~\eqref{eq_inner_score}. The subspaces spanned by the columns of $\bF_g$ (green arrows) in  (a) and (b) are identical. Since the column vectors are closer to $\bF_g\cap\bd^\perp$, the score by IPC~\eqref{eq_inner_score} for (b) is lower than that for (a); the block $\bF_g$ is considered inferior in (b) compared to (a). In contrast, the score by SPC~\eqref{eq_project_score} remains the same for (a) and (b). }
\label{fig.criteria}
\end{figure}

\begin{example}\label{ex.1} 
Let $\bA=\begin{bmatrix}
	\bF_1 & \bF_2 & \bF_3
\end{bmatrix} $ with
\begin{align*}
	\bF_1=\begin{bmatrix}
		\frac{1}{\sqrt{2}} & 0 \\ 0 & 1 \\ \frac{1}{\sqrt{2}} & 0
	\end{bmatrix}, 
	\ \bF_2=\begin{bmatrix}
		\frac{1}{\sqrt{2}} & 0 \\ \frac{1}{\sqrt{2}} & 1 \\ 0 &0
	\end{bmatrix}, 
	\ \bF_3=\begin{bmatrix}
		1 & 0\\ 0 & 0 \\ 0 &1
	\end{bmatrix} \quad \mbox{ and } \quad
	\bd=\begin{bmatrix}
		\frac{1}{\sqrt{2}} \\ 4 \\ \frac{1}{\sqrt{2}}
	\end{bmatrix}.
\end{align*}
In this case, the unique solution  to $\bA\bc=\bd$ with block-sparsity one is $\bc^*=\begin{bmatrix}
	\bc_1 & \bc_2 & \bc_3
\end{bmatrix}^{\top}$ with $\bc_1=\begin{bmatrix}
	1 & 4
\end{bmatrix}$ and $\bc_2=\bc_3=\mathbf{0}$.
If we use~\eqref{eq_inner_score} to quantify the correlation between $\bd$ and individual blocks, we have
\begin{align*}
	\|\bF_1^{\top}\bd\|_2=\sqrt{17}\approx 4.12, \quad \|\bF_2^{\top}\bd\|_2=\sqrt{\left(\frac{1}{2}+2\sqrt{2}\right)^2+16}\approx 5.20, \quad \|\bF_3^{\top}\bd\|_2=1.00.
\end{align*}
Hence, algorithms such as BOMP and BSP will take the second block as the optimal initial candidate. In contrast, if we use~\eqref{eq_project_score}, we observe that
\begin{align*}
	\|\proj(\bd,\bF_1)\|_2\approx4.12, \quad \|\proj(\bd,\bF_2)\|_2\approx4.06, \quad \|\proj(\bd,\bF_3)\|_2\approx1.00.
\end{align*}
This indicates that algorithms such as BOMPR and GPSP prefer the correct first block as the optimal initial candidate.
\end{example}

The example above shows a typical impact caused by configurations of columns within individual blocks. Note that the column space of $\bF_3$ is relatively far from $\by$, thus both criteria rule out the third block. Since the second dimension of $\bd$ has a large value, and the first column of $\bF_1$ has zero in the corresponding dimension, the angle between them is large; this puts $\bF_1$ in an inferior position by~\eqref{eq_inner_score}. Although the second block cannot be the true feature, as its third row contains only zeros, the presence of the non-zero second row makes it  preferable according to~\eqref{eq_inner_score}. In the case of BOMP, this immediately leads to a wrong choice of feature.
In the following, we compare different criteria for feature exclusion.

\subsection{Feature exclusion criteria: RCC vs. RMC}\label{subsec::coef_response}

In the shrinking stages of the algorithms summarized in Table~\ref{tab.diff}, there are two types of criteria. 
Let $\bx\in\mathbb{R}^{MG}$ be a reconstruction during the iteration, i.e., $\bx_T=\left(\bA^\top_T\bA_T\right)^{-1}\bA^\top_T\by$  for some $T\subset\{1,2,\dots,G\}$ and entries associated with other blocks are filled with zeros. The corresponding approximation for the observed data $\by$ is $\bz = \bA\bx=\sum_{g=1}^G\bF_g\bx[g]=\sum_{g=1}^G\bz_g$ with $\bz_g=\bF_g\bx[g]$. To determine which blocks among $\{\bF_g\}_{g\in T}$ are more important, the criterion RCC, used by BCoSaMP and BSP, evaluates the $2$-norm of $\bx[g]\in\mathbb{R}^{M}$, $g=1,2,\dots,G$, the subvector of $\bx$ corresponding to the $g$-th block;  and the criterion RMC, used by GPSP, computes the $2$-norm of $\bz_g$ for $g=1,2,\dots,G$.

These two criteria are closely related. When the sampling matrix satisfies BRIP~\eqref{eq_BRIP_condition} with a small BRIC, these criteria can choose similar features to remove.   
By the BRIP~\eqref{eq_BRIP_condition} presumed for $\bA$, for each $g=1,2,\dots,G$, we have
\begin{align}
\left|\|\bz_g\|_2^2-\|\bx[g]\|_2^2\right|\leq \delta_{M}\|\bx[g]\|_2^2
\end{align}
for some $\delta_{M}\in(0,1)$. When $\delta_{M}$ is close to $0$,  RCC and RMC yield similar results. When $\delta_{M}$ is close to $1$, they may yield different features to remove.  In particular, for some fixed $g\in T$, if $\|\bx[g]\|_2^2$ is sufficiently small, then $\|\bz_g\|_2^2$ is also small; this implies that if BCoSaMP or BSP decides to remove the $g$-th feature from the pool, GPSP tends to do the same. If $\|\bx[g]\|_2^2$ is large, i.e., BCoSaMP or BSP prefers to reserve the $g$-th feature, while GPSP may find this feature to be unimportant. Overall, RMC is more stringent than RCC.

However, if the observation vector $\by$ contains perturbations, RCC and RMC can behave differently. When we compute $\bx$, since $\by$ is not exact, wrong features may be used to fit the perturbation. In particular, if the perturbation vector can be represented by some wrong features and has large angles with the columns of these features, $\bx[g]$ corresponding to these features can have very large magnitude, leading to the regression coefficient criteria by BCoSaMP and BSP failing to indicate the true features. This problem can be mitigated by the reconstructed response criterion  by GPSP. We demonstrate this case by the following example.

\begin{example}\label{ex.2}
Consider a scenario where the first two features are
\begin{align*}
	\bF_1=\begin{bmatrix}
		1 & 0\\ 0 & 1 \\ 0 & 0 \\ 0 &0
	\end{bmatrix}, \quad
	\bF_2= \begin{bmatrix}
		0 & 0\\ 0 & 0 \\ 1 & \frac{10}{\sqrt{101}}\\ 0 & \frac{1}{\sqrt{101}}
	\end{bmatrix},
\end{align*}
and $
\by=\begin{bmatrix}
	1 &1 & 0 &0
\end{bmatrix}^{\top}.$ The unique solution with block-sparsity one is $\bc^*[1]=[1 \quad 1]^{\top}$ and $\bc^*[g]=\mathbf{0}$ for $g>1$.
Assume at the shrinking stage of the $l$-th iteration, the features in the pool are $\bF_1,\bF_2$.
The two criteria in the shrinking stage give the same scores:
$$
\bF_1:\  \|\bx_p^l[1]\|_2=\|\bF_1\bx_p^l[1]\|_2=1.1412, \quad \bF_2: \ \|\bx_p^l[2]\|_2=\|\bF_2\bx_p^l[2]\|_2=0.
$$
Both of them select the correct feature.

Now consider a small perturbation of $\by$ and the observation vector is  $
\widetilde{\by}=\begin{bmatrix}
	1 &1 & 0 &0.1
\end{bmatrix}^{\top}$ instead. 
In this case, we have $\bx_p^l=\begin{bmatrix}
	1 & 1 & -2 & 2.0025
\end{bmatrix}^{\top}$. By using criterion RCC (BCoSaMP, BSP), we get
\begin{align*}
	\bF_1:\  \|\bx_p^l[1]\|_2=1.1412, \quad \bF_2: \ \|\bx_p^l[2]\|_2=2.8302.
\end{align*}
The second feature $\bF_2$ will be selected. This is because to fit the small perturbation which has a large angle with the two columns of $\bF_2$, one has to use coefficients with large magnitude, leading to a higher score of $\bF_2$ than that of $\bF_1$. 

By using criterion RMC (GPSP), we get
\begin{align*}
	\bF_1: \ \|\bF_1\bx_p^l[1]\|_2=1.4142, \quad \bF_2: \ \|\bF_2\bx_p^l[2]\|_2=0.1.
\end{align*}
In this case, even though $\bF_2$ has a large coefficient, they are used to reconstruct the small perturbation. Thus its corresponding reconstructed response has a small score. The correct feature $\bF_1$ will be selected. 
\end{example}

Finally, we note that RCC is a direct extension of the criterion used in CoSaMP~\cite{needell2009cosamp} and SP~\cite{dai2009subspace}, while RMC does not follow the same paradigm. We numerically compare these two criteria in Subsection~\ref{subsec:ablation} via a group of ablation studies.

\section{Numerical Experiments}\label{sec::numerical}
In this section, we present various experiments to validate the effectiveness of GPSP by conducting a series of comparison and ablation studies under diverse scenarios. We focus on comparing GPSP~\cite{he2023group} with BOMP~\cite{eldar2010block}, BOMPR~\cite{fu2014block}, BCoSaMP~\cite{zhang2019recovery}, and BSP~\cite{kamali2013block}.  We test these methods in practical cases where block features have different statistical properties with and without noise perturbations.  We investigate the algorithmic performances on random sampling matrices with non-Gaussian types.  Furthermore, we consider more complicated settings in face recognition and PDE identification to test GPSP and state-of-the-art methods. All the algorithms are implemented in Python and available online\footnote{ \url{https://github.com/RoyYuchenHe/BlockSparse}}.

\subsection{Heterogeneous Gaussian blocks}\label{sec_hetero}
In this subsection, we consider recovering block sparse signals with sampling matrices with heterogeneous Gaussian blocks.

We consider a block matrix $\bA=[\bF_1,\dots,\bF_G]\in\mathbb{R}^{N\times GM}$ with $N$ observations and $G$ blocks, each of which $\bF_g\in\mathbb{R}^{N\times M}$ is a Gaussian random matrix where each entry is independently and identically sampled from a normal distribution $\mathcal{N}(\mu_g,\sigma_g)$ with mean $\mu_g$ and standard deviation $\sigma_g\ne0$. For each $g=1,2,\dots, G$, the mean $\mu_g$ is generated from $\mathcal{N}(1,5)$, and $\sigma_g$ is generated from the absolute value of a Gaussian variable following $\mathcal{N}(1,5)$. For the true signals $\bc$, the first $k$ entries are simulated by a normal distribution $\mathcal{N}(\mu_c,1)$ and the remaining are zeros, where $\mu_c$ are generated from a normal distribution $\mathcal{N}(1,5)$. The corresponding observations are produced by $\bb=\bA\bc$. We fix $GM=1000$ and $N=400$. For $M=5,8,$ and $10$, we test BCoSaMP, BOMP, BOMPR, BSP, and GPSP on their performances in recovering signals with block sparsities varying from $1$ to $\frac{N}{2M}$ to guarantee the uniqueness of the solutions. In addition, for some experiments, we consider processing $\bA$ with column normalization so that each column has a unit norm.  For each setting, we repeat the experiments 100 times and record the number of experiments where block features are successfully identified.

Figure~\ref{fig:exact} shows the experiment results. Although the probability of exact recovery for each method reduces as the underlying signals have fewer zero blocks, we observe that GPSP outperforms other methods for varying levels of block sparsity and different block sizes. In particular, we find that BOMP has almost less than $50\%$ of success rate for all levels of sparsity, with and without column normalization. Such unsatisfying behaviors of BOMP were also reported in~\cite{fu2014block} when the blocks are redundant.  We note that BOMPR presents better recovery results than BOMP, further validating the effectiveness of using projection for feature selection, as discussed in Section~\ref{sec::comparison}. Generally better than BOMP, the success rates of BSP and BCoSaMP are improved by column normalization, although they are consistently lower than those of BOMPR and GPSP. For signals with single non-zero blocks, both GPSP and BOMPR identify the correct features in all experiments with and without column normalization.   For lower levels of sparsity ($2\sim 5$), both GPSP and BOMPR  keep their success rates above $50\%$.  For middle levels of sparsity ($6\sim 8$), we observe that the success rate of BOMPR drops below $50\%$ especially when the columns are not normalized; whereas the success rate of GPSP still maintains above $50\%$. For higher levels of sparsity (above $10$), the success rate of  GPSP slowly decreases while remaining above $20\%$, and the success rate of BOMPR stays below that of GPSP in all settings. 

We conclude from this set of experiments that GPSP consistently identifies the underlying block sparse signals with the highest success rate with varying sparsity levels when the sampling matrices contain heterogeneous Gaussian blocks. The heterogeneity of the blocks reflects discrepancies among group features, which is commonly expected in real applications.   

\begin{figure}[t!]
\centering
\begin{tabular}{c|c|c}
	\hline
	$M=5$& $M=8$ & $M=10$  \\\hline
	\multicolumn{3}{c}{Without column normalization}\\\hline
	\includegraphics[width=0.3\textwidth]{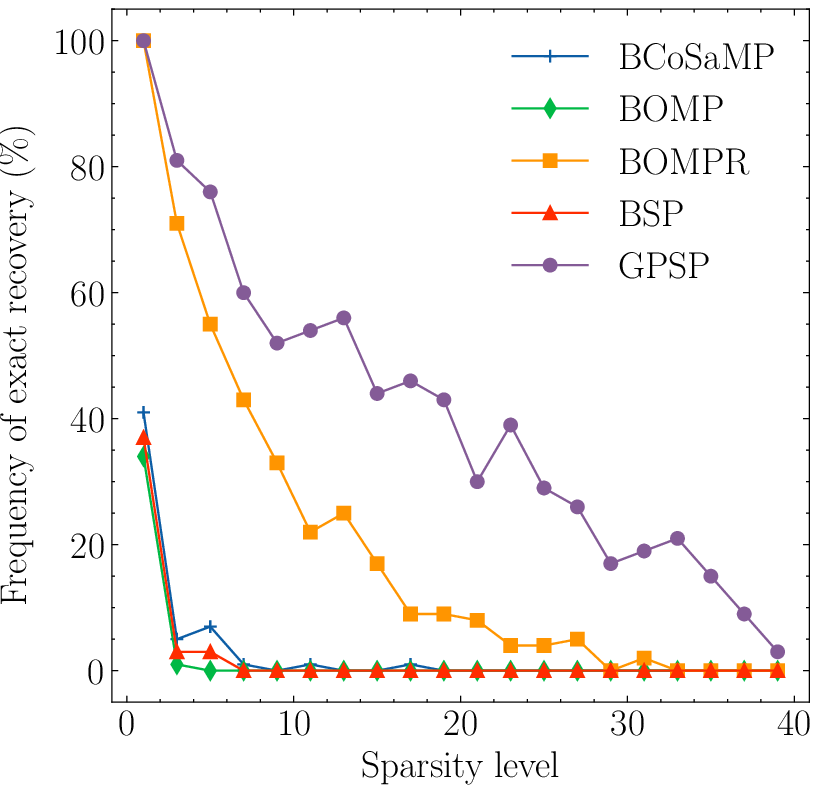}&
	\includegraphics[width=0.3\textwidth]{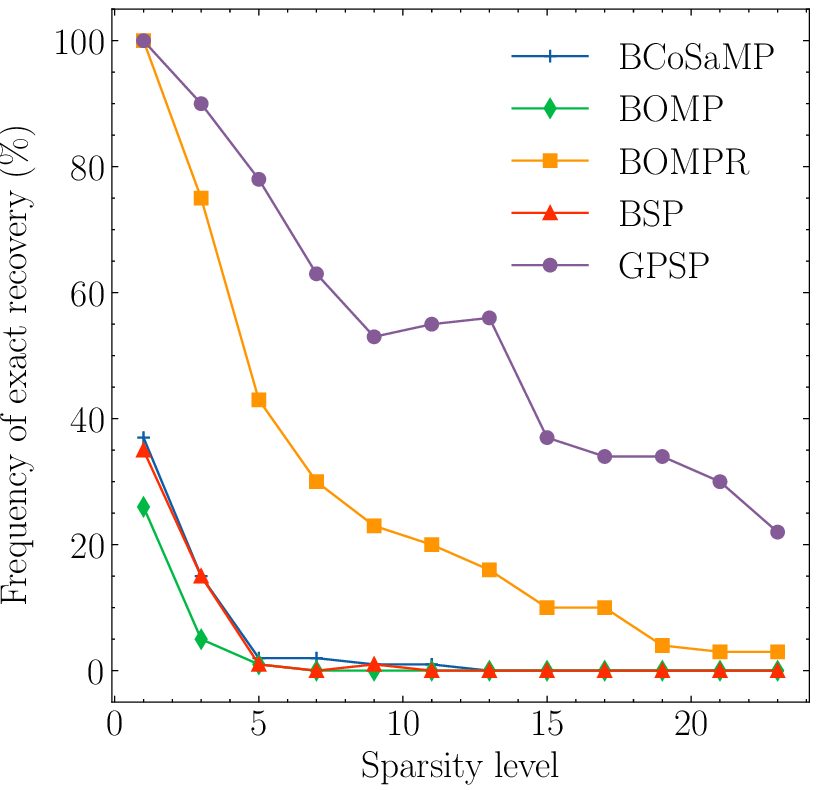}&
	\includegraphics[width=0.3\textwidth]{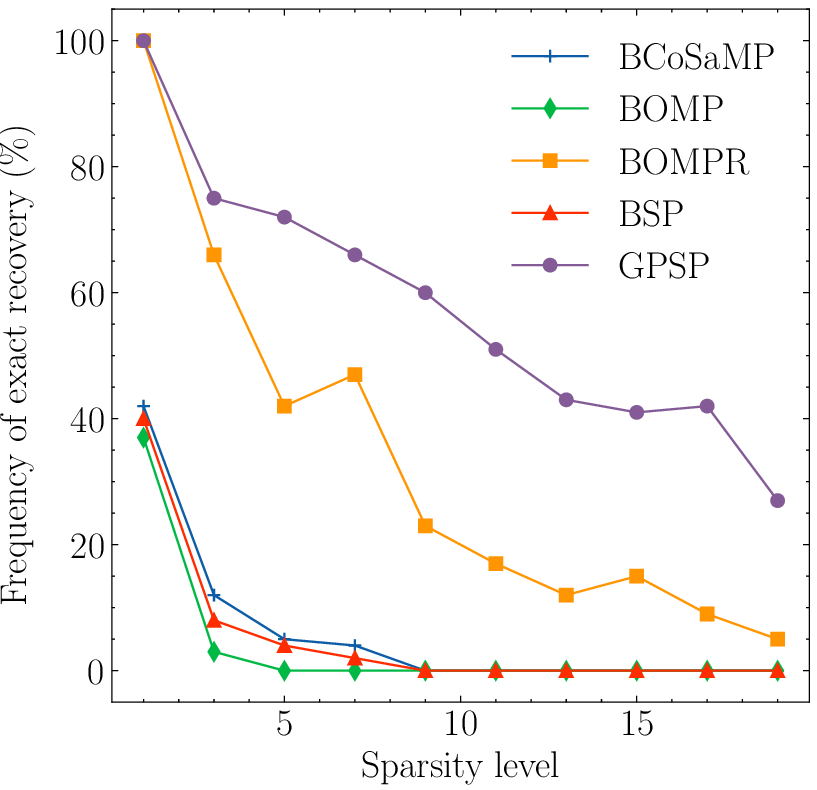}\\\hline
	\multicolumn{3}{c}{With column normalization}\\\hline
	\includegraphics[width=0.3\textwidth]{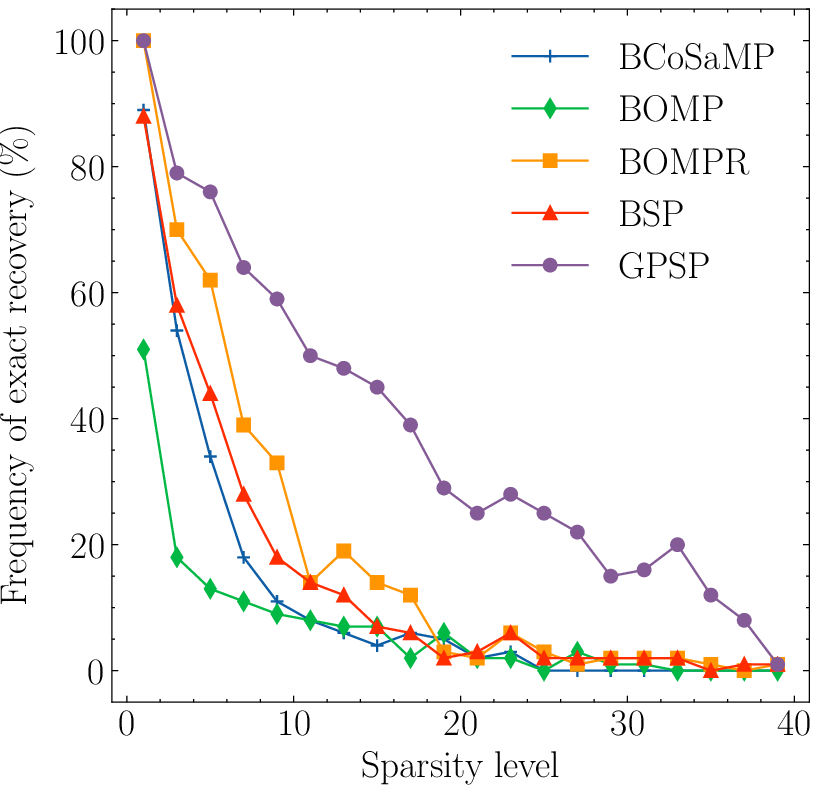}& 
	\includegraphics[width=0.3\textwidth]{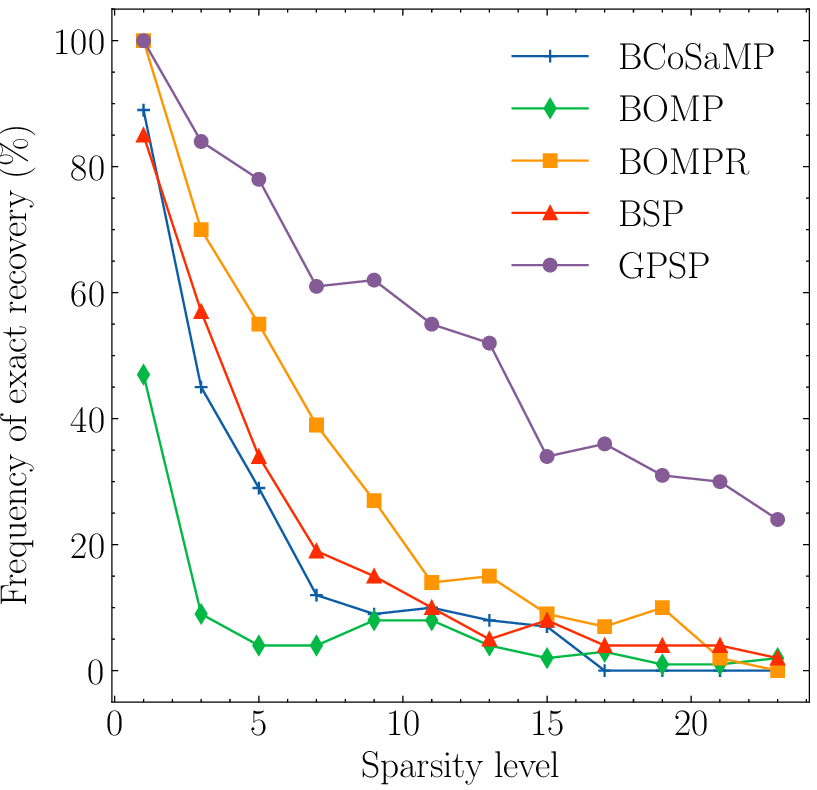}&
	\includegraphics[width=0.3\textwidth]{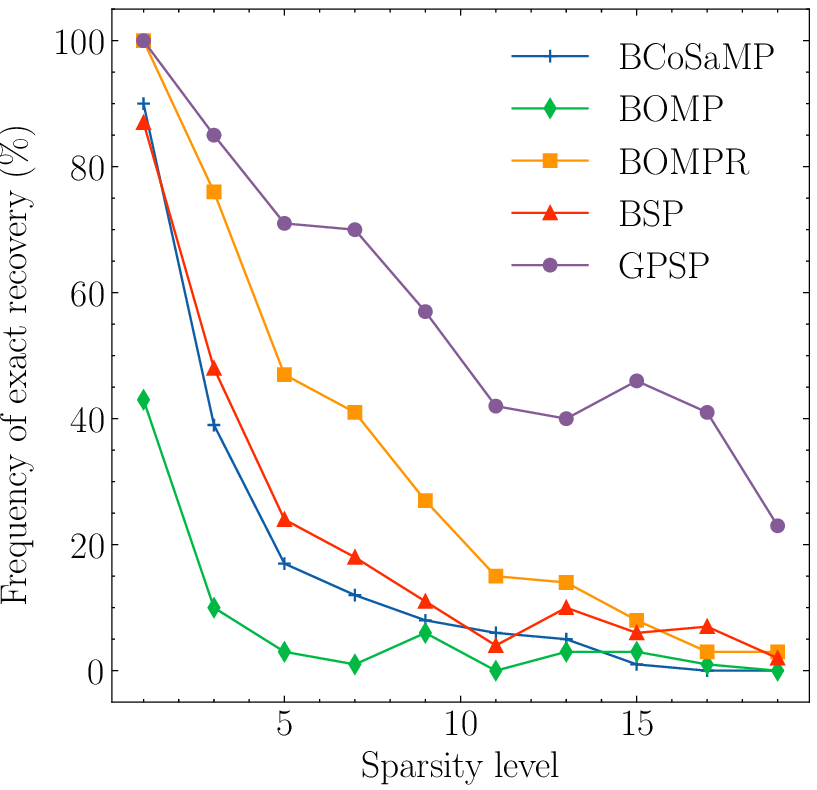}\\\hline
\end{tabular}
\caption{Comparison study on the frequency of exact recovery of GPSP, BCoSaMP, BOMP, BOMPR and BSP with random sampling matrices consisting of heterogeneous Gaussian blocks. From left to right, each column corresponds to the block size $M=5,8,10$, respectively. The first row shows the results without column normalization and the second row shows the results when the columns are normalized. In all the tested settings, GPSP consistently achieves the highest levels of accuracy.}
\label{fig:exact}
\end{figure}
\subsection{Reconstruction with inexact data}
We test and compare all methods when the data are inexact. After generating the sampling matrices $\bA\in\mathbb{R}^{N\times GM}$, the simulated true signals, and the corresponding responses $\bb$ as in subsection~\ref{sec_hetero}, we perturb entries of $\bb$ with independent Gaussian noises with mean $0$ and standard deviation $\sigma>0$. We repeat the experiments in subsection~\ref{sec_hetero} with the perturbed sampling matrices with column normalization, and Figure~\ref{fig:inexact} shows the results when $\sigma=0.2$ and $1.0$ separately. Compared to cases with clean data, we observe that the exact recovery rates of all methods are not significantly affected by the noise when the true signals have single non-zero blocks, and the success rates decrease for higher sparsity levels. Moreover, we find that the success rates of all the tested methods are not strongly compromised by the level of noise added to the observations. Consistent with the results without perturbation, GPSP still outperforms other methods in all settings, and BOMPR remains the second-to-the-best for most scenarios.  These results validate the robustness of GPSP when the observations are perturbed by noise, and the clear advantages of GPSP, as well as BOMPR over the other methods, further justify the effectiveness of subspace projection for feature selection in non-trivial settings.
\begin{figure}[t!]
\centering
\begin{tabular}{c|c|c}
	\hline
	$M=5$& $M=8$ & $M=10$  \\\hline
	\multicolumn{3}{c}{Data perturbation $\sigma=0.2$}\\\hline
	\includegraphics[width=0.3\textwidth]{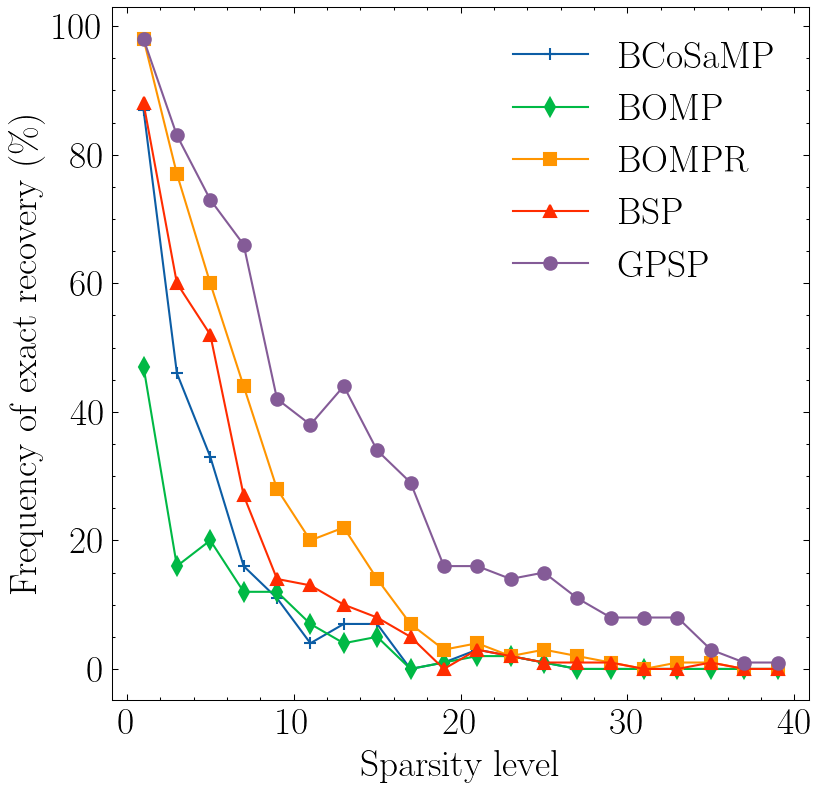}& 
	\includegraphics[width=0.3\textwidth]{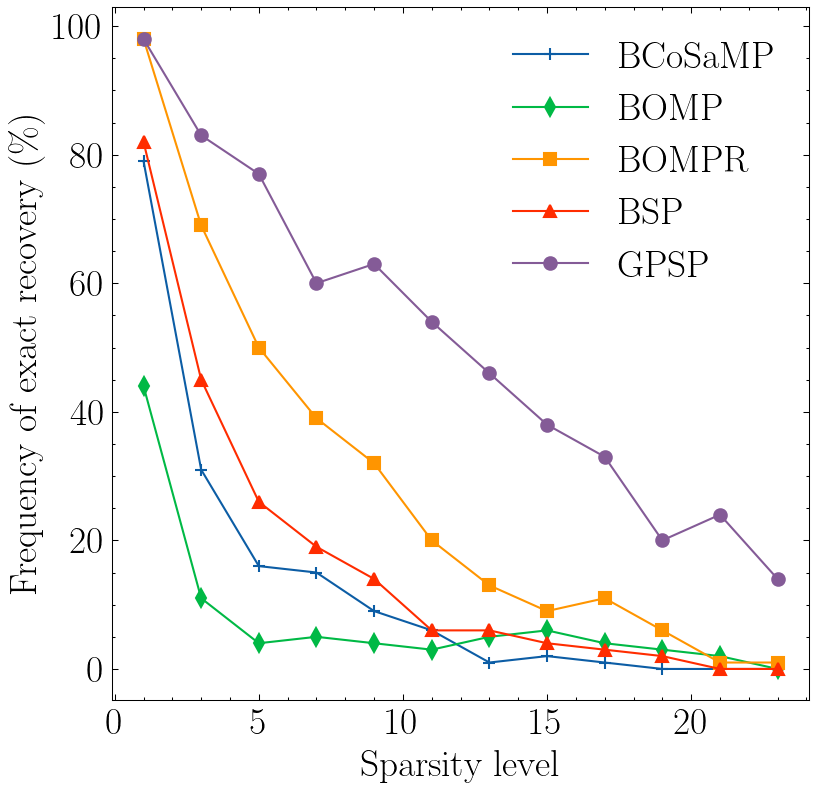}&
	\includegraphics[width=0.3\textwidth]{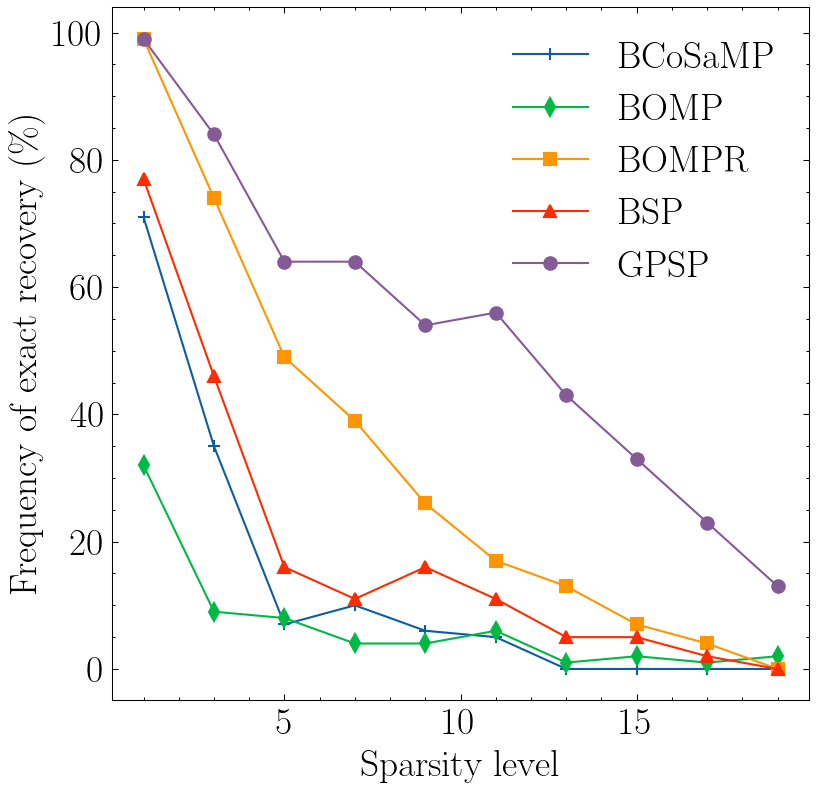}\\\hline
	\multicolumn{3}{c}{Data perturbation $\sigma=1.0$}\\\hline
	\includegraphics[width=0.3\textwidth]{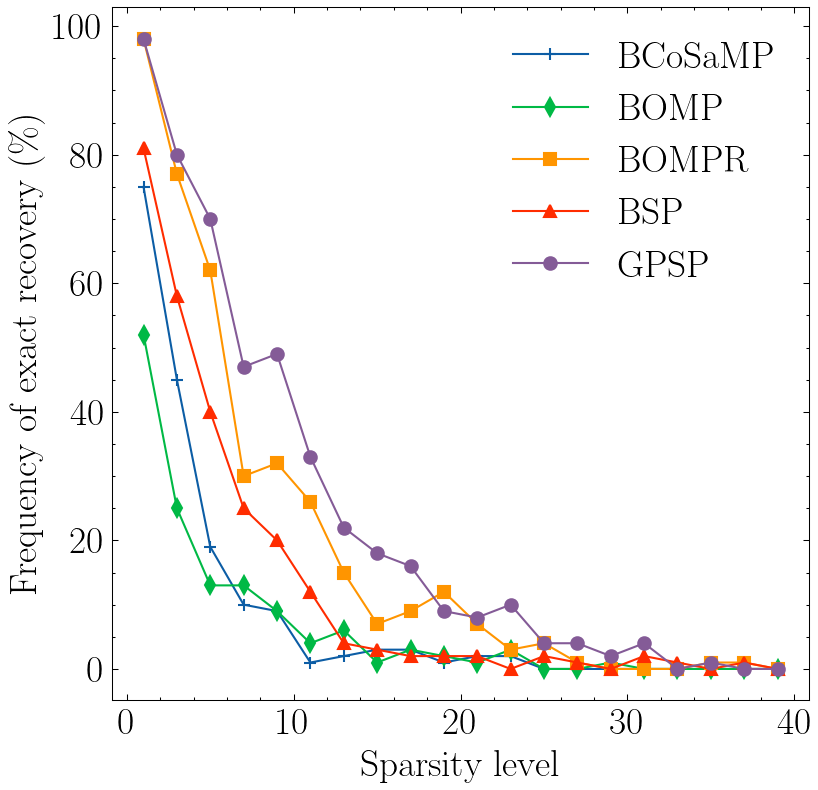}&
	\includegraphics[width=0.3\textwidth]{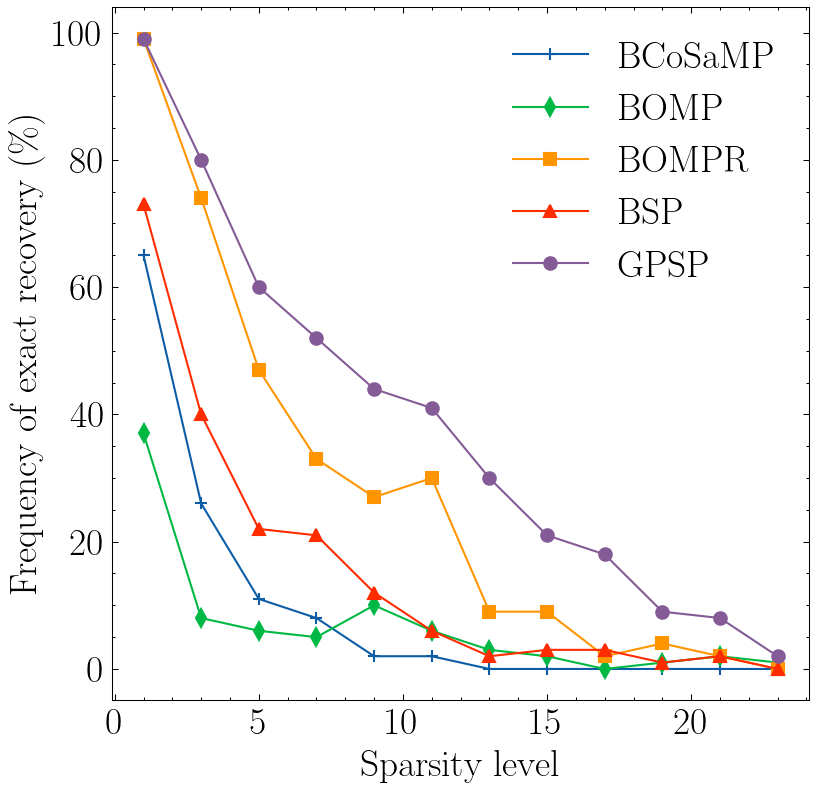}&
	\includegraphics[width=0.3\textwidth]{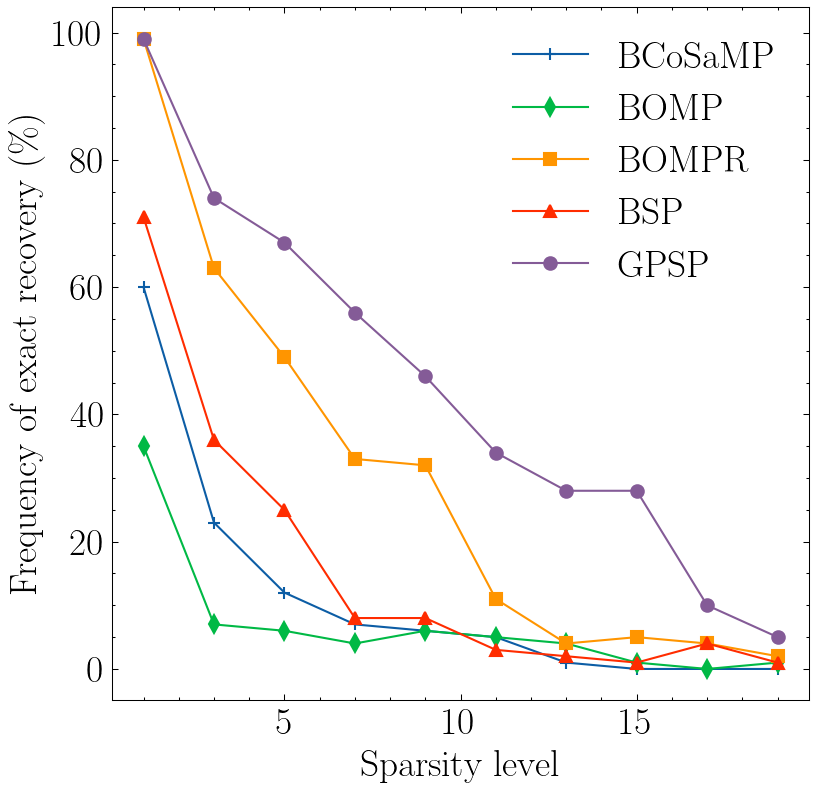}\\\hline
\end{tabular}
\caption{Comparison study on the frequency of exact recovery of GPSP, BCoSaMP, BOMP, BOMPR and BSP with random sampling matrices consisting of heterogeneous Gaussian blocks and inexact data. Columns are normalized. The observed responses are perturbed by additive Gaussian noises with mean $0.0$ and standard deviation $\sigma>0$. From left to right, each column corresponds to the block size $M=5,8,10$, respectively. The first row shows the results when $\sigma=0.2$, and the second rows shows the results when $\sigma=1.0$. Although compromised by the perturbations, in all the tested settings, GPSP consistently achieves the highest level of accuracy. 
}
\label{fig:inexact}
\end{figure}
\subsection{Other types of random sampling matrices}
To account for more general types of randomness, we consider  cases where entries of the sampling matrix are  Poisson or Bernoulli random variables. In the first scenario, each block of $\bA$ is a random matrix whose entries are independent Poisson variables, and the mean of each block is set to be the square of a Gaussian random variable with mean $5$ and standard deviation $20$. In the second scenario, every element in $\bA\in \mathbb{R}^{N\times GM}$ is either 0 or 1, and for each block, the probability of being 1 is determined by a uniform random variable from $[0,1]$. The true signals are generated from the normal distribution $\mathcal{N}(\mu_{\bx},1)$ where $\mu_{\bx}$ is another random variable following $\mathcal{N}(1,5)$. Figure \ref{fig:distribution} shows the results. 

For the Poisson distributed blocks, GPSP shows the highest exact recovery rate for most of the sparsity levels, and it is the only one whose success rates maintains above $50\%$ up to sparsity level $5$. For the middle levels of sparsity ($10\sim 15$), BCoSaMP outperforms other methods, but the margin decreased when the block size increased. We also observe that BOMPR is not as competitive as in the scenario with Gaussian random blocks; this implies that feature screening as employed in GPSP can be important.  For the binary distributed blocks, GPSP outperforms other algorithms, and its success rate remains above $80\%$ in most of the settings. Interestingly, BSP remains the second best when Bernoulli random blocks are considered, and it shows above $60\%$ accuracy for most cases. BCoSaMP shows comparable levels of accuracy for lower levels of sparsity, but it fails completely and immediately when there are more active features involved. It is consistent that BOMP does not show satisfying performances, and with the subspace projection instead of the inner-product rule, BOMPR significantly improves the results. However, we find that BOMPR is generally inferior to BSP or BCoSaMP in this scenario. 

From these experiments, we deduce that the performances of most of the tested greedy algorithms are highly affected by the types of randomness of the sampling matrices. Moreover, our results show that GPSP is robust to such variations, and it yields successful feature identifications for general levels of sparsity and block sizes.

\begin{figure}[t!]
\centering
\begin{tabular}{c|c|c}
	\hline
	$M=5$& $M=8$ & $M=10$  \\\hline
	\multicolumn{3}{c}{Poisson distribution}\\\hline
	\includegraphics[width=0.3\textwidth]{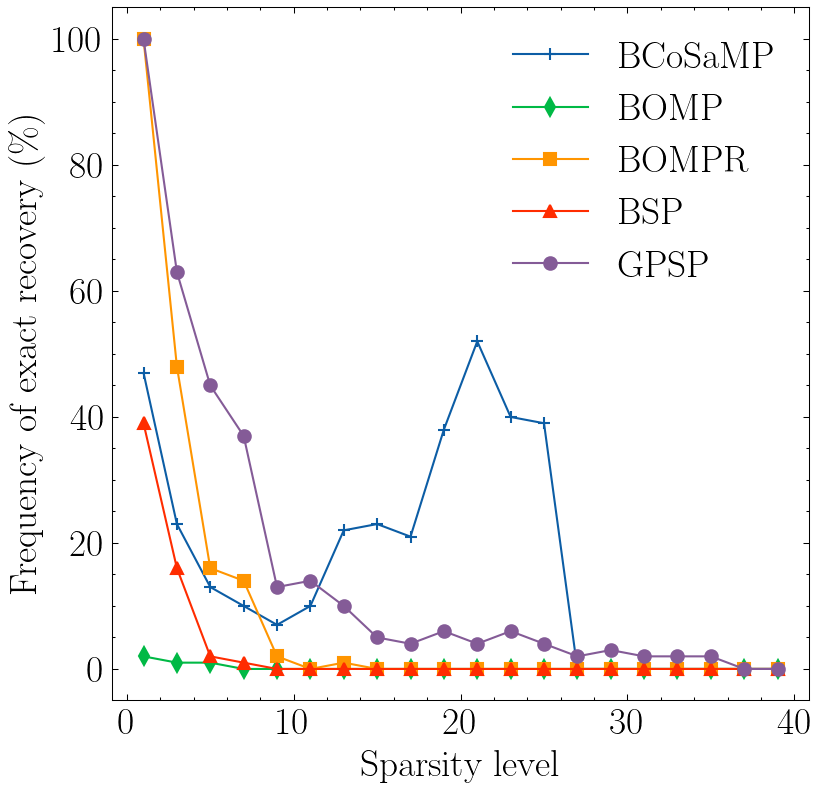}& 
	\includegraphics[width=0.3\textwidth]{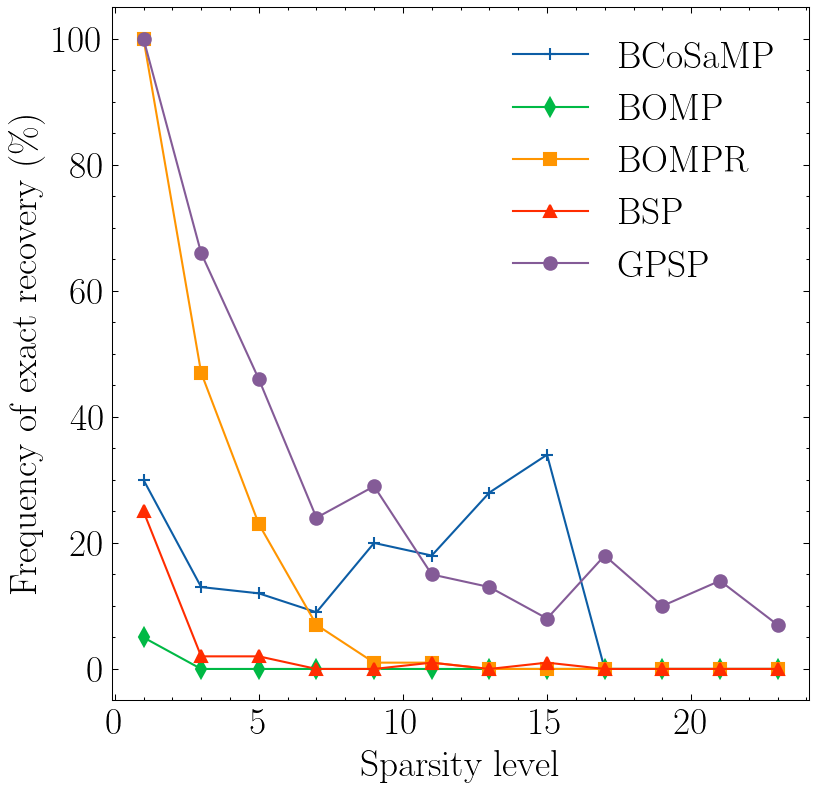}&
	\includegraphics[width=0.3\textwidth]{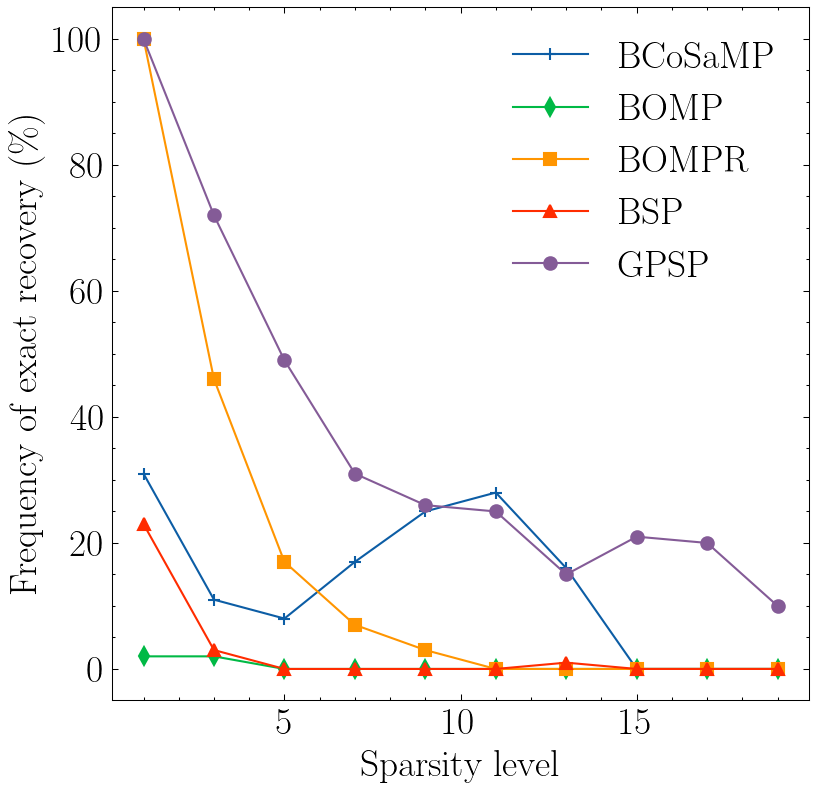}\\\hline
	\multicolumn{3}{c}{Bernoulli distribution}\\\hline
	\includegraphics[width=0.3\textwidth]{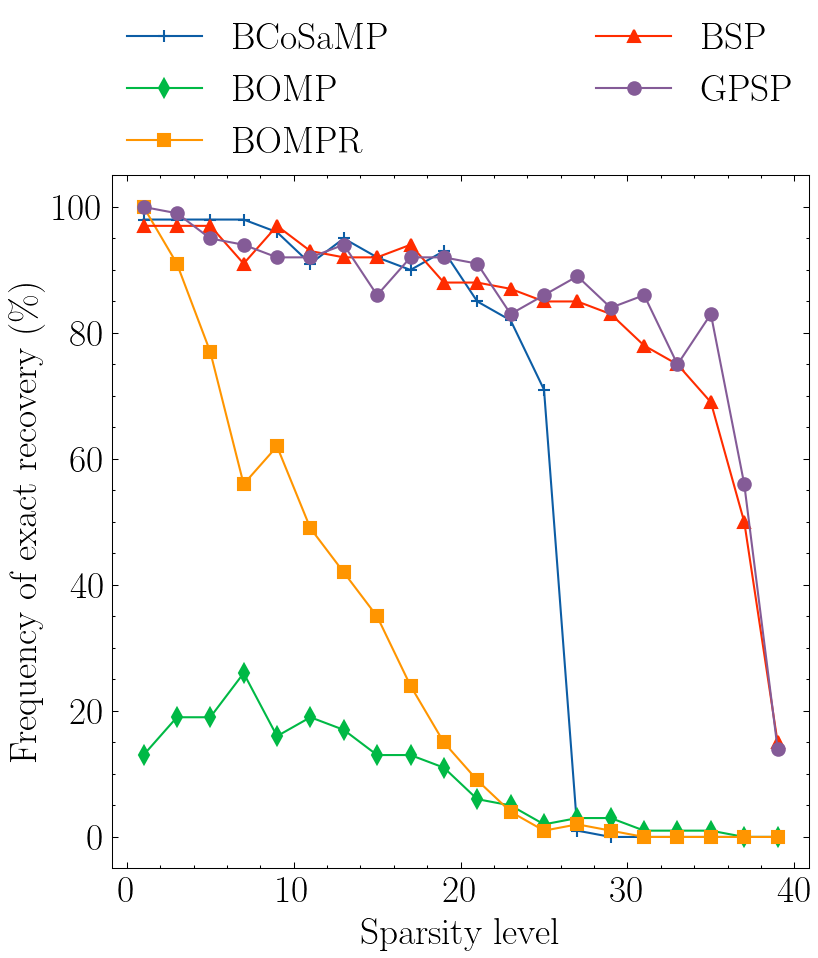}&
	\includegraphics[width=0.3\textwidth]{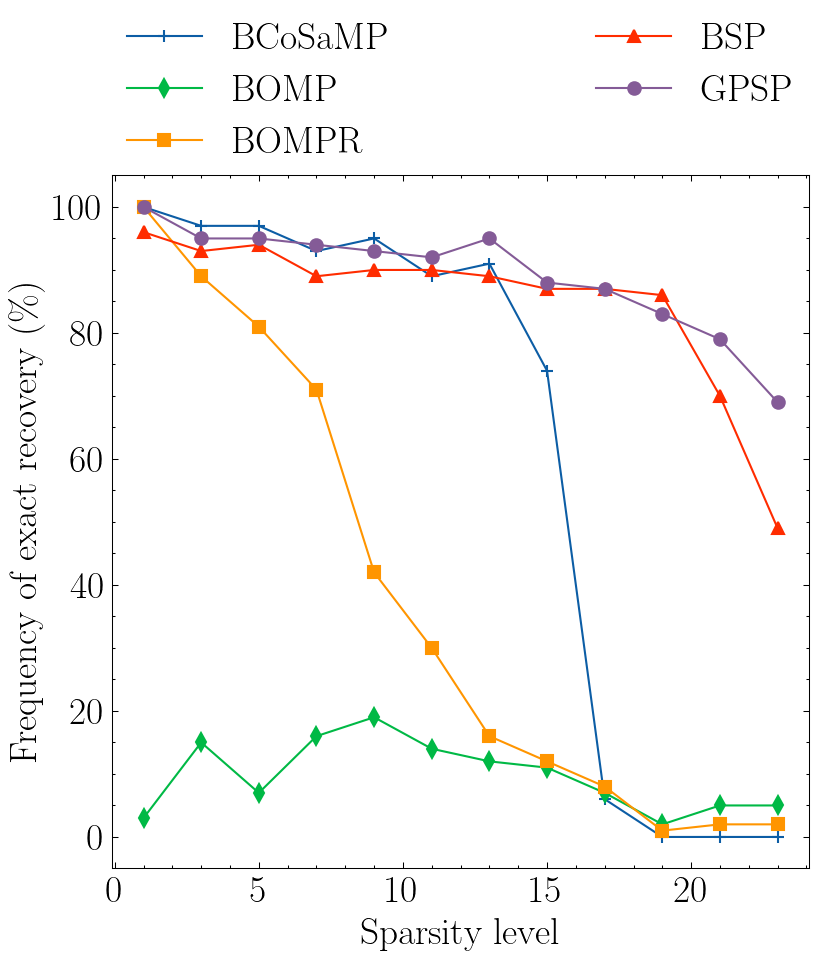}&
	\includegraphics[width=0.3\textwidth]{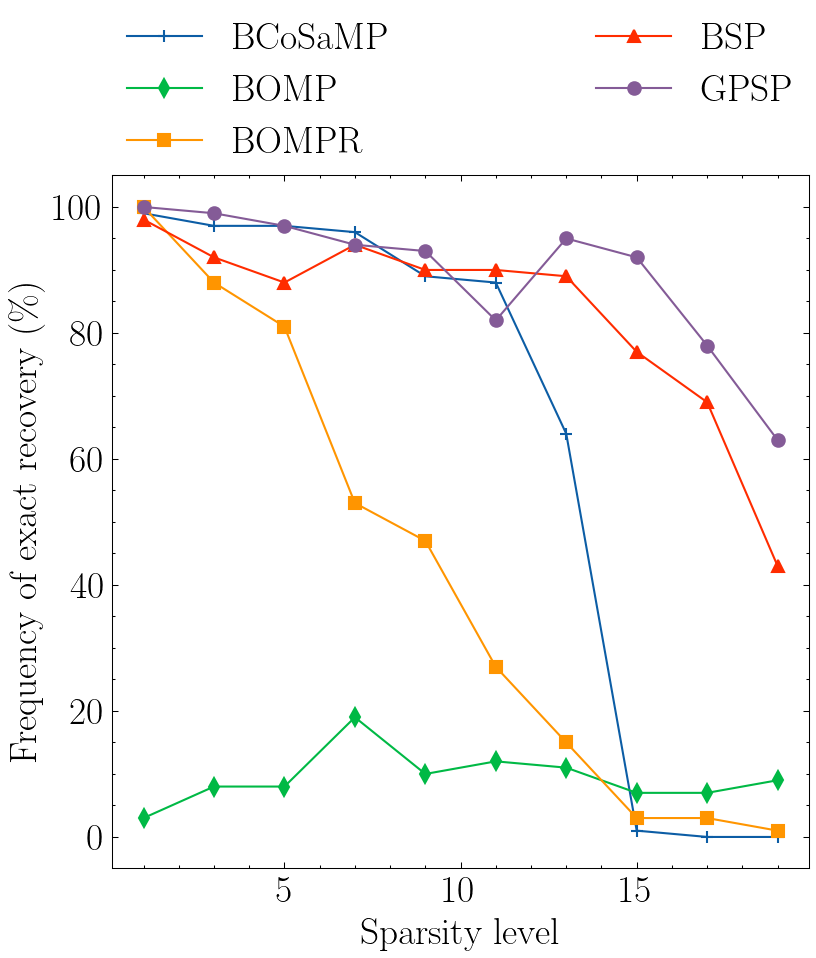}\\\hline
\end{tabular}
\caption{Other types of random sampling matrices. Comparison of GPSP with BCoSaMP, BOMP, BOMPR and BSP. From left to right, each column corresponds to $M=5,8,10$, respectively. The first row shows the result for sampling matrix following Poisson distribution and the second row shows results for sampling matrix following Bernoulli distribution.}
\label{fig:distribution}
\end{figure}

\subsection{Application to Face Recognition}
One of the applications of block sparse signal recovery in image processing is face recognition \cite{liu2017subspace,elhamifar2012block}. Given a face image, the goal is to identify the person from a list of subjects whose face images under different lighting conditions are archived. For each subject, the corresponding archived face images are grouped as a block feature, and our sampling matrix, or dictionary, comprises all the block features; and we can treat the test image as the response. Under the Lambertian assumption that each image is represented as a linear combination of other images of its subject in the dictionary,  the problem of face recognition can be addressed by finding the block, i.e., the subject, whose feature matches the test image, or equivalently, by reconstructing a signal with block sparsity one.

We evaluate and compare the performances of GPSP with other algorithms using the Extended Yale B dataset~\cite{lee2005acquiring} derived from~\cite{GeBeKr01}. This dataset consists of face images belonging to 38 human subjects. Each subject is represented by 64 face images, which are captured under various illuminating conditions. The first row of Figure \ref{tab:face} shows 8 images of a specific subject; they exemplify the variations in the dataset, including contrast change (the 2nd image), minor differences in facial expressions (the 3rd image), and extremely low lighting (the 5th image). For each subject, we randomly sampled $M$ images to form our training dataset of size $38\times M$, and we use the rest images for  evaluation. Upon stretching each image to a vector of length $32256\,(=192\times168)$,   we employ three strategies to reduce the dimension to $D=132$ as in~\cite{elhamifar2012block}: 
\begin{itemize}
\item \textbf{Principal Component Analysis (PCA)}. Project the face vectors onto the first $D$ principal components of the covariance matrix obtained from the training data.
\item \textbf{Random Projection (randomProj)}. Multiply the face vectors by a random projection matrix, which is constructed by generating i.i.d. entries from a Gaussian distribution with zero mean and variance $1/D$.
\item \textbf{Downsampling (downSample)}. Randomly sample $D$ entries from the image vector.
\end{itemize}

More specifically, for a random separation of the training and testing datasets, and with a fixed dimension reduction strategy, we construct a block matrix $\bA=[\bF_1,\dots,\bF_G]\in\mathbb{R}^{D\times GM}$, and for  $i=1,2,\dots,G$, each column of $\bF_i\in\mathbb{R}^{D\times M}$ is a dimension-reduced face image of the $i$-th subject.  Given a test image, we obtain a dimension-reduced vector $\bb\in\mathbb{R}^{D}$. Suppose the test image belongs to the $i$-th subject, and the predicted block feature is $j$, then the accuracy is measured by the ratio of occurrences of correct matches, i.e., $i=j$. We test BCoSaMP, BOMP, BOMPR, BSP, and GPSP with  $M=9, 18, 25, 32$ for increasing sizes of training datasets. 

In general, we observe that BCoSaMP, BOMP, and BSP have consistently lower levels of accuracy compared to BOMPR and GPSP. 
With smaller sizes of training datasets ($M=9$ and $M=18$), GPSP outperforms the other algorithms. When the size of the training dataset becomes larger ($M=25$ and $M=32$), BOMPR achieves slightly higher accuracy than GPSP when using PCA for the dimension reduction; and for the other two dimension reduction strategies, GPSP has higher accuracy than BOMPR. We also observe that the rates of exact recovery of both BOMPR and GPSP stay stable for the tested reduction strategies; they differ by at most $2\%$ and the results using downSample for reduction are always the worst. In contrast, the performances of BCoSaMP, BOMP and BSP are strongly affected by the choice of reduction techniques. Taking $M=9$ as the example, for BCoSaMP, the maximal difference is $17.1814\%$; for BOMP, it is $3.4459\%$; and for BSP, it is $18.4074\%$. Noticeably, these methods all use IPC for feature selection, and both BOMPR and GPSP use SPC. From these results, we conclude that GPSP is an effective algorithm for feature selection with complicated real data, and this superior performance is empowered by the SPC for candidate inclusion.  
\begin{table}
\centering
\begin{tabular}{c|c|ccccc}
	\multicolumn{7}{c}{\includegraphics[width=0.7\textwidth]{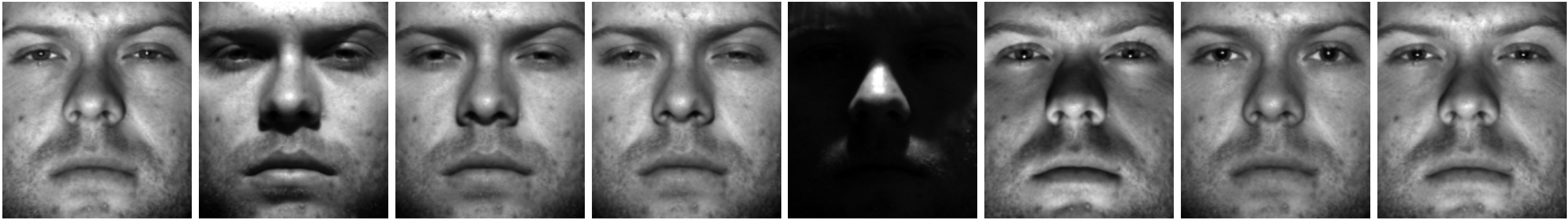}}\\\hline
	$M$&Mode&BCoSaMP& BOMP& BOMPR& BSP& GPSP\\\hline
	\multirow{3}{*}{9}&PCA&56.8629\%&6.0376\%&\underline{80.6129\%}&51.5396\%&\textbf{81.8726\%}\\
	&randomProj& 43.8851\%&3.2529\%&\underline{79.4788\%}&37.2056\%&\textbf{80.4537\%}\\
	&downSample&39.6815\%&2.5917\%&\underline{78.7597\%}&33.1322\%&\textbf{79.4546\%} \\\hline
	\multirow{3}{*}{18}&PCA&58.4855\%  & 6.8497\% & \underline{89.8324\%} & 60.5318\% & \textbf{89.9249\%} \\
	&randomProj& 35.7630\% & 2.9075\%& \underline{90.0983\%}& 34.8960\% &\textbf{90.4971\%}\\
	&downSample&33.8844\%&2.6879\%& \underline{88.1618\%}& 31.1156\%&\textbf{88.6127\%}\\\hline
	\multirow{3}{*}{25}&PCA&54.9180\%& 5.3552\%& \textbf{92.2473\%}& 60.6216\%& \underline{91.8989\%}\\
	&randomProj&29.3921\%& 2.5342\%&\underline{92.5751\%}&30.669\%& \textbf{92.8689\%} \\
	&downSample&31.2090\%& 2.5683\%&\underline{91.0861\%}& 30.5055\%& \textbf{91.3046\%} \\\hline
	\multirow{3}{*}{32}&PCA&48.6060\%& 5.6594\%& \textbf{92.9048\%}& 59.4658\%& \underline{92.3205\%}\\
	&randomProj&31.0935\%& 3.0301\%&\underline{93.8063\%}& 33.0050\%& \textbf{93.9065\%}\\
	&downSample&25.6845\%& 2.5876\%&\underline{91.0100\%}& 28.0885\%& \textbf{91.2355\%}\\\hline
\end{tabular}
\caption{Comparison study on the task of face recoginition using the Extended Yale B dataset~\cite{lee2005acquiring}. The first row shows sample images of a single subject. The table records the average values of frequency of exact recovery over 10 trials for different training size per subject $M=9,18,25,32$ and three dimension reduction strategies: principle component analysis (PCA), random projection (randomProj), and down-sampling (downSample). The highest levels of accuracy are bolded, and the second best are underlined.}\label{tab:face}
\end{table}
\subsection{Application to PDE Identification}
GPSP was first proposed in~\cite{he2022robust} for identifying PDEs with time-space dependent coefficients from trajectory data. It was shown that GPSP can find the underlying models for various important classes of PDEs with higher success rates than BSP. In this experiment, we extend the comparison by including also BOMP, BOMPR, and BCoSaMP. We consider the viscous Burgers' equation
\begin{align}
\begin{cases}
	u_t=a(x,t)uu_x+b(x,t)u_{xx},\, x\in [-1,1],&\ t>0,\\
	u(x,0)=\sin(6\pi (x+0.1))+ 1.5\sin(2\pi(x+0.1))\cos(2\pi(x-0.5)),& \ x\in[-1,1],\\
	u(0,t) = u(1,t),&\ t\geq 0
\end{cases}
\label{eq.pde}
\end{align}
over the computational domain $(x,t)\in[-1,1]\times [0,0.3]$, where 
\begin{align}
a(x,t)=0.8+0.2\phi(t)\cos(\pi x), \quad b(x,t)=0.02,
\end{align}
with $\phi(t)=0.5+0.5\tanh(10(t-0.15))$. Following the numerical settings in~\cite{he2023group}, we discretize the computational domain with spatial step 0.02 and temporal step $3\times 10^{-4}$ with which the PDE is solved. Then we downsample the data to spatial step 0.04 and temporal step $3\times 10^{-3}$, such that the data is over a $50\times 100$ grid.

The PDE~\eqref{eq.pde} has non-trivial behaviors. The coefficient $a(x,t)$ is visualized in Figure~\ref{fig:pde}~(a), and the corresponding solution obtained by the spectral method (See~\cite{he2022robust}) is shown in (b). It models a non-linear advection with a homogeneous diffusion on a circle. The speed of the advection is affected by the solution itself as well as the varying coefficient $a$. When the solution $u$ is positive, it moves from right to left; when it is negative, it moves from left to right. Moreover, the higher the magnitude of the solution is, the faster it moves locally. Consequently, solutions can form shocks without diffusion. For more details about the  properties of the PDE, we refer the readers to~\cite{evans2022partial}.   

We set up a dictionary containing candidate differential operators as features to identify the PDE from the solution data. We include partial derivatives of $u$ up to order 4 and products of no more than 3 terms, leading to 55 features in the dictionary. Since the coefficients may vary both in space and time, we approximate them using linear combinations of B-spline bases of order $2$. As a result, each feature is represented by a block matrix whose columns are associated with local bases. We use 7 bases for space and 8 bases for time, leading to a block size of 56. The details of the implementation are explained in~\cite{he2022robust}. Assuming that we know the sparsity is 2, we apply BOMP, BOMPR, BCoSaMP, BSP, and GPSP to find the features. Table~\ref{tab:pde} shows the identified features by these methods. In this experiment, only GPSP finds the correct features, giving the underlying viscous Burgers' equation. BOMP and BOMPR have identical results; they find the correct second-order differential operator, yet the non-linear part is wrong. BCoSaMP and BSP fail to identify any correct features, and the respective models are more complicated than the true equation. This comparison verifies the effectiveness of GPSP in the application of PDE identification.

\begin{figure}[t!]
\centering
\begin{tabular}{cc}
	(a)   & (b)  \\
	\includegraphics[width=0.4\textwidth]{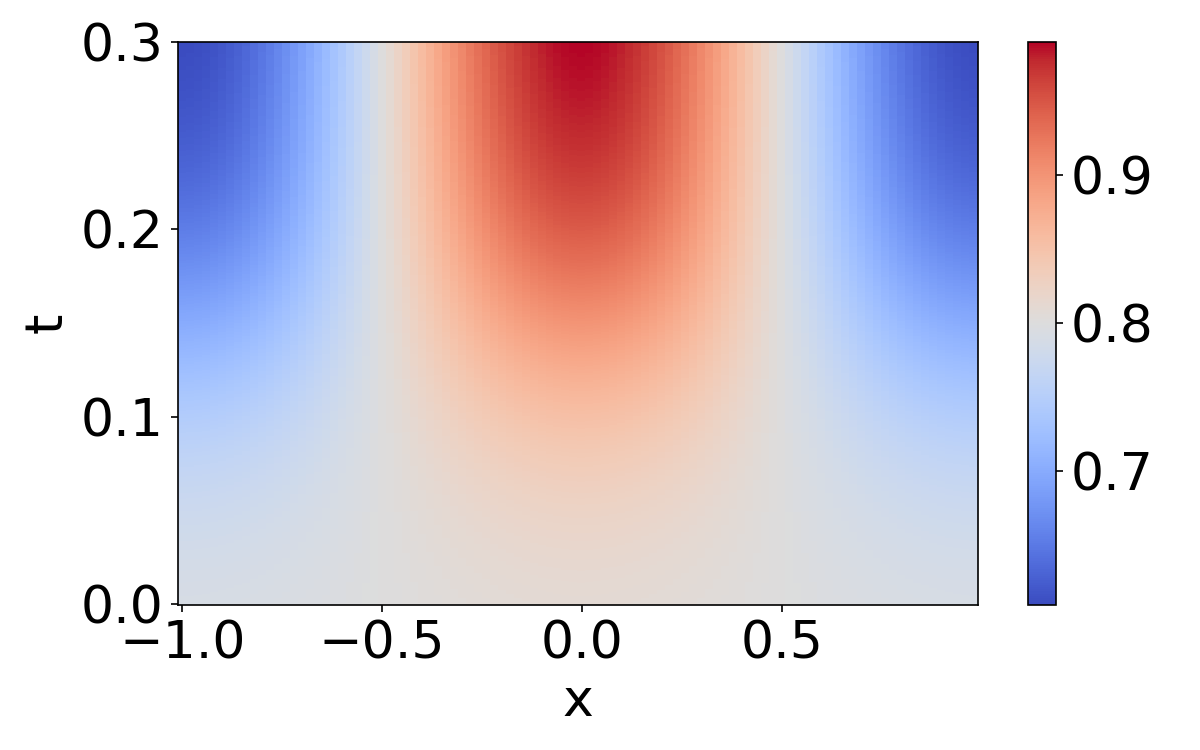} & 
	\includegraphics[width=0.4\textwidth]{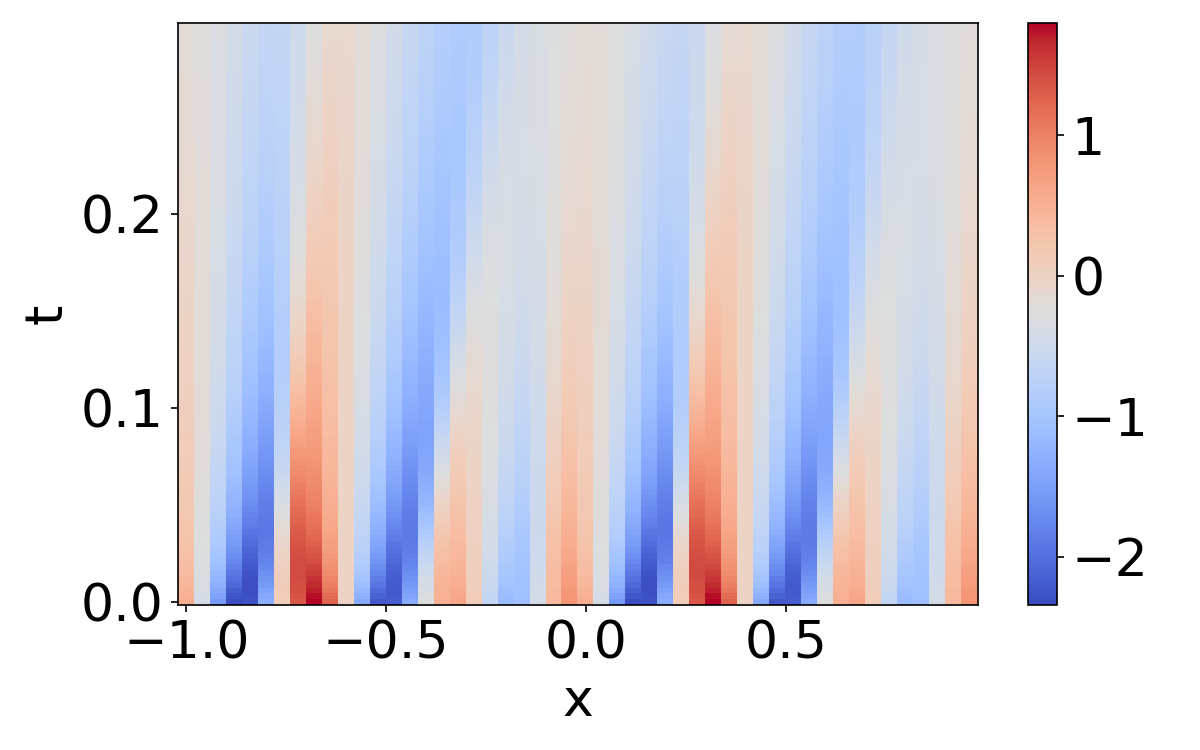}
\end{tabular}
\caption{Visualization of (a) coefficient $a(x,t)$ in (\ref{eq.pde}) and (b) solution data.}
\label{fig:pde}
\end{figure}

\begin{table}[t!]
\centering
\begin{tabular}{c|c}
	\hline
	Algorithm& Identified terms \\
	\hline
	BOMP& $uu_{xxx},u_{xx}$\\
	\hline
	BOMPR & $uu_{xxx},u_{xx}$\\
	\hline
	BCoSaMP & $u_{xx}^2u_{xxxx},u_{xx}u_{xxxx}^2$\\
	\hline
	BSP & $uu_{xxx},u_{x}^3$\\
	\hline
	GPSP & $\bm{uu_x,u_{xx}}$\\
	\hline
\end{tabular}
\caption{Identified PDE iterms by different algorithms. The correct terms are $uu_x,u_{xx}$.}
\label{tab:pde}
\end{table}
\subsection{Ablation study: criteria for feature inclusion and exclusion}
\label{subsec:ablation}
As discussed in Section~\ref{sec::comparison} and illustrated in Table~\ref{tab.diff}, GPSP and the other greedy algorithms mainly differ in their respective rules for the expansion and shrinking of the pools of candidates. For the expansion stage, there are two types of criteria: IPC for the inner product criterion and SPC for the subspace projection criterion. For the shrinking stage, there are also two types of criteria: RCC for regression coefficient criterion, and RMC for response magnitude criterion. Each criterion from different stages can be combined to form a greedy algorithm that iteratively updates the pool of candidates by expansion and shrinking.

We conduct a set of ablation studies to compare algorithms induced from different combinations of these criteria, which are SPC-RCC, IPC-RCC, SPC-RMC, and IPC-RMC. For example, the algorithm with SPC-RCC uses the projection-based criterion during the expansion stage and the regression-coefficient-based criterion during the shrinking stage. Testing in the same setting of heterogeneous Gaussian blocks with column-normalization (see subsection~\ref{sec_hetero}), we observe that the algorithm with SPC-RMC, which is identical to GPSP, consistently achieves the highest success rates for identifying the underlying signals with varying levels of block sparsity and different block sizes. In all settings, the algorithm with IPC-RCC, which is identical to BSP, has the lowest success rates, see Figure~\ref{fig:exact_ablation}. Comparing the curves of SPC-RMC with those of IPC-RMC, we notice significant improvements of the performances. This observation is consistent with the analysis in subsection~\ref{subsec::inner_proj} about the advantages of using subspace projection for robust feature identifications. We also notice that the success rates of the algorithm with SPC-RMC are always higher than that of the algorithm with SPC-RCC, and their differences are not as significant as those between SPC-RMC and IPC-RMC. On one hand, this comparison shows numerical evidence that RMC is a better option than RCC for correctly removing wrong features. On the other hand,  the relatively less significant improvements are compatible with our analysis in subsection~\ref{subsec::coef_response}, where the similarity between the magnitudes of regression coefficients and the corresponding responses are explained. Hence, we conclude that the major factor for the superior performances of GPSP is SPC for expanding the pool of candidates, and RMC  further improves the accuracy. 

To further investigate the role played by RMC, we conduct another set of ablation studies when the observations are perturbed by Gaussian noises of varying intensities ($\sigma=0.1,0.5$ and $1.0$). Figure~\ref{fig:inexact_ablation} shows the results. We find that the curves of success rates of SPC-RMC remain higher than those of SPC-RCC in all settings, and their differences are more prominent compared to those in the previous set of experiments shown in Figure~\ref{fig:exact_ablation}. This implies that the candidate filtering due to RMC is more effective and useful when the data is noisy; this is also compatible with the behaviors illustrated in Example~\ref{ex.2}. 

From these two sets of ablation studies, we conclude that GPSP, i.e., the greedy algorithm with SPC-RMC, outperforms the algorithms with other combinations of expansion and shrinking criteria. Furthermore, we identify SPC as the critical element for superior performances, and we find RMC to help eliminate wrong features especially when the observations are noisy.
\begin{figure}[t!]
\centering
\begin{tabular}{c|c|c}\hline
	$M=5$& $M=8$ & $M=10$  \\\hline
	\includegraphics[width=0.3\textwidth]{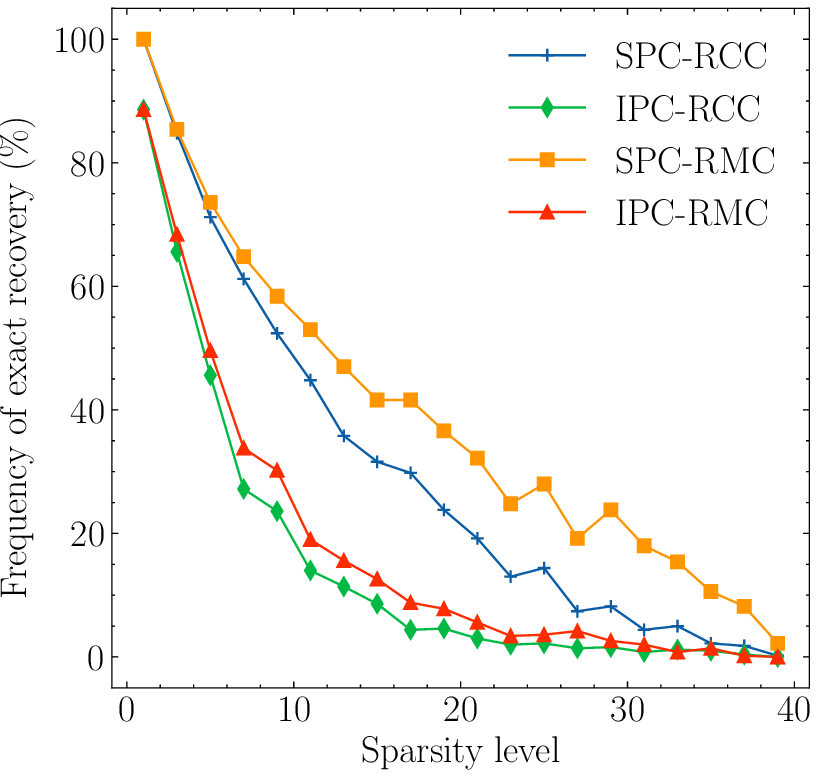}& 
	\includegraphics[width=0.3\textwidth]{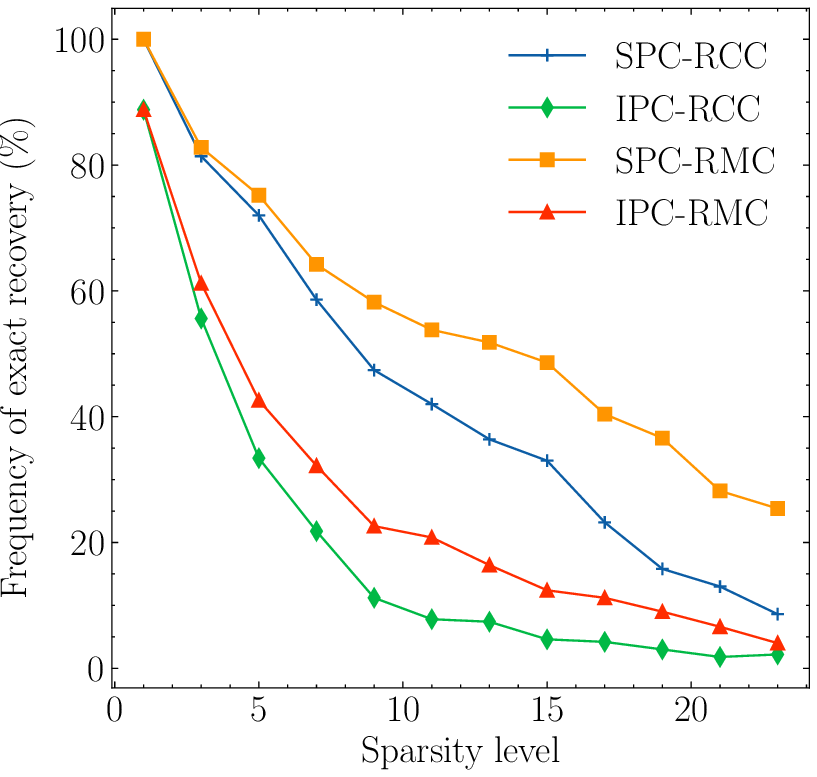}&
	\includegraphics[width=0.3\textwidth]{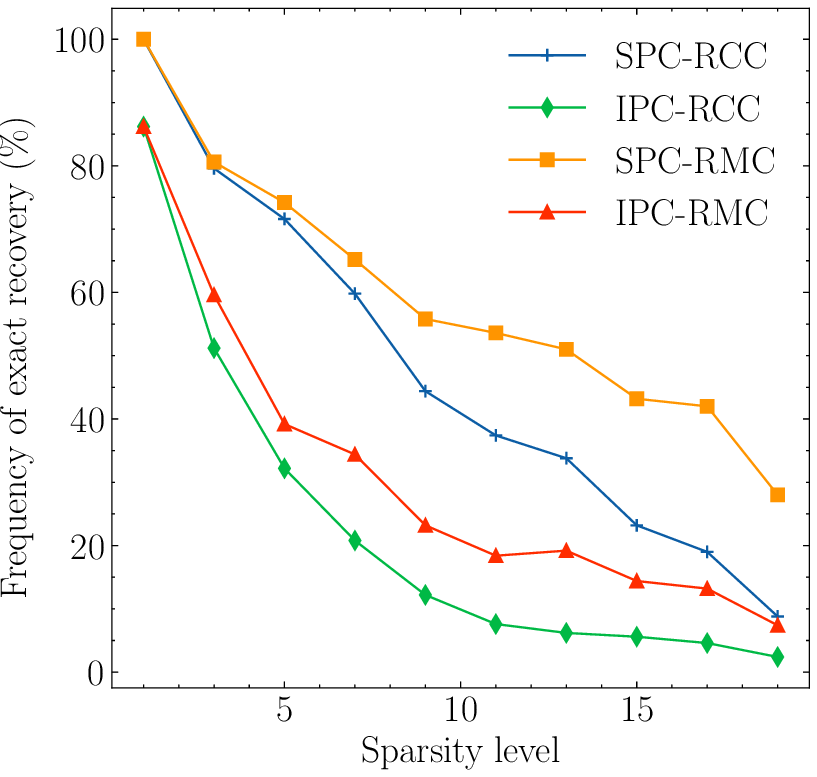}\\\hline
\end{tabular}
\caption{Ablation study of four combinations of two sets of criteria for candidate inclusion and exclusion using sampling matrices consisting of heterogeneous Gaussian blocks. From left to right, each column corresponds to block size $M=5,8,10$, respectively. In particular, the combination of IPC-RCC is BSP, and SPC-RMC corresponds to GPSP. }
\label{fig:exact_ablation}
\end{figure}
\begin{figure}[t!]
\centering
\begin{tabular}{c|c|c}
	\hline
	$M=5$& $M=8$ & $M=10$  \\\hline
	\multicolumn{3}{c}{Data perturbation $\sigma=0.1$}\\\hline
	\includegraphics[width=0.3\textwidth]{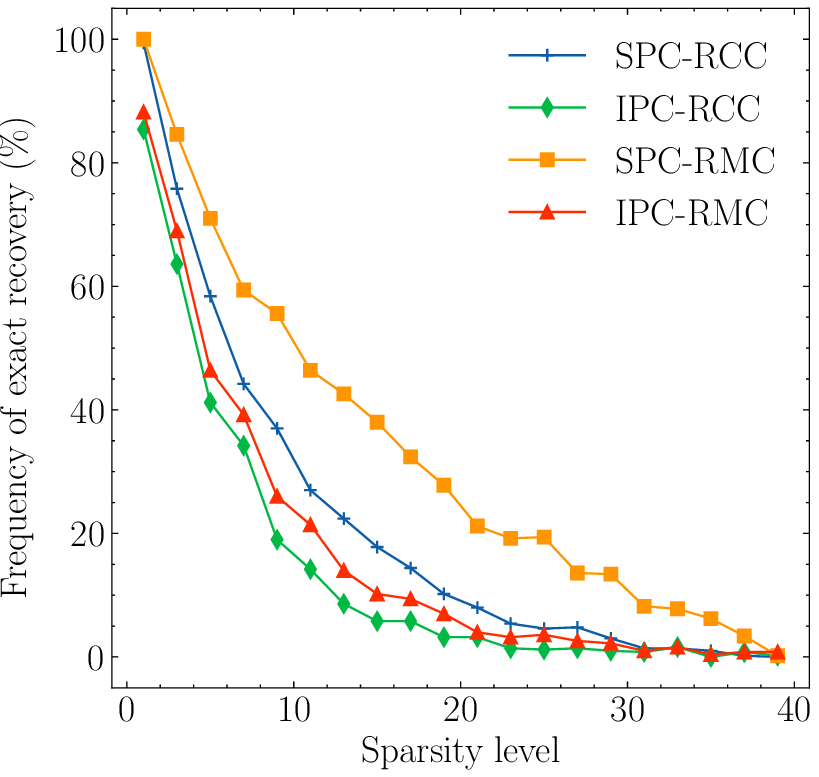}&
	\includegraphics[width=0.3\textwidth]{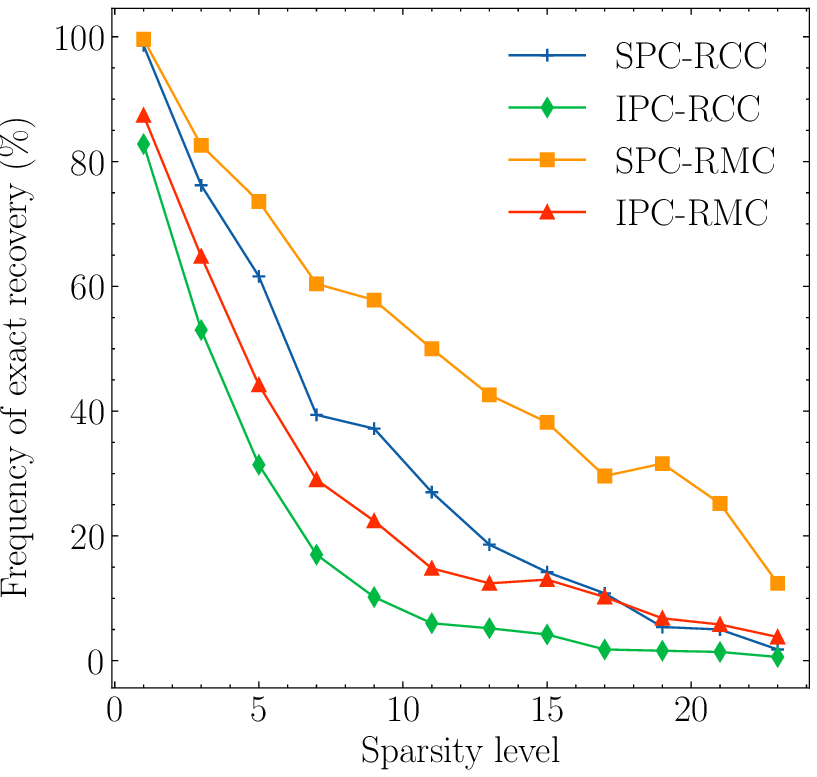}&
	\includegraphics[width=0.3\textwidth]{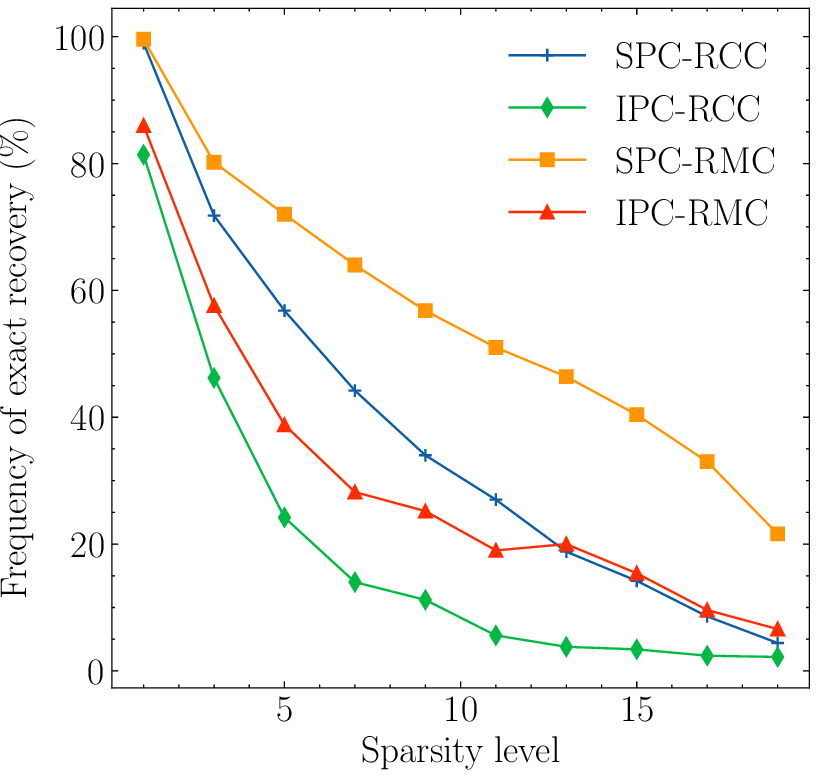}  \\\hline
	\multicolumn{3}{c}{Data perturbation $\sigma=0.5$}\\\hline
	\includegraphics[width=0.3\textwidth]{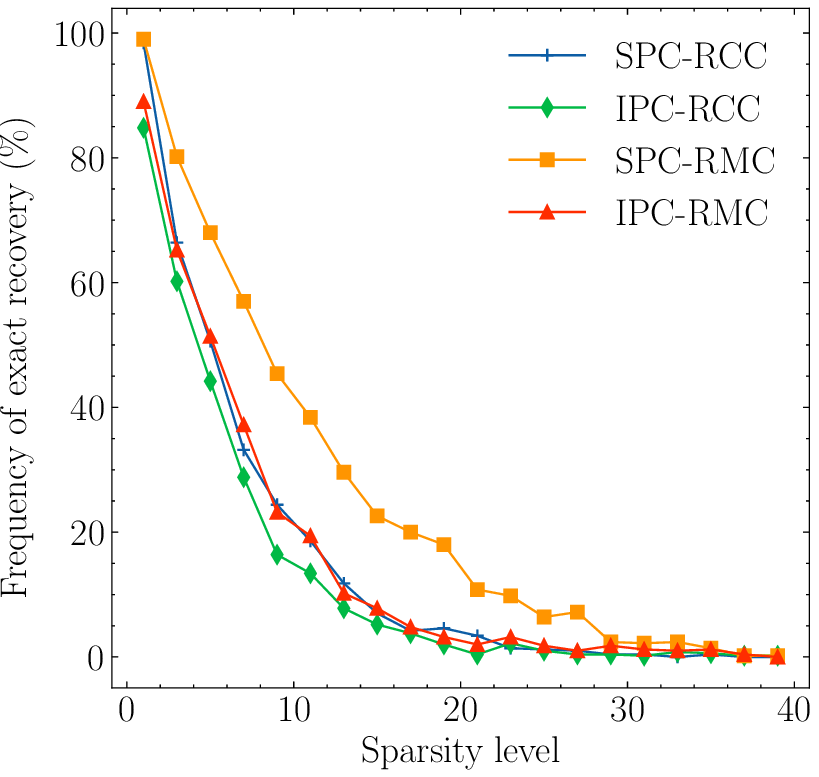}&
	\includegraphics[width=0.3\textwidth]{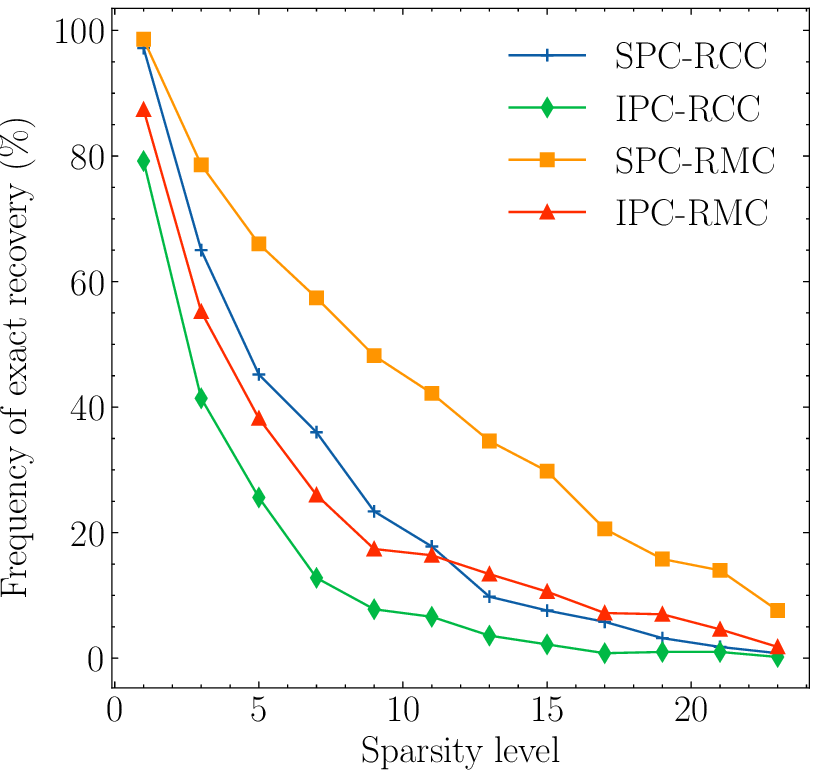}&
	\includegraphics[width=0.3\textwidth]{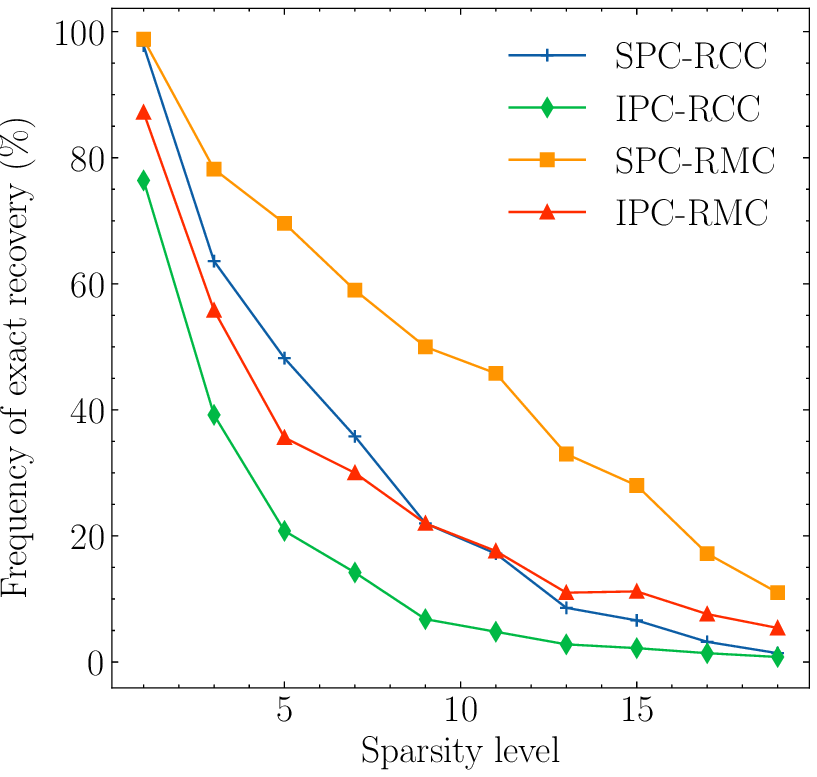} \\\hline
	\multicolumn{3}{c}{Data perturbation $\sigma=1.0$}\\\hline
	\includegraphics[width=0.3\textwidth]{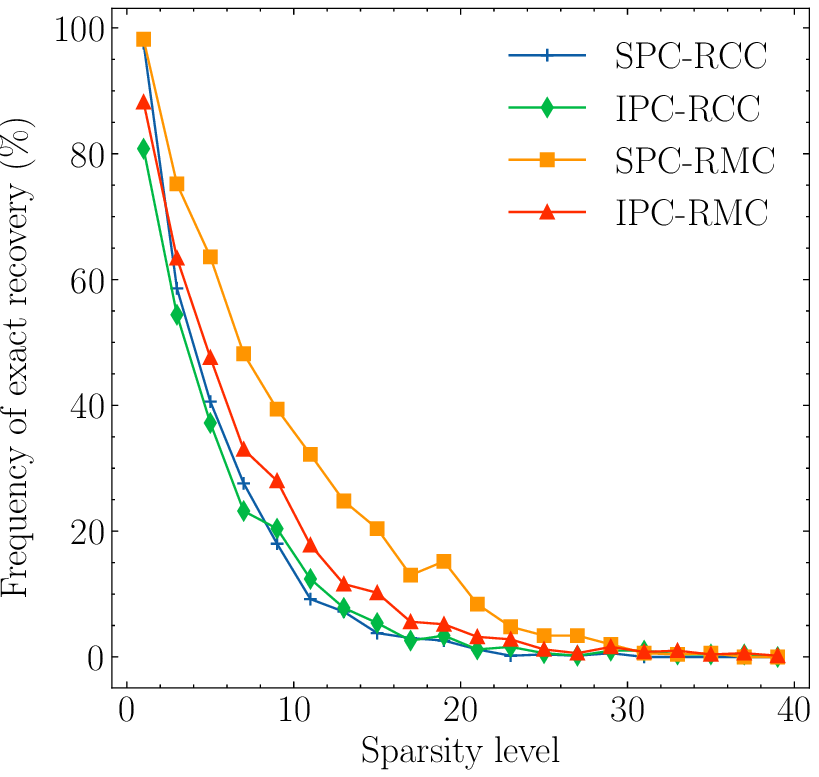}&
	\includegraphics[width=0.3\textwidth]{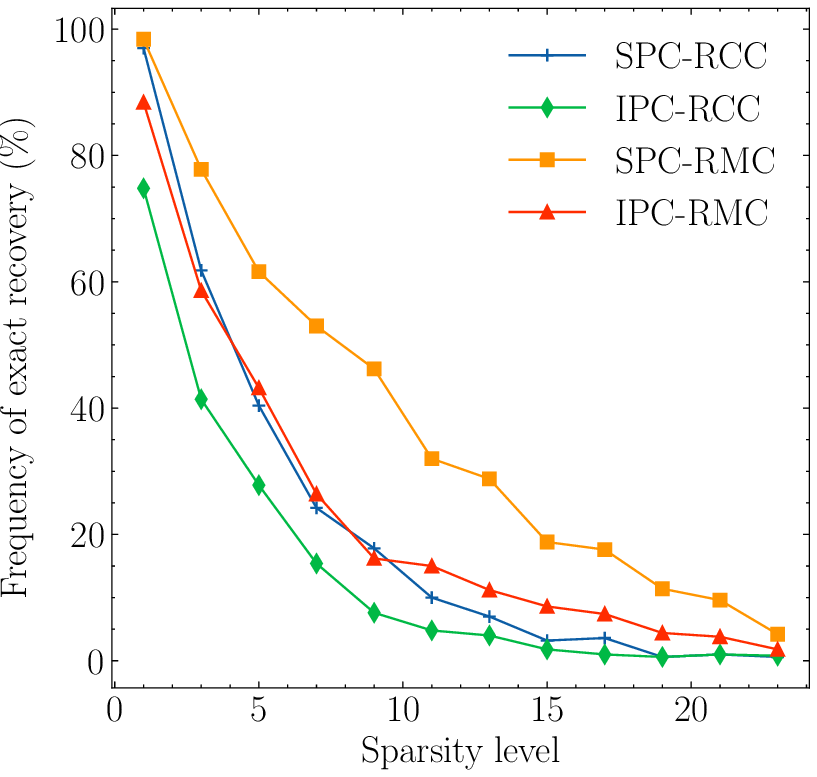}&
	\includegraphics[width=0.3\textwidth]{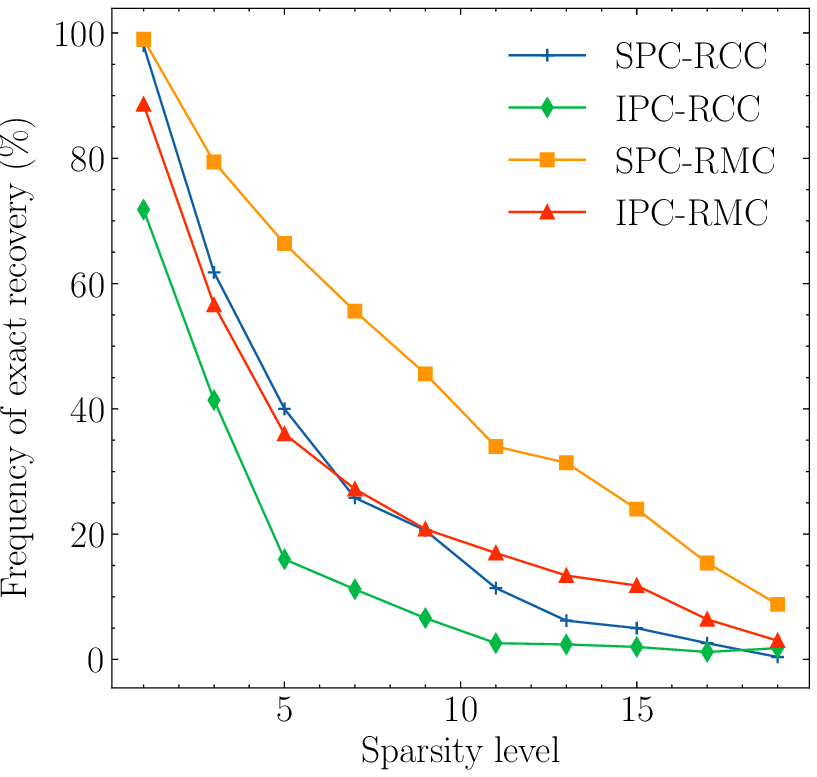}\\\hline
\end{tabular}
\caption{Ablation study of four combinations of two sets of criteria for candidate inclusion and exclusion using sampling matrices consisting of heterogeneous Gaussian blocks. Columns are normalized. The observed responses are perturbed by additive Gaussian noises with mean $0.0$ and standard deviation $\sigma>0$. From left to right, each column corresponds to $M=5,8,10$, respectively. From top to bottom, the rows correspond to $\sigma=0.1,0.5,1.0$. In particular, the combination of IPC-RCC is BSP, and SPC-RMC corresponds to GPSP. }
\label{fig:inexact_ablation}
\end{figure}

\section{Conclusion}
\label{sec.conclusion}
In this paper, we study the convergence analysis of the Group Projected Subspace Pursuit (GPSP) algorithm.
If the sampling matrix satisfies the BRIP condition with a sufficiently small BRIC, GPSP converges to the exact block-sparse solution. When the response signal contains noise, we derive an error bound for the reconstruction and show a sufficient condition for  GPSP to remain convergent. We conduct a comprehensive comparison of the criteria for feature selection and removal between GPSP and other methods. With both theoretical and numerical justifications, we show that the criteria of GPSP are more effective and robust, especially when columns of blocks are not orthogonal and the response signal contains noise. 
A series of numerical experiments are also conducted to verify the superior performances of GPSP under diverse settings and applications.
Tested with heterogeneous random blocks, exact and in-exact data, general types of randomness, face recognition, and PDE identification, GPSP shows the highest success rates of exact recovery for most cases. By a systematic ablation study, we find that the subspace projection criterion (SPC) is the main factor for the success of GPSP, and the response magnitude criterion (RMC) further enhances its performance especially when the data contains noise.  
\section*{Acknowledgement}
The work of Roy Y. He was partially supported by CityU 7200779. The work of H.X. Liu was supported in part by NSFC 11901220, HUST 2024JCYJ005 and Hubei Key Laboratory of Engineering Modeling and Scientific Computing. The work of Hao Liu was partially supported by National Natural Science Foundation of China  12201530, HKRGC ECS 22302123, HKBU 179356.
\appendix
\section*{Appendix}
\section{Uniqueness of block sparse representations}\label{sec::unique_sparse}
The uniqueness of  recovery is critical when reconstructing signals from sampled observations. This often requires that the kernel space of block matrix $\bA$ satisfies certain conditions.

One general characterization involves the compatibility between the kernel space of $\bA$ and the block patterns. Indeed, if $\bA\widetilde{\bc}=\bA\bar{\bc}$ for some $k$-sparse vectors $\widetilde{\bc}\neq \bar{\bc}$, then $\widetilde{\bc}-\bar{\bc}$ is a block $2k$-sparse vector in $\text{ker}(\bA)$; conversely, any non-zero block $2k$-sparse vector $\bc$ admits a decomposition $\bc=\widetilde{\bc}-\bar{\bc}$ where $\widetilde{\bc}\neq\bar{\bc}$, $|\gsup(\widetilde{\bc})|\leq k$ and $|\gsup(\bar{\bc})|\leq k$, and $\bc\in\text{ker}(\bA)$  implies $\bA\widetilde{\bc}=\bA\bar{\bc}$, thus proving the claim.  This observation is also made in~\cite{eldar2009robust}, and we restate it in the following proposition.

\begin{proposition}\label{prop1}
A block $k$-sparse signal $\bc$ is uniquely determined from $\by=\bA\bc$ if and only if the kernel of $\bA$ does not include any block $2k$-sparse vectors.
\end{proposition}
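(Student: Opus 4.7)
The plan is to establish the biconditional by proving the contrapositive of each implication, which is essentially the argument sketched in the paragraph preceding the statement. The key algebraic observation I would exploit is that the block support satisfies $\gsup(\bv + \bw) \subseteq \gsup(\bv) \cup \gsup(\bw)$, so sums and differences of block $k$-sparse vectors are block $2k$-sparse; conversely, any block $2k$-sparse vector can be split into a difference of two block $k$-sparse vectors by partitioning its block support.

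For the forward direction (uniqueness $\Rightarrow$ kernel has no nonzero block $2k$-sparse vector), I would argue by contraposition. Assuming a nonzero $\bc \in \ker(\bA)$ with $|\gsup(\bc)| \leq 2k$, I would pick any partition of $\gsup(\bc)$ into disjoint subsets $T_1, T_2$ with $|T_1|, |T_2| \leq k$, then define $\widetilde{\bc}$ by zeroing out blocks of $\bc$ outside $T_1$, and $\bar{\bc}$ by setting its $T_2$-blocks equal to the negatives of the corresponding blocks of $\bc$ and the rest to zero. Then $\widetilde{\bc}$ and $\bar{\bc}$ are block $k$-sparse, they are distinct (since $\widetilde{\bc} - \bar{\bc} = \bc \neq 0$), and $\bA\widetilde{\bc} - \bA\bar{\bc} = \bA\bc = 0$, producing two distinct block $k$-sparse signals mapping to the same observation, violating uniqueness.

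For the backward direction (kernel has no nonzero block $2k$-sparse vector $\Rightarrow$ uniqueness), I would again argue by contraposition. If two distinct block $k$-sparse vectors $\widetilde{\bc} \neq \bar{\bc}$ satisfy $\bA \widetilde{\bc} = \bA \bar{\bc}$, then their difference $\bd := \widetilde{\bc} - \bar{\bc}$ is nonzero, lies in $\ker(\bA)$, and by the subadditivity of $\gsup$ satisfies $|\gsup(\bd)| \leq |\gsup(\widetilde{\bc})| + |\gsup(\bar{\bc})| \leq 2k$, contradicting the hypothesis that the kernel contains no nonzero block $2k$-sparse vector.

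Because both directions reduce to the same splitting/combining argument, there is no serious obstacle here; the proof is essentially bookkeeping. The only point that requires slight care is the forward direction, where one must verify that an arbitrary partition of $\gsup(\bc)$ into two subsets of size at most $k$ is possible whenever $|\gsup(\bc)| \leq 2k$, and that the resulting $\widetilde{\bc}$ and $\bar{\bc}$ are genuinely distinct (which follows from $\bc$ being nonzero). The argument is structural and does not invoke BRIP or any quantitative property of $\bA$, so it applies to arbitrary block matrices.
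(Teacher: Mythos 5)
Your proof is correct and takes essentially the same approach as the paper: both directions rest on the observation that the difference of two distinct block $k$-sparse solutions is a nonzero block $2k$-sparse vector in $\ker(\bA)$, and conversely that any nonzero block $2k$-sparse kernel vector splits into a difference of two distinct block $k$-sparse vectors with equal images under $\bA$. Your explicit construction of the partition of $\gsup(\bc)$ and the signed blocks simply fills in the bookkeeping that the paper's sketch leaves implicit.
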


We discuss some implications of Proposition~\ref{prop1}. First, it indicates that columns within each block need to be linearly independent to ensure uniqueness, i.e., blocks are non-redundant. Second, we deduce that  the number of observations $N\geq \sup_{T:|T|\leq 2k}\text{dim}\left(\mathcal{P}_T\right)=2kM$ is necessary for unique recovery; this is an instantiation of the minimal sampling requirement for unique recovery from unions of subspaces~\cite{lu2008theory}. Third, the unique block $k$-sparse solution may have multiple $kM$-sparse representations. Indeed, block $k$-sparse vectors are $kM$ sparse vectors  but with particular placements of non-zero elements. The kernel of $\bA$ being absent of block $2k$-sparse vectors does not mean it does not include $2kM$-sparse vectors.  The number of possible distributions of non-zero elements for a $2kM$-sparse vector in $\mathbb{R}^{GM}$ is $\sum_{m=0}^{2kM}{\tbinom{GM}{m} }$, which is much greater than that of a block $k$-sparse vector $\sum_{s=0}^{2k}{\tbinom{G}{s}}(2^M-1)^s$. This confirms that leveraging the sparsity pattern yields significant dimension reduction.

Block sparse representations can also be unique for redundant blocks. The argument for proving Proposition~\ref{prop1}  remains valid  when focusing on full-rank submatrices of individual blocks, thus we have 
\begin{proposition}\label{prop2}
For a block matrix $\bA=[\bF_1,\dots,\bF_G]\in\mathbb{R}^{N\times GM}$, let $\overline{\bA}=[\overline{\bF}_1,\dots,\overline{\bF}_G]$ be the submatrix of $\bA$ where for each $g=1,\dots,G$, $\overline{\bF}_g$ is a submatrix of $\bF_g$ with $\text{rank}(\overline{\bF}_g)=\text{rank}(\bF_g)$.  A block $k$-sparse solution of $\by=\bA\bc$ is unique if and only if
the kernel of $\overline{\bA}$ does not include any block $2k$-sparse vectors.
\end{proposition}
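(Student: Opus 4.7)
The plan is to reduce Proposition~\ref{prop2} to Proposition~\ref{prop1} by exploiting the rank-preserving column selection that defines $\overline{\bA}$. For each $g$, since $\overline{\bF}_g$ is obtained by selecting a maximal linearly independent subset of the columns of $\bF_g$, there is a unique matrix $\mathbf{Q}_g\in\R^{r_g\times M}$ with $r_g:=\rank(\bF_g)$ satisfying $\bF_g=\overline{\bF}_g\mathbf{Q}_g$, and $\overline{\bF}_g$ has full column rank. This yields a block-wise linear map $\pi_g(\bc_g):=\mathbf{Q}_g\bc_g$ which preserves the block-wise action, $\bF_g\bc_g=\overline{\bF}_g\pi_g(\bc_g)$, and satisfies $\pi_g(\bc_g)=0$ if and only if $\bF_g\bc_g=0$ by injectivity of $\overline{\bF}_g$. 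I denote $\pi(\bc)=(\pi_1(\bc_1),\ldots,\pi_G(\bc_G))$, a vector naturally compatible with the block structure of $\overline{\bA}$.

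Because strict pointwise uniqueness of $\bc$ cannot hold whenever some $\bF_g$ is rank-deficient (any nonzero element of $\ker(\bF_g)$ would perturb $\bc_g$ without changing $\by$), I interpret ``uniqueness of a block $k$-sparse solution'' in the sense of within-block equivalence: any two block $k$-sparse solutions $\bc,\bc'$ of $\bA\bc=\by$ must satisfy $\bF_g\bc_g=\bF_g\bc'_g$ for every $g$. I would state this interpretation explicitly before beginning the proof. For the forward implication, assume $\ker(\overline{\bA})$ contains no nonzero block $2k$-sparse vector, and let $\bc,\bc'$ be two block $k$-sparse solutions. Each of $\pi(\bc),\pi(\bc')$ has at most $k$ nonzero blocks (since $\bc_g=0$ forces $\pi_g(\bc_g)=0$), so their difference is block $2k$-sparse, and $\overline{\bA}(\pi(\bc)-\pi(\bc'))=\bA\bc-\bA\bc'=0$. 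The hypothesis forces $\pi(\bc)=\pi(\bc')$, whence $\bF_g\bc_g=\overline{\bF}_g\pi_g(\bc_g)=\overline{\bF}_g\pi_g(\bc'_g)=\bF_g\bc'_g$ for all $g$.

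For the converse, given a nonzero block $2k$-sparse $\overline{\bd}\in\ker(\overline{\bA})$, I split its block support into two disjoint subsets of size at most $k$ and write $\overline{\bd}=\overline{\bc}-\overline{\bc}'$ with both $\overline{\bc},\overline{\bc}'$ block $k$-sparse. I then lift each to $\R^{GM}$ by placing the entries of $\overline{\bc}_g$ into the positions of $\bc_g$ that correspond to the selected columns of $\overline{\bF}_g$ and filling the rest with zeros. The resulting $\bc,\bc'$ are block $k$-sparse with $\bA\bc=\bA\bc'=:\by$, and $\bF_g\bc_g-\bF_g\bc'_g=\overline{\bF}_g\overline{\bd}_g$ is nonzero for some $g$ by the full column rank of $\overline{\bF}_g$, so within-block uniqueness fails.

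The main subtlety, and the step deserving the most care, is fixing and justifying the correct notion of uniqueness when blocks are rank-deficient; once the within-block equivalence interpretation is adopted, the argument becomes a clean reduction to Proposition~\ref{prop1} applied to $\overline{\bA}$, using that Proposition~\ref{prop1} continues to hold verbatim for block matrices with unequal block sizes.
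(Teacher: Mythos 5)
Your proof is correct and follows essentially the same route as the paper, which offers only a one-line sketch (``the argument for proving Proposition~\ref{prop1} remains valid when focusing on full-rank submatrices of individual blocks'') and defers the details to the cited reference of Elhamifar and Vidal. Your explicit factorization $\bF_g=\overline{\bF}_g\mathbf{Q}_g$, the induced block-wise map $\pi$, and the lifting construction for the converse carry out exactly that reduction to Proposition~\ref{prop1}, and your within-block-equivalence reading of ``uniqueness'' (i.e., uniqueness of the contributions $\bF_g\bc_g$) is the right and indeed necessary interpretation, since literal uniqueness of the coefficient vector $\bc$ fails whenever some block is rank-deficient---a point the paper glosses over.
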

A detailed proof can be found in~\cite{elhamifar2012block}, where another characterization of uniqueness concerning matrices with almost orthogonal blocks is available. We remark that both Proposition~\ref{prop1} and Proposition~\ref{prop2} provide a macroscopic perspective of the kernel space of $\bA$, and the sparsity level within each block is irrelevant.

Leveraging the characterization for unique sparse representations, we show a sufficient condition for unique block sparse representations. Introduced in the classical work by Donoho and Elad~\cite{donoho2003optimally}, the spark of $\bA$, denoted by $\spark(\bA)$, is defined as the smallest integer $m$ such that there exist $m$ columns from $\bA$ which are linearly dependent. If $A$ has full column rank, define $\spark(\bA)=\rank(\bA)+1$. Employing this concept within each block, we obtain the following result.
\begin{proposition}\label{prop3}
For a block matrix $\bA=[\bF_1,\dots,\bF_G]\in\mathbb{R}^{N\times GM}$ and any positive integer $1\leq k\leq G$, a block $k$-sparse vector $\bc=[\bc_1^\top,\dots,\bc_G^\top]^\top\in\mathbb{R}^{GM}$ is the unique solution of $\by=\bA\bc$ if 
\begin{align}
	|\supp(\bc_g)|\leq \frac{\spark(\bA)-1}{2G}\label{eq_sparse_spark}
\end{align}
for all $g=1,2,\dots,G$
\end{proposition}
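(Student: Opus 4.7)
The plan is to reduce this to the classical spark-based uniqueness result of Donoho and Elad~\cite{donoho2003optimally} by converting the per-block sparsity bound into a global bound on $\|\bc\|_0$. Specifically, summing $|\supp(\bc_g)| \leq (\spark(\bA)-1)/(2G)$ over $g=1,\dots,G$ gives
\begin{align*}
\|\bc\|_0 \;=\; \sum_{g=1}^G |\supp(\bc_g)| \;\leq\; G\cdot\frac{\spark(\bA)-1}{2G} \;=\; \frac{\spark(\bA)-1}{2},
\end{align*}
so that $2\|\bc\|_0 < \spark(\bA)$. The block structure plays no essential role here; the per-block condition is merely a convenient sufficient condition guaranteeing an aggregate sparsity estimate.

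Uniqueness then follows by the standard spark argument. Suppose $\widetilde{\bc} \in \mathbb{R}^{GM}$ satisfies $\bA\widetilde{\bc} = \by$ with $\|\widetilde{\bc}\|_0 \leq \|\bc\|_0$. Setting $\bd = \bc - \widetilde{\bc}$, we have $\bA\bd = 0$ and $\|\bd\|_0 \leq \|\bc\|_0 + \|\widetilde{\bc}\|_0 \leq 2\|\bc\|_0 < \spark(\bA)$. By definition, $\spark(\bA)$ is the smallest number of linearly dependent columns of $\bA$, so any non-zero element of $\ker(\bA)$ must involve at least $\spark(\bA)$ non-zero coefficients. This forces $\bd = 0$, whence $\widetilde{\bc} = \bc$.

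There is no significant technical obstacle; the proof is a direct two-step argument (aggregate bound plus spark uniqueness). The main conceptual value is that the sufficient condition is explicit and verifiable in terms of $\spark(\bA)$ alone, in contrast with the kernel characterizations of Proposition~\ref{prop1} and Proposition~\ref{prop2}, which require reasoning about block $2k$-sparse vectors in $\ker(\bA)$. As a minor remark, the same proof actually establishes the slightly stronger statement that $\bc$ is the unique minimum-$\ell_0$ solution of $\bA\bx=\by$ whenever $\sum_g |\supp(\bc_g)| \leq (\spark(\bA)-1)/2$, without insisting on a uniform per-block bound.
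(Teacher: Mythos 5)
Your proof is correct, and it runs on the same engine as the paper's own argument: the difference of two admissible solutions lies in $\ker(\bA)$ and has fewer than $\spark(\bA)$ non-zero entries, hence vanishes. The one substantive discrepancy is the class of competitors over which uniqueness is asserted. The paper's proof compares $\bc$ against any other block $k$-sparse vector $\bc'$ obeying the \emph{same per-block bound}~\eqref{eq_sparse_spark}, bounds each block difference $\bc_g-\bc'_g$ separately, and sums over blocks; such a competitor may well have $\|\bc'\|_0 > \|\bc\|_0$, so it is not covered by your hypothesis $\|\widetilde{\bc}\|_0 \le \|\bc\|_0$. Conversely, your class admits unstructured competitors that the paper's proof never considers, so the two classes are incomparable. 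The repair is one line and is already implicit in your aggregation step: any competitor obeying~\eqref{eq_sparse_spark} satisfies $\|\bc'\|_0 \le (\spark(\bA)-1)/2$, hence $\|\bc-\bc'\|_0 \le \spark(\bA)-1 < \spark(\bA)$ and the same spark argument forces $\bc'=\bc$. With the hypothesis ``$\|\widetilde{\bc}\|_0 \le \|\bc\|_0$'' replaced by ``$\widetilde{\bc}$ obeys~\eqref{eq_sparse_spark}'' (or by the aggregate bound $\|\widetilde{\bc}\|_0 \le (\spark(\bA)-1)/2$), your argument reproduces the paper's statement exactly; as written, it proves the classical Donoho--Elad unique-sparsest-solution form instead. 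Your closing observations are also accurate: the paper's proof, like yours, ultimately uses only the aggregate count (its per-block bounds are summed to $\spark(\bA)-1$), so the block structure in~\eqref{eq_sparse_spark} is not essential to either argument, and your minimum-$\ell_0$ variant with the relaxed hypothesis $\sum_g|\supp(\bc_g)|\le(\spark(\bA)-1)/2$ is a genuine, if mild, strengthening.
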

\begin{proof}
For any $k=1,2,\dots,G$, if $\bc$ and $\bc'$ are two block $k$-sparse vectors satisfying~\eqref{eq_sparse_spark} and $\bA \bc = \bA\bc'$, then for any $g=1,2,\dots,G$, either $\bc_g=\bc_g'$, or $\bc_g-\bc_g'\neq \mathbf{0}$ and is at most $\lfloor(\spark(\bA)-1)/G\rfloor$-sparse. Hence, $\bc-\bc'$ is at most $(\spark(\bA)-1)$-sparse. By the definition of spark, this is impossible since $\bA(\bc-\bc')=\mathbf{0}$.  Thus we have $\bc=\bc'$.
\end{proof}

We note that~\eqref{eq_sparse_spark} provides a sufficient condition for unique solutions of any levels of block sparsity.  Unlike Proposition~\ref{prop1}, each block $\bF_g$ can have linearly dependent columns; as long as the true signal has sufficiently many zeros in each block, it is unique.

To sum up, the uniqueness of block sparse representation can be developed by restricting the kernel space of $\bA$  according to the blocks or the co-linearity patterns among columns within each block. 
\section{Proof of Theorem \ref{thm1}}\label{app:thm1}
We first define some notations:
For any non-empty $T\subset\{1,2,\dots,G\}$, we have $\bA_T\in \mathbb{R}^{N\times |T|M}$ and $\bv_T\in \mathbb{R}^{|T|M}$ for any $\bA\in\mathbb{R}^{N\times GM}$ and $\bv\in\mathbb{R}^{GM}$. We define the pull back operator: 
\begin{align*}
&\pull(\widetilde{\bB},T)\in \mathbb{R}^{N\times GM} \ \mbox{ with } \ (\pull(\widetilde{\bB},T))_{T(i)}=(\widetilde{\bB})_i \mbox{ for } i=1,...,|T|, \ \mbox{ and } \ (\pull(\widetilde{\bB},T))_{g'}=\mathbf{0} \ \mbox{ for } g'\notin T,\\
&\pull(\widetilde{\bv},T)\in \mathbb{R}^{GM} \ \mbox{ with } \ (\pull(\widetilde{\bv},T))_{T(i)}=(\widetilde{\bv})_i \mbox{ for } i=1,...,|T|, \ \mbox{ and } \ (\pull(\widetilde{\bv},T))_{g'}=0 \ \mbox{ for } g'\notin T,
\end{align*}
where $T(i)$ denotes the $i$-th element in $T$. Generally speaking, the pull back operator pulls a sub-matrix or a sub-vector to their original dimension using indices in $T$ by filling zeros. Given two sets of indices $T_1,T_2$, we use the notation $T_1-T_2=\{g: g\in T_1, \ g\notin T_2\}, \ T_1+T_2=T_1\cup T_2$.

\begin{proof}[Proof of Theorem \ref{thm1}]
Note that
\begin{align}
	\by_r^{l-1}&=\resid(\by,\bA_{T^{l-1}}) \nonumber\\
	&=\resid(\bA_{T^*-T^{l-1}}\bc^*_{T^*-T^{l-1}},\bA_{T^{l-1}})+\resid(\bA_{T^*\cap T^{l-1}}\bc^*_{T^*\cap T^{l-1}},\bA_{T^{l-1}})\nonumber\\
	&=\resid(\bA_{T^*-T^{l-1}}\bc^*_{T^*-T^{l-1}},\bA_{T^{l-1}})\nonumber\\
	&=\bA_{T^*-T^{l-1}}\bc^*_{T^*-T^{l-1}} -\text{proj}(\bA_{T^*-T^{l-1}}\bc^*_{T^*-T^{l-1}},\bA_{T^{l-1}})\nonumber\\
	&=\begin{bmatrix}
		\bA_{T^*-T^{l-1}}&\bA_{T^{l-1}}
	\end{bmatrix}\begin{bmatrix}
		\bc^*_{T^*-T^{l-1}}\\
		\bc_{p,T^{l-1}}^*
	\end{bmatrix} \nonumber\\
	&=\bA_{T^*\cup T^{l-1}}\bx_r^{l-1},\label{eq_4}
\end{align}
where notation $\bc_{p,T^{l-1}}^*=-(\bA^{\top}_{T^{l-1}}\bA_{T^{l-1}})^{-1}\bA^{\top}_{T^{l-1}}(\bA_{T^*-T^{l-1}}\bc_{T^*-T^{l-1}})$ is used in the second last equality, and notation $\bx_r^{l-1}=\bz^{l-1}_{T^*\cup T^{l-1}}$ with 
$$
\bz^{l-1}\in \mathbb{R}^{GM}, \quad \bz=\begin{cases}
	\bc^*_g & \mbox{ if } g\in T^*-T^{l-1},\\
	(\pull(\bc^*_{p,T^{l-1}},T^{l-1}))_g & \mbox{ if } g\in T^{l-1},\\
	\mathbf{0} & \mbox{ otherwise}
\end{cases}
$$
is used in the last equality. 
Denote $T^l_\Delta =\wT^l-T^{l-1}$. Since $|T_\Delta^l|=|T^*|$, we have
\begin{align}
	|T_\Delta^l-(T_\Delta^l\cap T^*)|=|T^*-(T_\Delta^l\cap T^*)|.
\end{align}
Moreover, since $T^*-\wT^l\subset T^*-(T_\Delta^l\cap T^*)$, we have
\begin{align}
	|T_\Delta^l-(T_\Delta^l\cap T^*)|=|T^*-(T_\Delta^l\cap T^*)|\geq |T^*-\wT^l|.\label{eq_th1_main_inequality}
\end{align}
Therefore, according to GPSP (line 5), we arrive at 
\begin{align}
	\sqrt{\sum_{g\in T_\Delta^l-T^*}\left(\|\proj(\by_r^{l-1},\bF_g)\|_2\right)^2}\geq \sqrt{\sum_{g'\in T^*-\wT^{l} }\left(\|\proj(\by_r^{l-1},\bF_{g'})\|_2\right)^2}.\label{eq_5}
\end{align}
Notice that in the $l$-th iteration, $T_\Delta^l$ consists of the newly selected $k$ groups with highest values of projection according to the line 5 in Algorithm~\ref{alg_GPSP}, and among which, the true groups are specified by indices in $T^l_{\Delta}\cap T^*$; thus the set $ T^l_{\Delta}-T^*$ on the left-hand side of (\ref{eq_5}) contains the indices of newly selected groups which are false. On the right-hand side of (\ref{eq_5}), $T^*-T^{l-1}$ contains the indices of true groups that are not included in the previous iteration. 
Observe that for any $\bv\in\mathbb{R}^N$, we have
\begin{align}
	\|\proj(\bv,\bF_g)\|^2_2=&\|\bF_g(\bF_g^\top\bF_g)^{-1}\bF_g^\top\bv\|^2_2= |\bv^{\top}\bF_g(\bF_g^{\top}\bF_g)^{-1}\bF^{\top}_g\bv| \nonumber\\
	\leq &\frac{1}{\lambda_{\min}(\bF_g^\top\bF_g)}\|\bF_g^\top\bv\|_2^2 
	\leq \frac{1}{1-\delta_{M,1}}\|\bF_g^\top\bv\|_2^2,
	\label{eq.proj.ineq}
\end{align}
where $\lambda_{\min}(\cdot)$ denotes the smallest singular value of a matrix, and the last inequality follows from the BRIP condition~\eqref{eq_BRIP_condition}. We thus deduce
\begin{align}
	&\sum_{g\in T_\Delta^l-T^*}\|\text{proj}(\bA_{T^*\cup T^{l-1}}(\bx_r^{l-1})_{T^*\cup T^{l-1}},\bF_g)\|^2_2 \nonumber\\
	\leq &\sum_{g\in T_\Delta^l-T^*}\frac{1}{1-\delta_{M,1}}\|\bF_g^\top\bA_{T^*\cup T^{l-1}}(\bx_r^{l-1})_{T^*\cup T^{l-1}}\|^2_2 \nonumber\\
	=&\frac{1}{1-\delta_{M,1}}\|\bA_{T_\Delta^l-T^*}^\top\bA_{T^*\cup T^{l-1}}(\bx_r^{l-1})_{T^*\cup T^{l-1}}\|^2_2 \nonumber\\
	\le&\frac{2}{1-\delta_{M,1}}(\|\bA_{T_\Delta^l-T^*}^\top\bA_{T^*}(\bx_r^{l-1})_{T^*}\|^2_2+\|\bA_{T_\Delta^l-T^*}^\top\bA_{ T^{l-1}-T^*}(\bx_r^{l-1})_{ T^{l-1}-T^*}\|^2_2) \nonumber\\
	\leq&\frac{2\delta_{M,2k}^2}{1-\delta_{M,1}}\left(\|(\bx_r^{l-1})_{T^*}\|^2_2+\|(\bx_r^{l-1})_{T^{l-1}-T^*}\|^2_2\right) 
	\le\frac{2\delta_{M,2k}^2}{1-\delta_{M,1}}\|\bx_r^{l-1}\|^2_2,\label{eq_upper}
\end{align}
where the first inequality follows from (\ref{eq.proj.ineq}), the second inequality follows from Lemma \ref{lemma_inner}.

For the right-hand side of (\ref{eq_5}), observe that for any $\bv\in\mathbb{R}^N$, we have
\begin{align}
	\|\proj(\bv,\bF_g)\|^2_2=&\|\bF_g(\bF_g^\top\bF_g)^{-1}\bF_g^\top\bv\|^2_2= |\bv^{\top}\bF_g(\bF_g^{\top}\bF_g)^{-1}\bF^{\top}_g\bv| \nonumber\\
	\geq& \frac{1}{\lambda_{\max}(\bF_g^\top\bF_g)}\|\bF_g^\top\bv\|_2^2 
	\geq \frac{1}{1+\delta_{M,1}}\|\bF_g^\top\bv\|_2^2,
	\label{eq.proj.ineq2}
\end{align}
where the last inequality follows from the BRIP condition~\eqref{eq_BRIP_condition}.
We have
\begin{align*}
	&\sqrt{\sum_{g\in T^*-\tilde T^l}\|\text{proj}(\by^{l-1}_r,\bF_g)\|^2_2}
	=\sqrt{\sum_{g\in T^*-\tilde T^l}|(\by^{l-1}_r)^{\top} \bF_g(\bF^\top_g\bF_g)^{-1}\bF^\top_g\by^{l-1}_r|}\\
	\ge&\frac{1}{\sqrt{1+\delta_{M,1}}}\sqrt{\sum_{g\in T^*-\tilde T^l}\|\bF^\top_g\by^{l-1}_r\|^2_2}
	=\frac{1}{\sqrt{1+\delta_{M,1}}}\|\bA^\top_{T^*-\tilde T^l}\bA_{T^*\cup T^{l-1}}(\bx^{l-1}_r)_{T^*\cup T^{l-1}}\|_2\\
	\ge&\frac{1}{\sqrt{1+\delta_{M,1}}}\left(\|\bA^\top_{T^*-\tilde T^l}\bA_{T^*-\tilde T^l}(\bx^{l-1}_r)_{T^*-\tilde T^l}\|_2-\|\bA^\top_{T^*-\tilde T^l}\bA_{(T^*\cap \wT^{l})\cup T^{l-1}}(\bx^{l-1}_r)_{(T^*\cap \wT^{l})\cup T^{l-1}}\|_2\right),
\end{align*}
where (\ref{eq.proj.ineq2}) is used in the first inequality and relations $T^*\cup T^{l-1}=(T^*-\wT^{l})\cup((T^*\cap \wT^{l})\cup T^{l-1})$ and $(T^*-\wT^{l})\cap((T^*\cap \wT^{l})\cup T^{l-1})=\varnothing$ are used in the last inequality.

Since
\begin{align*}
	&\|\bA^\top_{T^*-\tilde T^l}\bA_{T^*-\tilde T^l}(\bx^{l-1}_r)_{T^*-\tilde T^l}\|_2\geq  (1-\delta_{M,k})\|(\bx^{l-1}_r)_{T^*-\tilde T^l}\|_2
\end{align*}
according to \eqref{eq_BRIP_condition} and 
\begin{align*}
	&\|\bA^\top_{T^*-\tilde T^l}\bA_{(T^*\cap \wT^{l})\cup T^{l-1}}(\bx^{l-1}_r)_{(T^*\cap \wT^{l})\cup T^{l-1}}\|_2\leq \delta_{M,2k}\|(\bx^{l-1}_r)_{(T^*\cap \wT^{l})\cup T^{l-1}}\|_2 \leq \delta_{M,2k}\|\bx^{l-1}_r\|_2
\end{align*}
according to Lemma \ref{lemma_inner}, we have
\begin{align}
	&\sqrt{\sum_{g\in T^*-\tilde T^l}\|\text{proj}(\by^{l-1}_r,\bF_g)\|^2_2}\geq \frac{1}{\sqrt{1+\delta_{M,1}}}\left((1-\delta_{M,k})\|(\bx^{l-1}_r)_{T^*-\tilde T^l}\|_2-\delta_{M,2k}\|\bx^{l-1}_r\|_2\right).
	\label{eq_lower}
\end{align}

Combining \eqref{eq_5} with \eqref{eq_upper}  and \eqref{eq_lower} gives rise to
\begin{align}
	\frac{\delta_{M,2k}}{1-\delta_{M,k}}\left(\sqrt{\frac{2(1+\delta_{M,1})}{1-\delta_{M,1}}}+1\right)\|\bx_r^{l-1}\|_2\geq \|(\bx^{l-1}_r)_{T^*-\tilde T^l}\|_2.\label{eq_combined_ineq}
\end{align}

Note that 
\begin{align}
	\|\bx_r^{l-1}\|_2 \leq  \|\bc_{T^*-T^{l-1}}^*\|_2+\|\bc_{p,T^{l-1}}^*\|_2.
	\label{eq.xr.decom}
\end{align}
To derive an upper bound of $\|\bx_r^{l-1}\|_2$ in terms of $\|\bc_{T^*-T^{l-1}}^*\|_2$, we first derive an upper bound of $\|\bc_{p,T^{l-1}}^*\|_2$:
\begin{align}
	\|\bc^*_{p,T^{l-1}}\|_2= &\|-(\bA^{\top}_{T^{l-1}}\bA_{T^{l-1}})^{-1}\bA^{\top}_{T^{l-1}}(\bA_{T^*-T^{l-1}}\bc_{T^*-T^{l-1}})\|_2 \nonumber\\
	\leq & \frac{1}{1-\delta_{M,k}}\|\bA^{\top}_{T^{l-1}}\bA_{T^*-T^{l-1}}\bc_{T^*-T^{l-1}}\|_2 
	\leq \frac{\delta_{M,2k}}{1-\delta_{M,k}}\|\bc^*_{T^*-T^{l-1}}\|_2,
	\label{eq.cpt}
\end{align}
where the first inequality uses (\ref{eq.proj.ineq}) and the second inequality uses Lemma~\ref{lemma_inner}. 

Substituting (\ref{eq.cpt}) into (\ref{eq.xr.decom}) gives rise to
\begin{align}
	\|\bx_r^{l-1}\|_2\leq\left(1+\frac{\delta_{M,2k}}{1-\delta_{M,k}}\right)\|\bc_{T^*-T^{l-1}}^*\|_2= \frac{1-\delta_{M,k}+\delta_{M,2k}}{1-\delta_{M,k}}\|\bc_{T^*-T^{l-1}}^*\|_2.
	\label{eq_result1}
\end{align}
Since $T^{l-1}\subset\wT^l$, we have
\begin{align}
	(\bx_r^{l-1})_{T^*-\wT^{l}}=\bc^*_{T^*-\wT^l}.\label{eq_result2}
\end{align}
Therefore
\begin{align*}
	\|\bc^*_{T^*-\wT^l}\|_2=&\|(\bx_r^{l-1})_{T^*-\wT^{l}}\|_2
	\leq \frac{\delta_{M,2k}}{1-\delta_{M,k}}\left(\sqrt{\frac{2(1+\delta_{M,1})}{1-\delta_{M,1}}}+1\right)\|\bx_r^{l-1}\|_2\\
	\leq &\frac{\delta_{M,2k}(1-\delta_{M,k}+\delta_{M,2k})}{(1-\delta_{M,k})^2}\left(\sqrt{\frac{2(1+\delta_{M,1})}{1-\delta_{M,1}}}+1\right)\|\bc_{T^*-T^{l-1}}^*\|_2,
\end{align*}
where the first inequality follows from \eqref{eq_combined_ineq} and the second inequality follows from \eqref{eq_result1}. The theorem is proved.

\end{proof}
\section{Proof of Theorem~\ref{thm2}}\label{app:thm2}
\begin{proof}[Proof of Theorem~\ref{thm2}]
Our proof is based on the following inequality
\begin{align}
	\|\bc^*_{T^*-T^l}\|_2&\leq \|\bc^*_{T^*\cap(\wT^l-T^l)}\|_2+\|\bc^*_{T^*-\wT^l}\|_2.
	\label{eq.bc.ine}
\end{align}
We will derive an upper bound of $\|\bc^*_{T^*\cap(\wT^l-T^l)}\|_2$ in terms of $\|\bc^*_{T^*-\wT^l}\|_2$.

By construction in Algorithm \ref{alg_GPSP}, we have
\begin{align*}
	\bx_p^l = \bA_{\wT^l}^\dagger\by &=\bA_{\wT^l}^\dagger\bA_{T^*}\bc^*_{T^*}=\bA_{\wT^l}^\dagger\bA_{T^*\cap\wT^l}\bc^*_{T^*\cap\wT^l}+\bA_{\wT^l}^\dagger\bA_{T^*-\wT^l}\bc^*_{T^*-\wT^l}\\
	&=\bA_{\wT^l}^\dagger\bA_{\wT^l}\bc^*_{\wT^l}+\bA_{\wT^l}^\dagger\bA_{T^*-\wT^l}\bc^*_{T^*-\wT^l}=\bc^*_{\wT^l}+\bA_{\wT^l}^\dagger\bA_{T^*-\wT^l}\bc^*_{T^*-\wT^l}.
\end{align*}
Denote $T^l_\Delta =\wT^l-T^{l-1}$. We have
\begin{align}
	\|\bx_p^l-\bc^*_{\wT^l}\|^2_2 =& \|\bA_{\wT^l}^\dagger\bA_{T^*-\wT^l}\bc^*_{T^*-\wT^l}\|^2_2 
	= \|(\bA_{\wT^l}^\top\bA_{\wT^l})^{-1}\bA_{\wT^l}^\top\bA_{T^*-\wT^l}\bc^*_{T^*-\wT^l}\|^2_2 \nonumber\\
	\leq& \frac{1}{(1-\delta_{M,2k})^2}\|\bA_{\wT^l}^\top\bA_{T^*-\wT^l}\bc^*_{T^*-\wT^l}\|^2_2 \nonumber\\
	\le&\frac{1}{(1-\delta_{M,2k})^2}\left( \|\bA_{T^l_{\Delta}}^{\top}\bA_{T^*-\wT^l}\bc^*_{T^*-\wT^l}\|^2_2+ \|\bA_{T^{l-1}}^{\top}\bA_{T^*-\wT^l}\bc^*_{T^*-\wT^l}\|^2_2\right) \nonumber\\
	\leq&\frac{2\delta_{M,2k}^2}{(1-\delta_{M, 2k})^2}\|\bc^*_{T^*-\wT^l}\|^2_2,
	\label{eq_temp6}
\end{align}
where the first inequality uses (\ref{eq.proj.ineq}), the second inequality follows from $\widetilde{T}=T_{\Delta}^l+T^{l-1}$ and $T_{\Delta}^l\cap T^{l-1}=\varnothing$, the last inequality uses Lemma \ref{lemma_inner}.
Therefore,
\begin{align}
	\|\bx_p^l-\bc^*_{\wT^l}\|_2\le\frac{\sqrt{2}\delta_{M,2k}}{1-\delta_{M, 2k}}\|\bc^*_{T^*-\wT^l}\|_2.
	\label{eq.xplc*.upper}
\end{align}
Denote a non-empty subset $T'\subset\wT^l$ such that $T^*\cap T'=\varnothing$ and $|T'|=k$.  We have
\begin{align}
	\left(\pull\left(\bx^l_p,\widetilde{T}^l\right)\right)_{T'}=\left(\pull\left(\bx^l_p,\widetilde{T}^l\right)-\bc^*\right)_{T'}+\bc^*_{T'}=\left(\pull\left(\bx^l_p,\widetilde{T}^l\right)-\bc^*\right)_{T'}
\end{align}
and thus 
\begin{align}
	\left\|\left(\pull\left(\bx^l_p,\widetilde{T}^l\right)\right)_{T'}\right\|_2\leq \left\|\bx^l_p-\bc^*_{\wT^l}\right\|_2.
	\label{eq.xplc*}
\end{align}

According to GP-SP, we have
\begin{align}
	&\left\|\left(\pull\left(\bx^l_p,\widetilde{T}^l\right)\right)_{\wT^l-T^l}\right\|^2_2=\sum_{g\in\wT^l-T^l}\left\|\left(\pull\left(\bx^l_p,\widetilde{T}^l\right)\right)_{g}\right\|^2_2 \nonumber\\
	\leq& \frac{1}{1-\delta_{M,1}}\sum_{g\in\wT^l-T^l}\left\|\bF_g\left(\pull\left(\bx^l_p,\widetilde{T}^l\right)\right)_{g}\right\|^2_2
	\leq \frac{1}{1-\delta_{M,1}}\sum_{g\in T'}\left\|\bF_g\left(\pull\left(\bx^l_p,\widetilde{T}^l\right)\right)_{g}\right\|^2_2 \nonumber\\
	\leq &\frac{1+\delta_{M,1}}{1-\delta_{M,1}}\sum_{g\in T'}\left\|\left(\pull\left(\bx^l_p,\widetilde{T}^l\right)\right)_{g}\right\|^2_2 
	=\frac{1+\delta_{M,1}}{1-\delta_{M,1}}\left\|\left(\pull\left(\bx^l_p,\widetilde{T}^l\right)\right)_{T'}\right\|_2^2 
	\leq  \frac{1+\delta_{M,1}}{1-\delta_{M,1}}\left\|\bx^l_p-\bc^*_{\wT^l}\right\|_2^2,
	\label{eq_temp4}
\end{align}
where the first and third inequality comes from (\ref{eq_BRIP_condition}), the second inequality follows from the criterion in selecting $T^l$ in Algorithm \ref{alg_GPSP} line 7, the last inequality follows from (\ref{eq.xplc*}).

Again for $\left\|\left(\pull\left(\bx^l_p,\widetilde{T}^l\right)\right)_{\wT^l-T^l}\right\|_2$, we have
\begin{align}
	\left\|\left(\pull\left(\bx^l_p,\widetilde{T}^l\right)\right)_{\wT^l-T^l}\right\|_2&=\left\|\left(\pull\left(\bx^l_p-\bc^*_{\wT^l}+\bc^*_{\wT^l},\widetilde{T}^l\right)\right)_{\wT^l-T^l}\right\|_2 \nonumber\\
	&=\left\|\left(\pull\left(\bx^l_p-\bc^*_{\wT^l},\widetilde{T}^l\right)\right)_{\wT^l-T^l}+\bc^*_{\wT^l-T^l}\right\|_2 \nonumber\\
	&\geq \|\bc^*_{\wT^l-T^l}\|_2-\left\|\left(\pull\left(\bx^l_p-\bc^*_{\wT^l},\widetilde{T}^l\right)\right)_{\wT^l-T^l}\right\|_2 \nonumber\\
	&\geq \|\bc^*_{\wT^l-T^l}\|_2-\|\bx_p^l-\bc^*_{\wT^l}\|_2.
	\label{eq.xpl.lower}
\end{align}
Putting \eqref{eq_temp4} and (\ref{eq.xpl.lower}) together, we get
\begin{align*}
	(\sqrt{1+\delta_{M,1}}+\sqrt{1-\delta_{M,1}})\|\bx_p^l-\bc^*_{\wT^l}\|_2\geq \sqrt{1-\delta_{M,1}}\|\bc^*_{\wT^l-T^l}\|_2.
\end{align*}
Furthermore, we have
\begin{align}
	\|\bc^*_{T^*\cap(\wT^l-T^l)}\|_2\leq\|\bc^*_{(\wT^l-T^l)}\|_2\leq \left(1+\sqrt{\frac{1+\delta_{M,1}}{1-\delta_{M,1}}}\right)\|\bx_p^l-\bc_{\wT^l}^*\|_2.
	\label{eq_temp5}
\end{align}
Substituting (\ref{eq_temp5}) into (\ref{eq.bc.ine}) gives
\begin{align}
	\|\bc^*_{T^*-T^l}\|_2&\leq \|\bc^*_{T^*\cap(\wT^l-T^l)}\|_2+\|\bc^*_{T^*-\wT^l}\|_2 \nonumber\\
	&\leq \left(1+\sqrt{\frac{1+\delta_{M,1}}{1-\delta_{M,1}}}\right)\|\bx_p^l-\bc_{\wT^l}^*\|_2+\|\bc^*_{T^*-\wT^l}\|_2 \nonumber\\
	&\leq\left(1+ \frac{\sqrt{2}\delta_{M,2k}}{1-\delta_{M, 2k}}\left(1+\sqrt{\frac{1+\delta_{M,1}}{1-\delta_{M,1}}}\right)\right)\|\bc^*_{T^*-\wT^l}\|_2,
\end{align}
where the last inequality follows from (\ref{eq.xplc*.upper}).
\end{proof}
\section{Proof of Theorem~\ref{thm3}}\label{app:thm3}
\begin{proof}[Proof of Theorem~\ref{thm3}]
For $\|\by_r^l\|_2$, according to Algorithm \ref{alg_GPSP} (line 8),
\begin{align}
	\|\by_r^l\|_2&=\|\text{resid}(\by,\bA_{T^l})\|_2 \nonumber\\
	&=\|\text{resid}(\bA_{T^*-T^l}\bc^*_{T^*-T^l},\bA_{T^l})+\text{resid}(\bA_{T^*\cap T^l}\bc^*_{T^*\cap T^l},\bA_{T^l})\|_2 \nonumber\\
	&=\|\text{resid}(\bA_{T^*-T^l}\bc^*_{T^*-T^l},\bA_{T^l})\|_2 
	\leq \|\bA_{T^*-T^l}\bc^*_{T^*-T^l}\|_2 \nonumber\\
	&\leq\sqrt{1+\delta_{M,k}}\|\bc^*_{T^*-T^l}\|_2 
	\leq\mu_k\sqrt{1+\delta_{M,k}}\|\bc^*_{T^*-\wT^l}\|_2 \leq \mu_k\beta_k\sqrt{1+\delta_{M,k}}\|\bc^*_{T^*-T^{l-1}}\|_2,
	\label{eq_temp7}
\end{align}
where the second inequality comes from \eqref{eq_BRIP_condition}, the third and fourth inequality follows from Theorem \ref{thm2} and Theorem \ref{thm1}, respectively.

For $\|\by_r^{l-1}\|_2$, we deduce
\begin{align}
	\|\by_r^{l-1}\|_2&=\|\text{resid}(\by,\bA_{T^{l-1}})\|_2 =\|\text{resid}(\bA_{T^*-T^{l-1}}\bc^*_{T^*-T^{l-1}},\bA_{T^{l-1}})\|_2 \nonumber\\
	&\geq\frac{1-\delta_{M,k}-\delta_{M,2k}}{1-\delta_{M,k}}\|\bA_{T^*-T^{l-1}}\bc^*_{T^*-T^{l-1}}\|_2 \geq \frac{1-\delta_{M,k}-\delta_{M,2k}}{\sqrt{1-\delta_{M,k}}}\|\bc^*_{T^*-T^{l-1}}\|_2,
	\label{eq.yl-1}
\end{align}
where the first inequality follows from Lemma \ref{lemma_proj}, the second inequality follows from \eqref{eq_BRIP_condition}.

Putting \eqref{eq_temp7} and (\ref{eq.yl-1}) together gives rise to
\begin{align*}
	\|\by_r^l\|_2\leq\frac{\mu_k\beta_k\sqrt{1-\delta^2_{M,k}} }{1-\delta_{M,k}-\delta_{M,2k}}\|\by_r^{l-1}\|_2.
\end{align*}
\end{proof}

\section{Proof of Theorem \ref{thm.inaccurate}}\label{app:inacc}

\begin{proof}[Proof of Theorem \ref{thm.inaccurate}]
We deduce
\begin{align*}
	\|\bc^*-\widehat{\bc}\|_2&\leq \|\bc^*_{\widehat{T}}-\bA^\dagger_{\widehat{T}}\by\|_2+\|\bc^*_{T-\widehat{T}}\|_2\\
	&\leq\|\bc^*_{\widehat{T}}-\bA^\dagger_{\widehat{T}}(\bA_T\bc^*_T+\be)\|_2+\|\bc^*_{T-\widehat{T}}\|_2\\
	&\leq\|\bc^*_{\widehat{T}}-\bA^\dagger_{\widehat{T}}\bA_T\bc^*_T\|_2+\|\bA_{\widehat{T}}^\dagger\be\|_2+\|\bc^*_{T-\widehat{T}}\|_2\\
	&=\|\bc^*_{\widehat{T}}-\bA^\dagger_{\widehat{T}}\bA_T\bc^*_T\|_2+\|(\bA_{\widehat{T}}^{\top}\bA_{\widehat{T}})^{-1}\bA_{\widehat{T}}^{\top}\be\|_2+\|\bc^*_{T-\widehat{T}}\|_2\\
	&\overset{\eqref{eq_BRIP_condition}}{\leq}\|\bc^*_{\widehat{T}}-\bA^\dagger_{\widehat{T}}\bA_T\bc^*_T\|_2+\frac{1}{\sqrt{1-\delta_{M,k}}}\|\bA_{\widehat{T}}(\bA_{\widehat{T}}^{\top}\bA_{\widehat{T}})^{-1}\bA_{\widehat{T}}^{\top}\be\|_2+\|\bc^*_{T-\widehat{T}}\|_2\\
	&=\|\bc^*_{\widehat{T}}-\bA^\dagger_{\widehat{T}}\bA_T\bc^*_T\|_2+\frac{1}{\sqrt{1-\delta_{M,k}}}\|\text{proj}(\be,\bA_{\widehat{T}})\|_2+\|\bc^*_{T-\widehat{T}}\|_2\\
	&=\|(A^T_{\hat T}A_{\hat T})^{-1}A^T_{\hat T}(A_{\hat T}\bc^*_{\hat T}-A_T\bc^*_T)\|_2+\frac{1}{\sqrt{1-\delta_{M,k}}}\|\text{proj}(\be,\bA_{\widehat{T}})\|_2+\|\bc^*_{T-\widehat{T}}\|_2\\
	&=\|(A^T_{\hat T}A_{\hat T})^{-1}A^T_{\hat T}(A_{\hat T\cap T}\bc^*_{\hat T\cap T}-A_T\bc^*_T)\|_2+\frac{1}{\sqrt{1-\delta_{M,k}}}\|\text{proj}(\be,\bA_{\widehat{T}})\|_2+\|\bc^*_{T-\widehat{T}}\|_2\\
	&=\|(A^T_{\hat T}A_{\hat T})^{-1}A^T_{\hat T}A_{T-\hat T}\bc^*_{T-\hat T}\|_2+\frac{1}{\sqrt{1-\delta_{M,k}}}\|\text{proj}(\be,\bA_{\widehat{T}})\|_2+\|\bc^*_{T-\widehat{T}}\|_2\\
	&\le\frac{\delta_{M,2k}}{1-\delta_{M,k}}\|\bc^*_{T-\widehat{T}}\|_2+\frac{1}{\sqrt{1-\delta_{M,k}}}\|\text{proj}(\be,\bA_{\widehat{T}})\|_2+\|\bc^*_{T-\widehat{T}}\|_2\\
	&=\frac{1+\delta_{M,2k}-\delta_{M,k}}{1-\delta_{M,k}}\|\bc^*_{T-\widehat{T}}\|_2 + \frac{1}{\sqrt{1-\delta_{M,k}}}\|\text{proj}(\be,\bA_{\widehat{T}})\|_2,
\end{align*}
where the last inequality follows from (\ref{eq.proj.ineq}) and Lemma \ref{lemma_inner}.
\end{proof}
\section{Proof of Theorem~\ref{thm.inaccurate2}}\label{sec::proof_inaccurate2}
The proof of Theorem~\ref{thm.inaccurate2} relies on the following two lemmas:
\begin{lemma}
\label{lemma:1}
Under the conditions of Theorem~\ref{thm.inaccurate}, we have
\begin{align}
	\|\bc^*_{T^*-\widetilde{T}^l}\|_2\leq a\|\bc^*_{T^*-T^{l-1}}\|+b\|\be\|_2,\label{eq::claim1}
\end{align}
where 
\begin{equation}
	\label{eq:value_ab}
	\begin{cases}a&=\delta_{M,2k}\left(\sqrt{\frac{2}{1-\delta_{M,1}}}+\frac{1}{\sqrt{1+\delta_{M,1}}}\right)\frac{1-\delta_{M,k}+\delta_{M,2k}}{1-\delta_{M,k}}\frac{\sqrt{1+\delta_{M,1}}}{1-\delta_{M,k}},\\
		b&=2\sqrt{\frac{1+\delta_{M,k}}{1-\delta_{M,1}}}\frac{\sqrt{1+\delta_{M,1}}}{1-\delta_{M,k}}.
	\end{cases}
\end{equation}
\end{lemma}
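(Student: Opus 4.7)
The plan is to mirror the proof of Theorem~\ref{thm1} but carry the noise term $\be$ through each step. First, I would observe that
$$\by_r^{l-1}=\text{resid}(\bA\bc^*,\bA_{T^{l-1}})+\text{resid}(\be,\bA_{T^{l-1}})=\bA_{T^*\cup T^{l-1}}\bx_r^{l-1}+\bar\be,$$
where $\bx_r^{l-1}$ is exactly the vector constructed in~\eqref{eq_4} and $\bar\be:=\text{resid}(\be,\bA_{T^{l-1}})$ satisfies $\|\bar\be\|_2\leq\|\be\|_2$ since orthogonal projection is non-expansive. Crucially, the bound~\eqref{eq_result1} on $\|\bx_r^{l-1}\|_2$ derived in the proof of Theorem~\ref{thm1} depends only on the clean signal $\bc^*$, so it remains valid in the noisy setting.

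Next, the GPSP expansion step (line~5 of Algorithm~\ref{alg_GPSP}) together with the cardinality argument~\eqref{eq_th1_main_inequality} still yields
$$\sqrt{\sum_{g\in T_\Delta^l-T^*}\|\text{proj}(\by_r^{l-1},\bF_g)\|_2^2}\geq\sqrt{\sum_{g\in T^*-\wT^l}\|\text{proj}(\by_r^{l-1},\bF_g)\|_2^2}.$$
I would exploit the linearity of projection and the $\ell^2$-triangle inequality over the group index $g$ to split each side into a clean contribution (from $\bA_{T^*\cup T^{l-1}}\bx_r^{l-1}$) and a noise contribution (from $\bar\be$). The clean upper bound on the left reproduces~\eqref{eq_upper} with value $\delta_{M,2k}\sqrt{2/(1-\delta_{M,1})}\|\bx_r^{l-1}\|_2$, and the clean lower bound on the right reproduces~\eqref{eq_lower}. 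The noise contribution on either side is bounded by $\sqrt{(1+\delta_{M,k})/(1-\delta_{M,1})}\|\be\|_2$, obtained by applying~\eqref{eq.proj.ineq} to $\bar\be$ and using the operator-norm bound $\|\bA_S^\top\|\leq\sqrt{1+\delta_{M,k}}$ from Lemma~\ref{prop4} for $S\subseteq\{1,\dots,G\}$ with $|S|\leq k$.

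Combining and rearranging would give
$$\frac{1-\delta_{M,k}}{\sqrt{1+\delta_{M,1}}}\|(\bx_r^{l-1})_{T^*-\wT^l}\|_2\leq\delta_{M,2k}\!\left(\sqrt{\tfrac{2}{1-\delta_{M,1}}}+\tfrac{1}{\sqrt{1+\delta_{M,1}}}\right)\!\|\bx_r^{l-1}\|_2+2\sqrt{\tfrac{1+\delta_{M,k}}{1-\delta_{M,1}}}\|\be\|_2,$$
and the claim~\eqref{eq::claim1} with the constants in~\eqref{eq:value_ab} follows after multiplying both sides by $\sqrt{1+\delta_{M,1}}/(1-\delta_{M,k})$, using $\|(\bx_r^{l-1})_{T^*-\wT^l}\|_2=\|\bc^*_{T^*-\wT^l}\|_2$ (because $T^{l-1}\subseteq\wT^l$), and finally substituting the bound~\eqref{eq_result1}. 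The main obstacle is bookkeeping: the factor of~$2$ in $b$ must be accounted for correctly, and it arises because the $\|\be\|_2$-term enters once additively in the upper bound of the false-selected projections and once more subtractively in the lower bound of the missed-true projections, which the two-sided use of the $\ell^2$-triangle inequality accumulates precisely as in~\eqref{eq:value_ab}.
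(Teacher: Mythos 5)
Your proposal is correct and follows essentially the same route as the paper's proof: the same decomposition $\by_r^{l-1}=\bA_{T^*\cup T^{l-1}}\bx_r^{l-1}+\resid(\be,\bA_{T^{l-1}})$, the same splitting of the expansion-step inequality into clean and noise contributions bounded by $\sqrt{(1+\delta_{M,k})/(1-\delta_{M,1})}\,\|\be\|_2$ on each side, and the same reuse of \eqref{eq_upper}, \eqref{eq_lower}, \eqref{eq_result1}, and \eqref{eq_result2}. Your accounting for the factor of $2$ in $b$ matches the paper exactly.
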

\begin{proof}[Proof of Lemma \ref{lemma:1}]
Use the relation in (\ref{eq_4}) and define $\bx_r^{l-1}$ accordingly, we decompose $\by_r^{l-1}$ as
\begin{align*}
	\by_r^{l-1}=\bA_{T^*\cup T^{l-1}}\bx_r^{l-1}+\resid(\be, \bA_{T^{l-1}}).
\end{align*}
On one hand, we have
\begin{equation}
	\label{eq:left}
	\begin{split}
		&\sqrt{\sum_{g\in T^*-\widetilde{T}^l}\|\proj(\by_r^{l-1},\bF_g)\|_2^2}\\
		\geq& \sqrt{\sum_{g\in T^*-\widetilde{T}^l}\|\proj(\bA_{T^*\cup T^{l-1}}\bx_r^{l-1},\bF_g)\|_2^2}-\sqrt{\sum_{g\in T^*-\widetilde{T}^l}\|\proj(\resid(\be, \bA_{T^{l-1}}),\bF_g)\|_2^2}.
	\end{split}
\end{equation}

Similarly to the derivations in~\eqref{eq_upper}, we can deduce
\begin{align*}
	\sum_{g\in T^*-\widetilde{T}^l}\|\proj(\resid(\be, \bA_{T^{l-1}}),\bF_g)\|_2^2
	&\leq \frac{1}{1-\delta_{M,1}}\|\bA_{T^*-\widetilde{T}^l}^\top\resid(\be,\bA_{T^{l-1}})\|_2^2\leq \frac{1+\delta_{M,k}}{1-\delta_{M,1}}\|\be\|_2^2.
\end{align*}
On the other hand, denote $T^l_\Delta =\wT^l-T^{l-1}$. We have
\begin{equation}
	\label{eq:right}
	\begin{split}
		&\sqrt{\sum_{g\in T_\Delta^l-T^*}\|\proj(\by_r^{l-1},\bF_g)\|_2^2}\\
		\leq&\sqrt{\sum_{g\in T_\Delta^l-T^*}\|\proj(\bA_{T^*\cup T^{l-1}}\bx_r^{l-1},\bF_g)\|_2^2}+\sqrt{\sum_{g\in T_\Delta^l-T^*}\|\proj(\resid(\be, \bA_{T^{l-1}}),\bF_g)\|_2^2}\\
		\leq& \sqrt{\sum_{g\in T_\Delta^l-T^*}\|\proj(\bA_{T^*\cup T^{l-1}}\bx_r^{l-1},\bF_g)\|_2^2}+\sqrt{\frac{1+\delta_{M,k}}{1-\delta_{M,1}}}\|\be\|_2.
	\end{split}
\end{equation}
Combining \eqref{eq:left} and \eqref{eq:right},  we arrive at
\begin{align}
	\sqrt{\sum_{g\in T_\Delta^l-T^*}\|\proj(\bA_{T^*\cup T^{l-1}}\bx_r^{l-1},\bF_g)\|_2^2}+2\sqrt{\frac{1+\delta_{M,k}}{1-\delta_{M,1}}}\|\be\|_2\geq \sqrt{\sum_{g\in T^*-\widetilde{T}^l}\|\proj(\bA_{T^*\cup T^{l-1}}\bx_r^{l-1},\bF_g)\|_2^2}.
\end{align}
The derivation in Section~\ref{app:thm1} is applicable. In particular, we apply~\eqref{eq_upper} and~\eqref{eq_lower} and get
\begin{align}
	&\delta_{M,2k}\left(\sqrt{\frac{2}{1-\delta_{M,1}}}+\frac{1}{\sqrt{1+\delta_{M,1}}}\right)\|\bx_r^{l-1}\|_2+2\sqrt{\frac{1+\delta_{M,k}}{1-\delta_{M,1}}}\|\be\|_2\geq  \frac{1-\delta_{M,k}}{\sqrt{1+\delta_{M,1}}}\|(\bx^{l-1}_r)_{T^*-\tilde T^l}\|_2.
\end{align}
Now applying~\eqref{eq_result1} and~\eqref{eq_result2} gives
\begin{align}
	&\delta_{M,2k}\left(\sqrt{\frac{2}{1-\delta_{M,1}}}+\frac{1}{\sqrt{1+\delta_{M,1}}}\right)\frac{1-\delta_{M,k}+\delta_{M,2k}}{1-\delta_{M,k}}\|\bc_{T^*-T^{l-1}}^*\|_2+2\sqrt{\frac{1+\delta_{M,k}}{1-\delta_{M,1}}}\|\be\|_2\geq  \frac{1-\delta_{M,k}}{\sqrt{1+\delta_{M,1}}}\|\bc^*_{T^*-\widetilde{T^l}}\|_2.
\end{align}
Taking values in \eqref{eq:value_ab} 
proves the claim~\eqref{eq::claim1}.
\end{proof}

\begin{lemma}
\label{lemma:2}
Under the conditions of Theorem~\ref{thm.inaccurate}, we have
\begin{align}
	\|\bc^*_{T^*-T^l}\|_2\leq c\|\bc^*_{T^*-\widetilde{T}^l}\|_2+d\|\be\|_2\label{eq::claim2}
\end{align}
where 
\begin{equation}
	\label{eq:value_cd}
	\begin{cases}c&=\left(1+\sqrt{\frac{1+\delta_{M,1}}{1-\delta_{M,1}}}\right)\frac{\sqrt{2}\delta_{M,2k}}{1-\delta_{M,2k}},\\
		d&=\frac{1}{\sqrt{1-\delta_{M,2k}}}\left(1+\sqrt{\frac{1+\delta_{M,1}}{1-\delta_{M,1}}}\right).
	\end{cases}
\end{equation}
\end{lemma}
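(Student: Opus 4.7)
The plan is to adapt the proof of Theorem~\ref{thm2} to the perturbed setting by isolating the single step where exactness of the observation was used. In that argument, $\bx_p^l = \bA^\dagger_{\wT^l}\by$ satisfied $\bx_p^l - \bc^*_{\wT^l} = \bA^\dagger_{\wT^l}\bA_{T^*-\wT^l}\bc^*_{T^*-\wT^l}$, directly yielding the clean bound~\eqref{eq.xplc*.upper}. With $\by = \bA\bc^* + \be$, the identity acquires a noise correction,
\begin{align*}
\bx_p^l - \bc^*_{\wT^l} \;=\; \bA^\dagger_{\wT^l}\bA_{T^*-\wT^l}\bc^*_{T^*-\wT^l} + \bA^\dagger_{\wT^l}\be,
\end{align*}
and the extra noise term must be controlled separately.

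I would handle the extra term exactly as in the proof of Theorem~\ref{thm.inaccurate}: writing $\bA^\dagger_{\wT^l}\be = (\bA^\top_{\wT^l}\bA_{\wT^l})^{-1}\bA^\top_{\wT^l}\be$ and using $|\wT^l|\leq 2k$ together with Lemma~\ref{prop4} yields $\|\bA^\dagger_{\wT^l}\be\|_2 \leq \|\be\|_2/\sqrt{1-\delta_{M,2k}}$. Combined with the estimate for the deterministic part already obtained in~\eqref{eq_temp6}--\eqref{eq.xplc*.upper}, this produces the two-term bound
\begin{align*}
\|\bx_p^l - \bc^*_{\wT^l}\|_2 \;\leq\; \frac{\sqrt{2}\,\delta_{M,2k}}{1-\delta_{M,2k}}\,\|\bc^*_{T^*-\wT^l}\|_2 + \frac{1}{\sqrt{1-\delta_{M,2k}}}\,\|\be\|_2.
\end{align*}

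The remaining chain of estimates from the proof of Theorem~\ref{thm2}, namely~\eqref{eq_temp4}--\eqref{eq_temp5}, delivers the inequality $\|\bc^*_{\wT^l - T^l}\|_2 \leq (1+\sqrt{(1+\delta_{M,1})/(1-\delta_{M,1})})\|\bx_p^l - \bc^*_{\wT^l}\|_2$. That chain depends only on the shrinking rule of GPSP and on BRIP applied to individual blocks $\bF_g$; it never touches $\by$ or $\be$ and therefore applies verbatim in the noisy setting. Substituting the two-term bound above into it and combining with the triangle-inequality decomposition $\|\bc^*_{T^*-T^l}\|_2 \leq \|\bc^*_{T^*\cap(\wT^l-T^l)}\|_2 + \|\bc^*_{T^*-\wT^l}\|_2$ yields~\eqref{eq::claim2} with the constants $c$ and $d$ stated in~\eqref{eq:value_cd}.

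The main obstacle is bookkeeping rather than conceptual: one must verify that the scalar factor $(1+\sqrt{(1+\delta_{M,1})/(1-\delta_{M,1})})$ distributes symmetrically across the two terms of the bound on $\|\bx_p^l - \bc^*_{\wT^l}\|_2$, so that the resulting coefficients line up precisely with the expressions for $c$ and $d$ in~\eqref{eq:value_cd}. Beyond this, no new analytic ideas beyond those already in the proofs of Theorem~\ref{thm2} and Theorem~\ref{thm.inaccurate} are required.
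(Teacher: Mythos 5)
Your proposal is correct and matches the paper's own proof essentially step for step: the same noise-corrected identity $\bx_p^l-\bc^*_{\wT^l}=\bA^\dagger_{\wT^l}\bA_{T^*-\wT^l}\bc^*_{T^*-\wT^l}+\bA^\dagger_{\wT^l}\be$, the same bound $\|\bA^\dagger_{\wT^l}\be\|_2\le \|\be\|_2/\sqrt{1-\delta_{M,2k}}$, the same reuse of~\eqref{eq_temp6} for the deterministic part and of~\eqref{eq_temp4}--\eqref{eq_temp5} for the shrinking-stage estimate, and the same final triangle-inequality combination. The only caveat---shared equally by the paper's final display---is that carrying out that combination literally leaves an extra additive $\|\bc^*_{T^*-\wT^l}\|_2$ term, so strictly the coefficient should read $c=1+\bigl(1+\sqrt{(1+\delta_{M,1})/(1-\delta_{M,1})}\bigr)\sqrt{2}\delta_{M,2k}/(1-\delta_{M,2k})$; this is a slip in the stated constant of the lemma itself, not a defect of your argument relative to the paper.
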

\begin{proof}[Proof of Lemma \ref{lemma:2}]
Note that we have the relation
$$\bx_p^l=\bA_{\widetilde{T}^l}^\dagger\by=\bA_{\wT^l}^{\dagger}(\bA_{T^*}\bc_{T^*}^*+\be)$$
and thus
\begin{align}
	\|\bx_p^l-\bc_{\wT^l}^*\|_2&\leq \|\bA_{\wT^l}^{\dagger}\bA_{T^*}\bc^*_{T^*}-\bc^*_{\wT^l}\|_2+\|\bA_{\wT^l}^\dagger\be\|_2 \leq \|\bA_{\wT^l}^\dagger\bA_{T^*-\wT^l}\bc^*_{T^*-\wT^l}\|_2+\frac{1}{\sqrt{1-\delta_{M,2k}}}\|\be\|_2 \nonumber\\
	&\leq \frac{\sqrt{2}\delta_{M,2k}}{1-\delta_{M,2k}}\|\bc_{T^*-\wT^l}^*\|_2+\frac{1}{\sqrt{1-\delta_{M,2k}}}\|\be\|_2, \label{eq.th7.xc}
\end{align}
where the last inequality is obtained similarly as in~\eqref{eq_temp6}. On the other hand, we still have  from the derivation for~\eqref{eq_temp5} that
\begin{align}
	\left(1+\sqrt{\frac{1+\delta_{M,1}}{1-\delta_{M,1}}}\right)\|\bx_p^l-\bc_{\wT^l}^*\|_2\geq  \|\bc^*_{T^*\cap(\wT^l-T^l)}\|_2.
	\label{eq.th7.xc2}
\end{align}
Combining (\ref{eq.th7.xc}) and (\ref{eq.th7.xc2}) gives
\begin{align*}
	\|\bc_{T^*-T^l}^*\|_2&\leq\|\bc^*_{T^*\cap(\wT^l-T^l)}\|_2+\|\bc^*_{T^*-\wT^l}\|_2\\
	&\leq\left(1+\sqrt{\frac{1+\delta_{M,1}}{1-\delta_{M,1}}}\right)\frac{\sqrt{2}\delta_{M,2k}}{1-\delta_{M,2k}}\|\bc_{T^*-\wT^l}^*\|_2+\frac{1}{\sqrt{1-\delta_{M,2k}}}\left(1+\sqrt{\frac{1+\delta_{M,1}}{1-\delta_{M,1}}}\right)\|\be\|_2,
\end{align*}
which proves our claim~\eqref{eq::claim2} by setting the values of $d$, $c$ as in \eqref{eq:value_cd}. 
\end{proof}

\begin{proof}[Proof of Theorem~\ref{thm.inaccurate2}]
Combining Lemma \ref{lemma:1} and Lemma \ref{lemma:2}, we have
\begin{align}
	\|\bc^*_{T^*-T^l}\|_2\leq ac\|\bc^*_{T^*-T^{l-1}}\|_2+(ad+b)\|\be\|_2.\label{eq_temp_final}
\end{align}
Note that we have the following upper bounds
\begin{align*}
	a &\leq \frac{6\delta_{M,2k}(1+\delta_{M,2k})}{(1-\delta_{M,2k})^3},\ 
	b\leq \frac{2(1+\delta_{M,2k})}{(1-\delta_{M,2k})^2}, \ 
	c\leq \frac{\sqrt{2}\delta_{M,2k}(2-\delta_{M,2k})}{(1-\delta_{M,2k})^2},\ 
	d\leq \frac{2-\delta_{M,2k}}{(1-\delta_{M,2k})^2}.
\end{align*}
Plugging these bounds into~\eqref{eq_temp_final} gives
\begin{align*}
	\|\bc^*_{T^*-T^l}\|_2
	\leq&\frac{6\sqrt{2}\delta^2_{M,2k}(1+\delta_{M,2k})(2-\delta_{M,2k})}{(1-\delta_{M,2k})^5}\|\bc^*_{T^*-T^{l-1}}\|_2\\
	&+\frac{12\delta_{M,2k}(1+\delta_{M,2k})^2+(2-\delta_{M,2k})(1-\delta_{M,2k})^3}{(1-\delta_{M,2k})^5}\|\be\|_2,
\end{align*}
which proves~\eqref{eq::thm_inacc2}.

Now we observe that
\begin{align*}
	\|\by_r^{l}\|_2&\leq\|\resid(\bA_{T^*-T^l}\bc^*_{T^*-T^l},\bA_{T^l})\|_2+\|\resid(\be,\bA_{T^l})\|_2\\
	&\leq\|\bA_{T^*-T^l}\bc^*_{T^*-T^l}\|_2+\|\be\|_2\\
	&\leq \sqrt{1+\delta_{M,k}}\|\bc^*_{T^*-T^l}\|_2+\|\be\|_2.
\end{align*}
Moreover, by Lemma~\ref{lemma_proj}, we have 
\begin{align*}
	\|\by_r^{l-1}\|_2&\geq \|\resid(\bA_{T^*-T^{l-1}}\bc^*_{T^*-T^{l-1}},\bA_{T^{l-1}})\|_2-\|\resid(\be,\bA_{T^{l-1}})\|_2\\
	&\geq \sqrt{1-\delta_{M,2k}}\left(1-\frac{\delta_{M,2k}}{1-\delta_{M,k}}\right)\|\bc^*_{T^*-T^{l-1}}\|_2-\|\be\|_2
\end{align*}
Hence, $\|\by_r^{l}\|_2<\|\by_r^{l-1}\|_2$ if
\begin{align*}
	\sqrt{1-\delta_{M,2k}}\left(1-\frac{\delta_{M,2k}}{1-\delta_{M,k}}\right)\|\bc^*_{T^*-T^{l-1}}\|_2>\sqrt{1+\delta_{M,k}}\|\bc^*_{T^*-T^l}\|_2+2\|\be\|_2.
\end{align*}
Applying~\eqref{eq::thm_inacc2}, we see that $\|\by_r^{l}\|_2<\|\by_r^{l-1}\|_2$ if
\begin{align*}
	\left(\sqrt{1-\delta_{M,2k}}\left(1-\frac{\delta_{M,2k}}{1-\delta_{M,k}}\right)-\sqrt{1+\delta_{M,k}}D_{M,k}\right)\|\bc^*_{T^*-T^{l-1}}\|_2>\left(\sqrt{1+\delta_{M,k}}E_{M,k}+2\right)\|\be\|_2.
\end{align*}
With the assumption  $\|\be\|_2\leq \delta_{M,2k}\|\bc^*_{T^*-T^{l-1}}\|_2$, we see that $\|\by_r^{l}\|_2<\|\by_r^{l-1}\|_2$ if
\begin{align*}
	1>\frac{\delta_{M,2k}}{1-\delta_{M,2k}}+\frac{\sqrt{1+\delta_{M,2k}}D_{M,k}+\delta_{M,2k}\left(\sqrt{1+\delta_{M,2k}}E_{M,k}+2\right)}{\sqrt{1-\delta_{M,2k}}},
\end{align*}
which completes the proof.
\end{proof}

\bibliographystyle{alpha}
\bibliography{sample}
\end{document}